\begin{document}

\title{\Large On Optimal Partitioning For Sparse Matrices In Variable Block Row Format\thanks{ %
This work was supported by a Department of Energy Computational Science %
Graduate Fellowship, DE-FG02-97ER25308. %
This work also funded by the Department of Energy's Exascale Computing %
Program (ECP). Sandia National Laboratories is a multimission laboratory %
managed and operated by National Technology and Engineering Solutions of %
Sandia, LLC., a wholly owned subsidiary of Honeywell International, Inc., for %
the U.S. Department of Energy's National Nuclear Security Administration %
under contract DE-NA-0003525. This research used resources of the National %
Energy Research Scientific Computing Center (NERSC), a U.S. Department of %
Energy Office of Science User Facility operated under Contract No. %
DE-AC02-05CH11231. %
}}

\author{Peter Ahrens\thanks{Massachusetts Institute of Technology and Sandia National Laboratories, pahrens@mit.edu.}%
\and Erik G. Boman\thanks{Sandia National Laboratories, egboman@sandia.gov.}}

\date{}

\maketitle







\begin{abstract} \small\baselineskip=9pt
  The Variable Block Row (VBR) format is an influential blocked sparse matrix
  format designed for matrices with shared sparsity structure between
  adjacent rows and columns. VBR groups adjacent rows and columns, storing
  the resulting blocks that contain nonzeros in a dense format. This reduces
  the memory footprint and enables optimizations such as register blocking
  and instruction-level parallelism. Existing approaches use heuristics to
  determine which rows and columns should be grouped together. We show that
  finding the optimal grouping of rows and columns for VBR is NP-hard under
  several reasonable cost models. In light of this finding, we propose a
  1-dimensional variant of VBR, called 1D-VBR, which achieves better
  performance than VBR by only grouping rows. We describe detailed cost
  models for runtime and memory consumption. Then, we describe a linear time
  dynamic programming solution for optimally grouping the rows for 1D-VBR
  format. We extend our algorithm to produce a heuristic VBR partitioner
  which alternates between optimally partitioning rows and columns, assuming
  the columns or rows to be fixed, respectively. Our alternating heuristic
  produces VBR matrices with the smallest memory footprint of any partitioner
  we tested.


\end{abstract}

\section{Introduction}

Matrices that occur in practice are often \textbf{sparse}, meaning that most of
their entries are zero, and it is faster to process only the nonzero
entries\cite{saad_iterative_2003,vuduc_oski:_2005}. Some applications produce
matrices where nonzeros occur close together. In these cases, we can reduce the
complexity and storage requirements of processing and locating individual
nonzeros by storing the nonzeros in dense blocks. We need only store the size
and location of the block, and can employ dense performance engineering
techniques like register blocking and instruction-level parallelism.

Blocked formats are most commonly used to accelerate multiplication between a
sparse matrix and a dense vector (\textbf{SpMV}). SpMV is often used as a
subroutine in iterative solvers; the same sparse matrix is multiplied
hundreds of times before a solution is found. A practical use case is to
block the matrix once, then offset the cost of finding the blocks and
converting the matrix format with the savings obtained after multiplying the
matrix many times in an iterative solver.

Dense blocks have also been used in supernodal sparse factorizations
\cite{davis_survey_2016, demmel_supernodal_1999}, incomplete factorizations
\cite{saad_iterative_2003,saad_finding_2003} and in sparse triangular solves
\cite{yamazaki_performance_2020}. Originally, only rows/columns with
identical sparsity patterns were merged, but the approach can be relaxed to
merge rows/columns with merely similar patterns \cite{ashcraft_influence_1989}. Note that for
preconditioning, block methods are mathematically different and may affect
the convergence rate, typically making methods more robust. Kim et. al.
\cite{kim_task_2016} extended the ``algorithms by block'' concept from dense
to sparse linear algebra and showed it is useful for task parallel systems on
modern architectures.

One of the first blocked formats to receive considerable study was the
Variable Block Row (\textbf{VBR}) format, where similar adjacent rows and
columns are grouped together \cite{vuduc_fast_2005,
karakasis_perfomance_2009, karakasis_comparative_2009}. VBR is described in
the SPARSKIT library \cite{saad_sparskit_1990,saad_sparskit_1994}, the
SparseBLAS specification \cite{remington_nist_1996}, and the OSKI Sparse
Kernel Interface \cite{vuduc_automatic_2004, vuduc_oski:_2005}, and is used
internally by the MKL Paradiso solver \cite{noauthor_developer_2020}. Unlike
many formats which use fixed-size blocks, the number of rows or columns that
may be grouped together is allowed to vary along each dimension, producing
variably sized blocks. Since blocks are produced by merging entire rows or
columns, the blocks are aligned, allowing implementations to reuse elements
of other arguments along the direction of alignment. In general, producing
bigger blocks means that less location information is needed, but as blocks
get bigger, they may cover and store more zeros explicitly in
dense storage.

While the VBR format was motivated by scientific applications that produce
matrices with perfect block structure (where nearby rows with identical
patterns might represent different partial derivatives of the same variable
or different variables of the same multiphysics mesh point), we investigate
the application of these techniques to sparse matrices with imperfect block
structure (where nearby rows with correlated patterns might represent
neighboring mesh points or similar mathematical programming constraints).

Block partitioning algorithms sometimes reorder rows to group similar
rows together \cite{shantharam_exploiting_2011, saad_finding_2003,
pinar_improving_1999}. In this work, as in the definition of VBR, we consider
only contiguous partitions (splitting without reordering). While
noncontiguous partitions allow for more expressive blocks, permuting a matrix
or vector may be an expensive memory-intensive procedure. In some situations,
the matrix may have already been reordered for numerical reasons, and the
user might need to operate on the matrix without changing the row ordering.
Furthermore, there are several matrices which do not need reordering to
utilize similarities among adjacent rows. Our cost models apply to
contiguous or noncontiguous partitions alike, and research into contiguous
partitioning may inform general partitioning approaches.

Although the contiguous case may appear simpler, in
Appendix~\ref{app:vbrblockingnphard}, we prove that determining optimal
groupings of adjacent rows and columns for VBR format is NP-Hard under two
simple cost models by reduction from the Maximum Cut problem
\cite{karp_reducibility_1972,papadimitriou_optimization_1991}. The problem is
still NP-Hard even when the row and column partitions are constrained to be
the same, a symmetric constraint required by some block factorizations. In
light of this observation, we invent a specialization of the VBR sparse
matrix format for the case where the columns are simply ungrouped, making
optimal partitioning tractable. We refer to this new, simpler, format as
1D-VBR. 1D-VBR enjoys many of the same benefits as VBR, obtaining better
performance at the cost of slightly more memory usage.

At the time of writing, only heuristic algorithms have been given to
determine which rows or columns should be grouped together. We first propose
detailed cost models which describe the number of blocks, the memory
footprint, or the expected SpMV runtime of the resulting VBR or 1D-VBR
format, inspired by \cite{buttari_performance_2007,
karakasis_perfomance_2009}. We then describe a linear time, single pass,
algorithm to determine the optimal contiguous row groups under a fixed column
grouping and general cost model, inspired by
\cite{grandjean_optimal_2012, jackson_algorithm_2005, alpert_multiway_1995,
ziantz_run-time_1994, kernighan_optimal_1971}. Our algorithm is optimal for
1D-VBR since ungrouped columns are a fixed column grouping. We can also build
a VBR heuristic by alternately partitioning just the rows, then columns, then
rows again, similar to
\cite{kolda_partitioning_1998,hendrickson_graph_2000,yasar_heuristics_2019}.
Our algorithm runs in time $O(R \cdot (u_{\max} \cdot m + N) + n)$, where $R$ is the
rank of the cost function (a small constant), $u_{\max}$ is the maximum block
height (a small constant \cite{vuduc_performance_2002}), and $m$, $n$, and $N$ are the number of rows,
columns, and nonzeros, respectively. Our algorithm requires only one pass directly on the
sparse matrix in CSR format.

%

We test our optimal algorithm against existing heuristics on a test set of
20\! real-valued sparse matrices with
interesting, imperfect, block structure (10 of which are the test set of
\cite{vuduc_fast_2005}), in both 64 and 32 bit precision. Using our heuristic
to reduce the VBR memory footprint resulted in the the best compression.
Using our algorithm to minimize our empirical cost model of 1D-VBR SpMV
runtime resulted in the best performance, achieving a median speedup of
$2.22$$\times$
\! over the reference CSR
implementation. On at least half of the matrices, the overhead of
partitioning and conversion to 1D-VBR format was justified within
$16.9$
\! multiplications, demonstrating
the practicality of our techniques.

\section{Partitioning}

Let $A$ be an $m \times n$ \textbf{matrix} with $N$ nonzeros, where $A_{i,j}$
corresponds to the value in the $i^{\text{th}}$ row and $j^{\text{th}}$ column.
We use $i{:}i'$ to represent the integer sequence $i, i + 1, ..., i'$, and
describe submatrices of $A$ as $A_{i : i',j : j'}$.  When the argument to a
function is clear from context, we will omit the argument for brevity. 

In practice, adjacent rows (and columns) often contain similar patterns of
nonzero locations. To capitalize on this observation, we will group similar
rows together, forming a \textbf{row part}. A \textbf{$K$-partition} $\Pi$ of
rows of $A$ assigns each row $i$ to one of $K$ \textbf{parts} $\Pi_k$. In
this work, we insist that our partitions stay \textbf{contiguous}, meaning
that for $k < k'$, $i \in \Pi_k$, $i' \in \Pi_{k'}$, we have $i < i'$. We
use $\Pi^{-1}$ to refer to the length $n$ vector of part assignments, so that
when $i \in \Pi_k$, $\Pi^{-1}_i = k$. When $\Pi$ is contiguous, we can
represent it with a vector $spl_\Pi$ of $K + 1$ split points, so that $\Pi_k
= spl_{\Pi, k} : spl_{\Pi, k + 1} - 1$. A partition is \textbf{trivial} if it
assigns each row to a distinct part.


We may also impose an $L$-partition $\Phi$ of columns. The partitions $\Pi$
and $\Phi$ tile our matrix with $K \times L$ contiguous, non-overlapping,
rectangular \textbf{blocks}. The block $(k, l)$ is of size $u_k \times w_l =
|\Pi_{k}| \times |\Phi_{l}|$. Blocked formats store only \textbf{nonzero
blocks}, or blocks that contain at least one nonzero of $A$. If we partition
wisely, many blocks will be zero and not need to be stored.

We use $v_i(A, \Phi) = \{l | A[i, \Phi_l] \neq 0\}$ to refer to the set of column parts containing nonzeros in the
$i^{th}$ row of $A$, and $\gamma_k(A, \Pi, \Phi) = \bigcup_{i \in \Pi_k} v_i$ to refer to the set of column parts
containing nonzeros in the $k^{th}$ row part of $A$.

\section{Sparse Formats}

Sparse matrices are commonly stored in Compressed Sparse Row (\textbf{CSR}) format
\cite{saad_iterative_2003}, which consists of three vectors $pos$, $idx$, and
$val$. The length $m + 1$ vector $pos$ stores the regions of $idx$ and $val$
corresponding to each row. The vectors $idx$ and $val$ (each of length $N$)
store the sorted nonzero column locations and corresponding values,
respectively. Storing $A$ in CSR format uses
\begin{equation}\label{eq:csrmemory}
    s_{\text{CSR}}(A) = (m + 1)s_{\text{index}} + N \cdot s_{\text{index}} + N \cdot s_{\text{value}}
\end{equation}
bits, where $s_{\text{index}}$ and $s_{\text{value}}$ are the sizes of the index
and value types, in bits.


The Variable Block Row (\textbf{VBR}) format imposes a contiguous
$K$-partition $\Pi$ of rows and a contiguous $L$-partition $\Phi$ of columns
\cite{saad_sparskit_1990, saad_sparskit_1994, remington_nist_1996,
vuduc_automatic_2004, vuduc_oski:_2005}. It is illustrated in Figure \ref{fig:vbr-example}. It is convenient to store the
length $K + 1$ and $L + 1$ split vectors $spl_{\Pi}$ and $spl_{\Phi}$,
respectively. Instead of storing individual nonzero locations, the VBR format
saves memory using the $idx$ vector to store block indices (the indices
record the parts corresponding to each block). The positions of the
variably-sized blocks in $val$ are not aligned with the positions of the
block indices in the $idx$ array. Therefore, we use a vector $ofs$ of block
locations to encode the starting index of each block row in $val$. Assume we
were to store $A$ in VBR format and wanted to determine the value of the
entry $A_{i, j}$. Let $k = \Pi^{-1}_i$ and $l = \Phi^{-1}_j$ (when
partitions are contiguous, we can compute this with binary search on the
split vectors). If we cannot find $q$ such that $A.pos[k] \leq q < A.pos[k +
1]$ and $A.idx[q] = l$, then $A_{i, j} = 0$ because the block $(k, l)$ is
entirely zero and is not stored explicitly in VBR format. Otherwise, our
block contains at least one nonzero and starts at position $p = A.ofs[k] +
\sum_{l' \in \gamma_k | l' < l} u_k * w_{l'}$ in the $val$ array.
Because VBR stores nonzero blocks in a dense, column-major format, $A_{i, j}
= A.val[p + (j - A.spl_\Phi[l]) \cdot u_k + (i - A.spl_\Pi[k])]$.

Let $N_{\text{index}}(A, \Pi, \Phi)$ be the number of nonzero blocks induced by $\Pi$ and $\Phi$, so that
\begin{equation}\label{eq:numberofblocks}
    N_{\text{index}}(A, \Pi, \Phi) = \sum\limits_{k = 1}^K |\gamma_k|.
\end{equation}
Let $N_{\text{value}}(A, \Pi, \Phi)$ be the number of entries contained in all nonzero
blocks induced by $\Pi$ and $\Phi$, such that
\begin{equation}\label{eq:numberofvalues}
    N_{\text{value}}(A, \Pi, \Phi) = \sum\limits_{k = 1}^K\sum\limits_{l \in \gamma_k} u_k \cdot w_l.
\end{equation}
The VBR format uses six arrays, $spl_\Pi$, $spl_\Phi$, $pos$, $ofs$, $idx$,
and $val$. Storing $A$ in VBR format uses $s_{\text{VBR}}$ bits,
\begin{multline}\label{eq:vbrmemory}
    s_{\text{VBR}}(A, \Pi, \Phi) = \\ (3(K + 1) + (L + 1) + N_{\text{index}})s_{\text{index}} + N_{\text{value}}s_{\text{value}}.
\end{multline}

In this work, we introduce a novel specialization of VBR format where the column
partition is trivial, meaning that the columns are not grouped together and
we concern ourselves only with row partitioning. We call this special format
\textbf{1D-VBR}. An example is illustrated in Figure \ref{fig:1dvbr-example}.
Because $\Phi$ is trivial, we do not need to store it.
Additionally, because blocks have only one column, block sizes are constant
within each row part and the stride between blocks is constant within each
row part, simplifying the implementation of conversion and multiplication
routines. Assume we were to store $A$
in 1D-VBR format and wanted to determine the value of the entry $A_{i, j}$.
Let $k = \Pi^{-1}[i]$. Notice that $\Phi^{-1}[j] = j$, since $\Phi$ is
trivial. If we cannot find $q$ such that $A.pos[k] \leq q < A.pos[k + 1]$ and
$A.idx[q] = j$, then $A[i, j] = 0$ because the block $(k, j)$ is entirely
zero and is not stored explicitly in 1D-VBR format. Otherwise, our block
contains at least one nonzero and starts at position $p = A.ofs[k] +
(q - A.pos[k]) \cdot u_k$ in the $val$ array. Thus, $A_{i, j} = A.val[p + (i
- A.spl_\Pi[k])]$. The 1D-VBR format uses five arrays, $spl_\Pi$, $pos$, $idx$,
$ofs$, and $val$. Storing $A$ in 1D-VBR format uses $s_{\text{1D-VBR}}$ bits,
\begin{multline}\label{eq:1dvbrmemory}
    s_{\text{1D-VBR}}(A, \Pi) = \\(3(K + 1) + N_{\text{index}})s_{\text{index}} +  N_{\text{value}}s_{\text{value}}.
\end{multline}

\subsection{Related Sparse Formats}

Blocked sparse formats have enjoyed a long history of study. In lieu of
providing an exhaustive overview of existing formats, we refer the reader to
works such as \cite{razzaq_dynb_2017, karakasis_comparative_2009,
vuduc_automatic_2004} which provide summaries of several sparse blocking
techniques. We focus only on the most relevant formats here. Figure
\ref{fig:examples} illustrates some examples of relevant formats.

The BCSR format tiles the matrix with fixed-size dense format blocks, storing
nonzero block locations in CSR format \cite{im_optimizing_2000,
im_optimizing_2001, im_sparsity:_2004, eberhardt_optimization_2016,
choi_model-driven_2010}. BCSR is referred to as BSR in the
Intel\textsuperscript{\textregistered} Math Kernel Library \cite{noauthor_developer_2020}. Cost
models developed for BCSR depend on the number of nonzero blocks, leading to
the development of row-wise sampling algorithms to estimate the number of
nonzero blocks \cite{im_optimizing_2000,
im_sparsity:_2004, vuduc_automatic_2004, lee_performance_2004, vuduc_oski:_2005,
buttari_performance_2007, karakasis_perfomance_2009}. These row-wise sampling
algorithms were improved on by a constant time nonzero-wise sampling
algorithm \cite{ahrens_fill_2018,xu_fill_2018,ahrens_parallel_2019}.

Generalizing to less constrained block decompositions, unaligned block
formats continue to use fixed-size blocks, but relax alignment requirements.
The SPARSKIT implementation of BCSR relaxes the column alignment of blocks,
allowing blocks to shift along the block rows \cite{saad_sparskit_1994}. One
could imagine a format which groups adjacent blocks in 1D-VBR block rows to
achieve a similar format. The UBCSR format uses a number of fixed block sizes
that can start at any entry in the matrix \cite{vuduc_fast_2005}. An
intriguing approximation algorithm has been described for the related NP-hard
problem of finding good fixed-sized, unaligned, nonoverlapping sparse matrix
block decompositions (the UBCSR format) \cite{vassilevska_finding_2004}. The
CSR-SIMD format produces dense blocks inside the rows, putting successive
groups of nonzeros into SIMD-register sized blocks for instruction level
parallelism \cite{chen_efficient_2018}. Note that SpMVs on CSR-SIMD formatted
matrices cannot reuse loads from the input vector, whereas 1D-VBR uses only
one load from the input for each block, no matter how large the block is. The
1D-Variable Block Length (1D-VBL) format, originally proposed in
\cite{pinar_improving_1999} and referred to as 1D-VBL in
\cite{karakasis_comparative_2009}, relaxes the constraint that the blocks
inside rows must be of fixed length. Both 1D-VBL and CSR-SIMD can reduce the
size of the matrix when nonzeros occur next to each other in the same row.
The Variable Blocked-$\sigma$-SIMD Format (VBSF) is similar to CSR-SIMD, but
allows the blocks to be merged across multiple rows, so the blocks have a
fixed width but variable height. The DynB format relaxes all alignment and
size constraints, allowing variably sized blocks to start at any entry of the
matrix \cite{razzaq_dynb_2017}. Algorithms for producing CSR-SIMD, VBSR, and
DynB formats create their blocks with greedy algorithms that add adjacent
elements into the block up to a density-related threshold. Because these
formats make decisions on a block-by-block basis, it makes sense to convert
the matrix to blocked format at the same time as the block decomposition is
determined \cite{chen_efficient_2018,razzaq_dynb_2017}. 

\begin{figure}
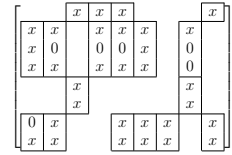
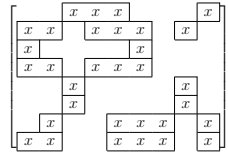
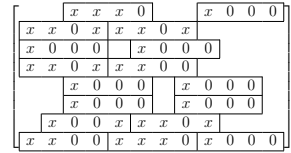
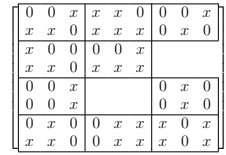
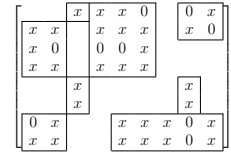

\centering

\begin{minipage}[b]{.46\linewidth}\centering\large
\input{vbr_example}
\end{minipage}\hspace{0.06\linewidth} %
\begin{minipage}[b]{.46\linewidth}\centering\large
\input{1dvbr_example}
\end{minipage}

\leavevmode\newline

\begin{minipage}[b]{.46\linewidth}\centering\large
\input{vbl_example}
\end{minipage}\hspace{0.06\linewidth} %
\begin{minipage}[b]{.46\linewidth}\centering\large
\input{csrsimd_example}
\end{minipage}

\leavevmode\newline

\begin{minipage}[b]{.46\linewidth}\centering\large
\input{bcsr_example}
\end{minipage}\hspace{0.06\linewidth} %
\begin{minipage}[b]{.46\linewidth}\centering\large
\input{ubcsr_example}
\end{minipage}

\caption{Various blocked sparse representations of a sample
matrix $A$. Here, $x$ represents a nonzero, $0$ represents an explicitly
stored zero, and each box represents a distinct stored block. Implicit zeros
are left blank. Most formats store nonzero blocks in row-major order
analogous to how CSR stores nonzero entries.}\label{fig:examples}
\end{figure}

\section{Blocked SpMV}

Algorithm \ref{alg:spmv_vbr} shows an example SpMV kernel for a matrix stored
in VBR format. Processing each stored element of $A$ requires a load from
$A.val$, but we only need to load from $A.idx$ and $x$ once for each block
and column in the block row, respectively. This data reuse is a benefit of
producing aligned blocks, and a key property enjoyed by VBR and 1D-VBR but
not by CSR-SIMD. Of course, computations and sequential loads are now
processed with vector instructions. If our vector size does not divide our
block size, we simply pad our vectors as they are loaded from memory, without
needing to pad the stored blocks. For example, if our blocks are of size 3,
we can process them using vectors of size 4, letting the fourth entry of our $y$
vector register be undefined. While this does not affect the number of blocks
or the memory usage, it does have an effect on the empirical runtime.

\begin{algorithm}\label{alg:spmv_vbr}
    Given $m \times n$ matrix $A$ in VBR format and a length $n$ vector
    $x$, add $A \cdot x$ to the length $m$ vector $y$, in-place. 
    \begin{algorithmic}[1]
        \small
        \Function{SpMV-VBR}{$y$, $A$, $x$}
            \State{$p \gets 1$}
            \For{$k \gets 1 \textbf{ to } K$}
                \State{$yy \gets y[A.spl_\Pi[k]:A.spl_\Pi[k + 1] - 1]$}
                \State{$u \gets A.spl_\Pi[k + 1] - A.spl_\Pi[k]$}
                \For{$q \gets A.pos[k] \textbf{ to } A.pos[k + 1] - 1$}\label{alg:spmv_vbr:1D-VBR}
                    \State{$l \gets A.idx[q]$}
                    \For{$j \gets A.spl_\Phi[l] \textbf{ to } A.spl_\Phi[l + 1] - 1$}\label{alg:spmv_vbr:inner}
                        \State{$yy \gets yy + A.val[p:p + u - 1] \cdot x[j]$}
                        \State{$p \gets p + u$}
                    \EndFor
                \EndFor
                \State{$y[A.spl_\Pi[k]:A.spl_\Pi[k + 1] - 1] \gets yy$}
            \EndFor
        \EndFunction
    \end{algorithmic}
\end{algorithm}
To modify \textproc{SpMV-VBR} for 1D-VBR, we need only replace the loop on line 
\ref{alg:spmv_vbr:1D-VBR} with the simpler inner loop:
\begin{algorithmic}[1]
    \small
    \For{$q \gets A.pos[k] \textbf{ to } A.pos[k + 1] - 1$}
        \State{$j \gets A.idx[q]$}
        \State{$yy \gets yy + A.val[p:p + u - 1] \cdot x[j]$}
        \State{$p \gets p + u$}
    \EndFor
\end{algorithmic}

We have designed our implementation of Algorithm \ref{alg:spmv_vbr} so that
we can optimize the code to use a computed jump instruction to select
between dedicated unrolled loop bodies for each block size $u$ and $w$. This
allows us to pad the vertical (SIMD) dimension to the nearest vector width
and unroll the horizontal dimension with minimal overhead. Note that in the
1D-VBR algorithm, this jump occurs once per row block, since all blocks in
the row block have the same height.

\section{Partitioning Problem Statement}\label{sec:problem}

Because the blocks in a VBR format are stored in a dense format, we must
trade off between a partition that uses larger blocks (and stores more
explicit zeros) and a partition that uses smaller blocks (and stores more
block locations). Practitioners often use cost models to measure the effect
of performance parameters like block sizes. Several diverse cost models have
been proposed for blocked sparse matrix formats
\cite{im_sparsity:_2004,vuduc_automatic_2004,nishtala_when_2007}. While
many of these models apply to VBR SpMV \cite{karakasis_perfomance_2009},
we are not aware of any work which takes the next step to use the cost model
to optimize a VBR partition. 

To simplify the presentation of our algorithms, we keep our three cost models
simple. Our first model is simply the number of blocks,
\eqref{eq:numberofblocks}. This model should perform well on matrices which
fit in fast memory, when the cost of computing a block is only weakly
dependent on its size. The second model assumes that runtime will be directly
proportional to the memory footprint \eqref{eq:vbrmemory} or
\eqref{eq:1dvbrmemory}. Because SpMV is a memory-bound kernel
\cite{yzelman_generalised_2015} and many sparse matrices do not fit in fast
memory, we expect this model to work well for large matrices.

Our third, more general, cost model is inspired by
\cite[(2)]{karakasis_perfomance_2009} and
\cite[(3)]{buttari_performance_2007}, which both model the time taken to
compute a row part $k$ of height $u_k$ as an affine function in the number of
elements in the part. Cost models with similar forms have been proposed for
similar blocked formats \cite{im_optimizing_2000, im_optimizing_2001,
im_sparsity:_2004, vuduc_automatic_2004}.
\pagebreak[3]
\begin{multline}\label{eq:vbrcompute}
    t_{VBR}(A, \Pi, \Phi) = \\ \sum\limits_{k = 1}^K \alpha_{\text{row}, u_k} + \sum\limits_{l = 1}^L \alpha_{\text{col}, w_l} + \sum\limits_{k = 1}^K \sum\limits_{l \in \gamma_k} \beta_{u_k, w_l}.
\end{multline}

The vectors $\alpha$ represent the costs associated with row or column parts,
such as loading elements from $x$, $y$, or $A.\Pi$, etc. The coefficient matrix
$\beta$ represents the cost of each block. The runtime for each block size is
represented individually because the relationship between block size and
performance is architecturally dependent and not easily characterized,
especially since we pad the block to the next available vector size. In
practice, however, $\beta$ is well approximated by a low rank matrix.  Equation
\eqref{eq:vbrcompute} applies to 1D-VBR, but is much simplified since $\Phi$ is
trivial, so $w_l = 1$ and $\beta$ becomes a vector.

We parameterize \eqref{eq:vbrcompute} with empirical measurements. For each
block size, we measure the time to multiply the smallest square matrix with an
average of $8$ blocks per block row such that the problem exceeds the L2 cache
size. We then benchmark the same problem with twice the blocks, or twice the
rows, or twice the columns. Finally, we use a least-squares fit (normalized to
minimize relative error of each datapoint), and then approximate $\beta$ with a
singular value decomposition of rank $3$, which we found kept the relative error
within $2\%$. Measurement noise sometimes led the cost function to incentivize
larger blocks; we encourage monotonicity by using the prefix maximum of the
measured cost. Taking these empirical measurements takes a few hours, but only
needs to be performed once per architecture. Because our empirical cost model
uses real measurements, it can account for factors like memory bandwidth or
padding to fit in SIMD registers, or potentially unanticipated decisions that
other implementers may make.

Our main problem can be stated as follows:

\begin{Definition}[Block Partitioning]\label{prob:vbrblocking}
Given an $m \times n$ matrix $A$ and block size limit $u_{\max} \times
w_{\max}$, find the contiguous $K$-partition $\Pi$ and $L$-partition $\Phi$
minimizing a cost function of the form
\[
    \sum\limits_{k = 1}^K \alpha_{\text{row}, u_k} + \sum\limits_{w = 1}^L \alpha_{\text{col}, w_l} + \sum\limits_{k = 1}^K \sum\limits_{l \in \gamma_k} \sum\limits_{r = 1}^R \beta_{\text{row}, u_k, r} * \beta_{\text{col}, w_l, r},
\]
where $R$ is the rank of the block cost matrix.
\end{Definition}

All previously described cost functions \eqref{eq:numberofblocks},
\eqref{eq:1dvbrmemory}, \eqref{eq:vbrmemory}, and \eqref{eq:vbrcompute} are
expressible in the above form. The block count is rank 1, and the memory
usage is rank 2.

We show in Appendix \ref{app:vbrblockingnphard} that Problem
\ref{prob:vbrblocking} is NP-Hard for both the very simple cost model
\begin{equation}\label{eq:nphardcost}
    f(A, \Pi, \Phi) = s\cdot N_{\text{index}} + N_{\text{value}},
\end{equation}
where $s \geq 1$ is small constant and $u_{\max} \geq 2$, and for
\begin{equation}\label{eq:nphardcost2}
    f(A, \Pi, \Phi) = N_{\text{index}}.
\end{equation}
These cost models approximately minimize the memory usage
\eqref{eq:1dvbrmemory} or \eqref{eq:vbrmemory}, or simply the number of
blocks \eqref{eq:numberofblocks}, and are special cases of the fully generic
form of Definition \ref{prob:vbrblocking}. A corollary will show the problem
is still NP-hard even when the row and column partitions are constrained to be
the same, the symmetric case. Our proof reduces from the Maximum Cut problem
\cite{karp_reducibility_1972, papadimitriou_optimization_1991}.  We represent
vertices of the graph with block rows in a matrix, and edges as block columns,
inserting gadgets at the endpoints of each edge. Then we show that we can
construct a maximum cut from the optimal VBR blocking for the gadgets.

\subsection{1D Partitioning and Alternation}
The remainder of the paper will be focused on situations where $\Phi$ is
considered fixed. We will propose an optimal, linear-time algorithm for this restricted
problem. Because we can convert any row partitioning algorithm to a column
partitioning algorithm by simply transposing our matrix first
\cite{gustavson_two_1978}, without loss of generality we consider only the
row partitioning case.

In the case of 1D-VBR, $\Phi$ is fixed to be the trivial partition,
and our restricted solution can optimize our cost models exactly.

The restricted solution also gives rise to an alternating heuristic for the
original VBR problem where we iteratively partition the rows first, then
partition the columns under the new row partition, and so on. In each
iteration, the previously fixed partition provides an upper bound on the
optimal value, so the objective always decreases and the process eventually
converges. In the symmetric case, when the column and row partitions must be
the same, we could set the column partition equal to the row partition after
each iteration, but could no longer make similar guarantees on convergence.
Alternating heuristics have been applied to problems like graph partitioning
and load balancing
\cite{kolda_partitioning_1998,hendrickson_graph_2000,yasar_heuristics_2019}.
Existing VBR heuristics partition rows and columns separately from each
other, without incorporating information from one when partitioning the
other.

\section{Heuristics}\label{sec:heuristics}

Existing VBR implementations use heuristics instead of directly optimizing
partitions to minimize cost models. The heuristics used for rows are ignorant
of column partitions and vice versa.

The VBR implementation in SPARSKIT uses a heuristic we refer to as the
\textproc{StrictPartitioner}, that groups identical rows and columns
\cite{saad_sparskit_1990, saad_sparskit_1994}. One can easily determine that
two consecutive rows have identical sparsity patterns by coiterating over
their patterns. This straightforward heuristic can be quite effective when
the block structure is obvious, which is sometimes the case for FEM matrices
or supernodal structures in direct factorizations. The
\textproc{StrictPartitioner} reads each nonzero at most twice, and the reads
are sequential. Producing the VBR or 1D-VBR matrix from the computed partition
is also easier with the added assumption that sparsity patterns are identical
within blocks, since we can easily compute the block sparsity pattern from
the nonzero sparsity pattern.

The VBR implementations of OSKI and MKL PARADISO relax the
\textproc{StrictPartitioner} in favor of the \textproc{OverlapPartitioner}, a
heuristic which groups rows that satisfy some similarity requirement
\cite{vuduc_oski:_2005, noauthor_developer_2020}. The rows are initially
ungrouped, and each row $i'$ is processed in turn from top to bottom. Let $i$
be the first row in the group immediately preceding $i'$. The overlap
heuristic adds $i'$ to $i$'s group if the height would not exceed $u_{\max}$
and
\[
    \frac{|v_i \cap v_{i'}|}{\min(|v_i|, |v_i'|)} \geq \rho,
\]
otherwise, we start a new group with row $i'$. This process is repeated for
the columns, producing $\Pi$ and $\Phi$. The above similarity metric is known
as the \textbf{overlap similarity}, although the cosine similarity is also
sometimes used for greedy noncontiguous partitioning
\cite{saad_finding_2003}.

The OSKI code base uses a binary vector $h$ of length $n$ as a perfect hash
table to calculate the size of the intersection, first setting $h_j$ to true
for each $j \in v_i$, then iterating over elements of $v_{i'}$, checking to
see if corresponding locations in $h$ have been set to true. When we start a
new group, we must iterate through $v_i$ again to reinitialize $h$ to false.
Because $h$ will be used at most $m$ times, if we instead change $h$ to be a
integer vector and store the value $i$ at each location of $h$ when
calculating intersections with $v_i$, we need only iterate over $v_i$ once.
Since we will build on this concept when introducing our optimal algorithm,
we relegate the pseudocode for our improved implementation of the overlap
heuristic to Appendix \ref{app:overlappartitioner}.

\section{Optimal Algorithm}\label{sec:optimal}

Recall that our restricted Problem \ref{prob:vbrblocking} asks us to compute
an optimal row partition under some fixed column partition $\Phi$. In
situations where the runtime of the partitioner is justified with respect to
the runtime savings of the target kernel, efficient algorithms that operate
directly on the input matrix are desirable. Thus, we seek a linear time
algorithm that can partition the matrix with only one pass over the nonzeros.
Our problem has optimal substructure, and we use dynamic programming on the rows
of the matrix $A$. Given that the optimal partition of rows $i'$ through $m$ has
cost $c_{i'}$, then the cost of the optimal partition of rows $i$ through $m$
can be computed as
\[
    \max\limits_{1 \leq u \leq u_{max}} \alpha_{\text{row}, u} + \sum\limits_{r = 1}^R \sum\limits_{l \in \gamma'} \beta_{\text{row}, u, r} * \beta_{\text{col}, w_l, r} + c_{i'},
\]
where $i' = i + u$ and $\gamma' = v_{i} \cup ... \cup v_{i' - 1}$.  We record
each of the $R$ partial sums $\sum_{l \in \gamma'} \beta_{\text{col},w_{l},r}$
in the variables $d_r$. We use a vector $h$ to remember the most recent row
in which we saw each nonzero column part, which allows us to efficiently update
the vectors $\Delta_r$, the changes in $d_r$ as we increment $i'$. We can later
multiply each $d_r$ by $\beta_{\text{row}, u,r}$ and sum to produce the total
cost of each candidate row part, in turn. Each best block size is recorded in a
vector $spl_{\Pi}$, and once the vector is full, we follow these pointers to
construct a partition in-place, completing our algorithm. Since $\Phi$ is fixed,
we can safely ignore $\alpha_{\text{col}}$.

Our approach is shown in Algorithm \ref{alg:optimalpartitioner}. Recall that
CSR format provides convenient iteration over $v_i$ in sorted order.

\begin{algorithm}\label{alg:optimalpartitioner}
    Given an $m \times n$ sparse matrix $A$, a column partition $\Phi$, a
    maximum block height $u_{\max}$, compute a row partition $\Pi$ minimizing the
    cost function
    \[
        \sum\limits_{k = 1}^K \alpha_{\text{row}, u_k} + \sum\limits_{l = 1}^L \alpha_{\text{col}, w_l} + \sum\limits_{k = 1}^K \sum\limits_{r = 1}^R \sum\limits_{l \in \gamma_k} \beta_{\text{row}, u_k, r} * \beta_{\text{col}, w_l, r},
    \]
    
    \begin{algorithmic}[1]
        \small
        \Function{OptimalPartitioner}{$A$, $\Phi$, $\alpha$, $\beta$}
            \State{\textrm{Allocate length-$(m + 1)$ vectors $c$, $spl_\Pi$, $\Delta_1, ..., \Delta_R$}}
            \State{$c[m + 1], \Delta_1, ..., \Delta_R \gets 0$}
            \State{\textrm{Allocate length-$L$ vector $h$ initialized to $m + 1$}}
            \State{\textrm{Compute $\Phi^{-1}$}}

            \For{$i \gets m \textup{ to } 1$} \label{alg:optimalpartitioner:main} \Comment{Iterating Backwards!}
                \For{$l \in v_i$} \label{alg:optimalpartitioner:state}
                    \State{$w \gets |\Phi[l]|$}
                    \For{$r \gets 1 \textup{ to } R$}
                        \State{$\Delta_{r}[i] \gets \Delta_{r}[i] + \beta_{\text{col}, w, r}$}
                        \State{$\Delta_{r}[h[l]] \gets \Delta_{r}[h[l]] - \beta_{\text{col}, w, r}$}
                    \EndFor
                    \State{$h[l] \gets i$}
                \EndFor
                \State $(d_1, ..., d_R) \gets 0$
                \State{$c[i] \gets \infty$}
                \For{$i' \gets i + 1 \textup{ to } \min(i + u_{max}, m + 1)$}\label{alg:optimalpartitioner:optimize}
                    \State{$u \gets i' - i$}
                    \State{$c' \gets \alpha_{\text{row}, u}$}
                    \For{$r \gets 1 \textup{ to } R$}
                        \State{$d_{r} \gets d_{r} + \Delta_{r}[i' - 1]$}
                        \State{$c' \gets c' + d_{r} * \beta_{\text{row}, u, r}$}
                    \EndFor
                    \If{$c' + c[i'] < c[i]$}
                        \State{$c[i] \gets c' + c[i']$}
                        \State{$spl_\Pi[i] \gets i'$}
                    \EndIf
                \EndFor
            \EndFor
            \State $K \gets 0$
            \State $i \gets 1$
            \While{$i \neq m + 1$}\label{alg:optimalpartitioner:pointers}
                \State{$i' \gets spl_\Pi[i]$}
                \State{$K \gets K + 1$}
                \State{$spl_\Pi[K] \gets i$}
                \State{$i \gets i'$}
            \EndWhile
            \State{$spl_\Pi[K + 1] \gets m + 1$}
            \State{\Return $spl_\Pi[1:K + 1]$}
        \EndFunction
    \end{algorithmic}
\end{algorithm}

Our algorithm owes much of its structure to related algorithms
\cite{aydin_distributed_2019, grandjean_optimal_2012, jackson_algorithm_2005,
ziantz_run-time_1994, alpert_multiway_1995, ziantz_run-time_1994,
kernighan_optimal_1971}. For example, an optimal algorithm for the related
problem of ``restricted hypergraph partitioning'' (producing contiguous
partitions that reduce communication in parallel SpMV) is described by
Grandjean et. al. \cite{grandjean_optimal_2012}. However, this algorithm is
described for simpler cost functions which do not apply directly to Problem
\ref{prob:vbrblocking}. Furthermore, this algorithm does not consider a
column partition. Since the algorithm is given as a reduction from the
hypergraph problem to a graph problem, it requires multiple passes over the
input. While all of the cited algorithms are similar, none of them apply
directly.

\subsection{Runtime, Optimizations, and Extensions}
The body of the loop at line \ref{alg:optimalpartitioner:state} can be
executed in $O(R)$ time and will be repeated at most once for each nonzero in
$A$. The body of the loop at line \ref{alg:optimalpartitioner:optimize} can
be executed in $O(R)$ time and will be repeated at most $u_{\max}$ times for
each row in $A$. Initialization takes $O(R \cdot m + n)$ time. The cleanup loop is
accomplished in $O(m)$ time. Thus, Algorithm \ref{alg:optimalpartitioner}
runs in $O(R \cdot (u_{\max} \cdot m + N) + n)$ time. In this work, we
consider cost functions of rank at most 3. The number of blocks
\eqref{eq:numberofblocks} is a rank 1 cost, the memory usages
\eqref{eq:1dvbrmemory} and \eqref{eq:vbrmemory} are rank 2 costs, and we
approximate our empirical model \eqref{eq:vbrcompute} to rank 3.

In practice, diminishing returns are observed for max block sizes $u_{\max}$
or $w_{\max}$ beyond approximately $12$ \cite{vuduc_performance_2002}. Even
in theory, increasing the block size to $u_{\max}'$ will only further
amortize the index storage over more rows, so the additional compression is
bounded by a factor of $(s_{\text{index}}/u_{\max} +
s_{\text{value}})/(s_{\text{index}}/u_{\max}' + s_{\text{value}})$. When
$s_{\text{index}} = s_{\text{value}}$ and $u_{\max} = 8$, doubling $u_{\max}$
will further compress by at most a factor of $18/17$.

Because the outer dynamic program loop on line
\ref{alg:optimalpartitioner:optimize} works backwards, all of the innermost
loops access memory in storage order. This also allows us to construct
$spl_\Pi$ in place. To improve the empirical running time of our partitioner,
we implemented a specialization for the case where $\Phi$ is the trivial
partition. In this case, $\Phi^{-1}[j] = j$, $w$ is always 1 and the costs
are all rank 1, so much of the algorithm can be simplified.

SpMV is often parallelized. If a coarse-grained partition is applied to the
rows or columns so that each part executes on one of $P$ processors, then any
of the algorithms or heuristics presented can be applied to the local regions
to create fine-grained block subpartitions. If, however, one wishes to
compute the block decomposition before the processor decomposition
(partitioning, for example, block rows instead of rows), the previously
described approach is still a practical option, but concatenating the
resulting local partitions is not guaranteed to be optimal, since it imposes
$P$ artificial split points on the processor boundaries. Instead, one might
choose to compute the best partition for all $u_{\max}^2$ combinations of
start and end points within boundary regions of size $u_{\max}$ (multiplying the
serial workload by $u_{\max}$), and then stitch these optimal solutions
together using dynamic programming over each region in $O(P
* u_{\max})$ time. This strategy should be considered when $u_{\max}$ is small
in comparison to $P$, but we expect our first suggestion to be sufficient in
practice.

If the blocks in our format are themselves sparse \cite{nishtala_when_2007, buluc_representation_2008, buluc_parallel_2009, zhang_making_2017, hong_adaptive_2019, namashivayam_variable-sized_2021},
we may be interested in a cost function which models both the number of
nonzero blocks and the number of reads from $x$ required to process the block
row. Note that existing contiguous cache blocking heuristics use aggregate or
probabilistic models to find splits, as opposed to calculating the actual
reuse. If a cost model depends on more than one simultaneous column
partition, we suggest using more than one copy of $h$, $\Delta$, and $d$ to
calculate costs.

\section{Conversion}

After producing a partition, we need to convert the matrix from CSR to VBR or
1D-VBR format. We use a hash table to compute the size of $pos$ and $ofs$ in
one pass over the matrix, an algorithm quite similar to that of Algorithms
\ref{alg:optimalpartitioner} and \ref{alg:overlappartitioner}. It is possible
to fuse computation of $pos$ and $ofs$ with the partitioning itself, but we
did not notice enough of a speedup to justify the added complexity. Our
conversion algorithms are similarly expensive to our partitioners.


If all the rows in each group are identical, as is the case with partitions
produced by the \textproc{StrictPartitioner}, the nonzero patterns from the
CSR representation can be copied directly from any row in each part to form
the $idx$ array in 1D-VBR format. We can pack $val$ through simultaneous
iteration over all the rows in the part, since we don't need to fill.

If, however, all the rows in each group are not identical, then we must merge
the nonzero patterns of each row in a part to produce the nonzero block
patterns. If each CSR row contains a sorted list of the elements in $v_i$,
then our goal is to form a sorted list of the elements in $\bigcup_{i \in
\Pi_k} v_i$. This is a similar problem to merging $w$ sorted
lists. Algorithms exist to solve such a problem in $O(\log(w)(|\bigcup_{i \in
\Pi_k} v_i|) + \sum|v_i|)$ time \cite[``HeapMerge'']{greene_k-way_1991}.
Since we also need to fill all $u\cdot(|\bigcup_{i \in \Pi_k} v_i|)$ entries
of the $val$ array with either nonzeros or explicit zeros, we instead use a
linear search over the rows to find the minimum index, then iterate over the
rows to fill the corresponding elements of $idx$ and $val$
\cite[``LinearSearchMerge'']{greene_k-way_1991}. The direct merge algorithm
is the simpler choice when $u$ is small, which we have assumed it is. The
algorithm for producing block rows in VBR or 1D-VBR format is similar enough
to \cite[``LinearSearchMerge'']{greene_k-way_1991} that we omit it. It is
enough to know that the number of operations performed by the conversion
algorithm is proportional to the size of the resulting format.

\section{Results}

We ran our programs on the ``Haswell'' partition of the ``Cori'' NERSC
Supercomputer. We used a single core of a 16-core Intel\textsuperscript{\textregistered}
Xeon\textsuperscript{\textregistered} Processor E5-2698 v3 running at 2.3 GHz with 32 KB of L1
cache per core, 256 KB of L2 cache per core, 41 MB of shared L3 cache, and
128 GB of memory. This CPU supports the AVX2 instruction set, meaning that it
supports SIMD processing with 256 bit vector lanes.

All kernels were implemented\footnote{Code is available at %
\url{https://github.com/peterahrens/SparseMatrix1DVBCs.jl/releases/tag/2005.12414v2} %
and %
\url{https://github.com/peterahrens/ChainPartitioners.jl/releases/tag/2005.12414v2} %
} in Julia 1.5.3
\cite{bezanson_julia:_2017}. Because Julia is compiled just-in-time, it
enjoys powerful metaprogramming capabilities. This allowed us to create a
custom SpMV subkernel for each block size in our VBR and 1D-VBR SpMV kernel.
Hard-coding block sizes allows the compiler to perform important
optimizations like loop unrolling. Since our matrices were real-valued, the
value datatypes were floating point numbers \cite{noauthor_ieee_2019}, and
the index datatypes were 64 bit signed integers. We represented our matrices
with both 64 bit and 32 bit precision, so each SIMD vector fit 4 or 8
elements, respectively. In our tests, our maximum block size was $u_{\max} =
w_{\max} = 8$ for Float64 and $u_{\max} = w_{\max} = 16$ for Float32 because
we found that further increasing $u_{\max}$ did not increase performance by
much. If the block size of a row part was $1$, we used scalar instructions.
Otherwise, we used one or two vectors to process the row part. We used the
SIMD.jl library to emit explicit LLVM vector instructions for each block size
\cite{erik_schnetter_eschnettsimdjl_2019}. We benchmark with a warm cache,
meaning that we run the kernel once before beginning to measure it. We run
the Julia garbage collector before taking each benchmark. After warming up
the cache, each kernel is sampled one million times or until 10 seconds of
measurement time is exceeded (we allow the kernel to complete before
stopping), whichever happens first. All benchmarks use the minimum sampled
time.

Our test set includes all the matrices of \cite{vuduc_fast_2005}, which have
clear block structure. To diversify our test collection, we also include
several matrices with imperfect block structure from the SuiteSparse Matrix
Collection \cite{davis_university_2011}. This includes matrices with large
dense structures that must be balanced against sparse structures elsewhere,
such as ``exdata\_1,'' or ``TSOPF\_RS\_b678\_c1,'' matrices with large, dense
blocks that may be interrupted by isolated sparse rows and columns, like
``Goodwin\_071,'' ``lpi\_gran,'' or ``heart3,'' and matrices where nonzeros
are merely clustered, rather than appearing in clear blocks, such as in
``ACTIVSg70K,'' ``scircuit,'' or ``rajat26.'' The ``Janna/*'' matrices have
clear blocks, but benefit from merging blocks together. The matrices are
described in Table~\ref{tbl:matrices}.

\begin{table}[!ht]
    \caption{Test matrices used in addition to those of
    \cite{vuduc_fast_2005}. ``Spy'' is the sparsity pattern of the matrix
    $A$ (pixels represent logarithmic density of a region), and ``Zoomed Spy'' is the pattern of a representative $128 \times 128$
    region (pixels represent individual matrix entries). Blue is zero, and yellow is nonzero.}\label{tbl:matrices}
    \setlength{\tabcolsep}{2pt}\begin{tabular}{c@{\hspace{1pt}}ccc}
Spy & \makecell{Zoomed \\ Spy} & \makecell{Group / Name \\ $m \times n$ ($N$)} \\ \hline \\
\raisebox{-.5\height}{\includegraphics[width=1.7cm]{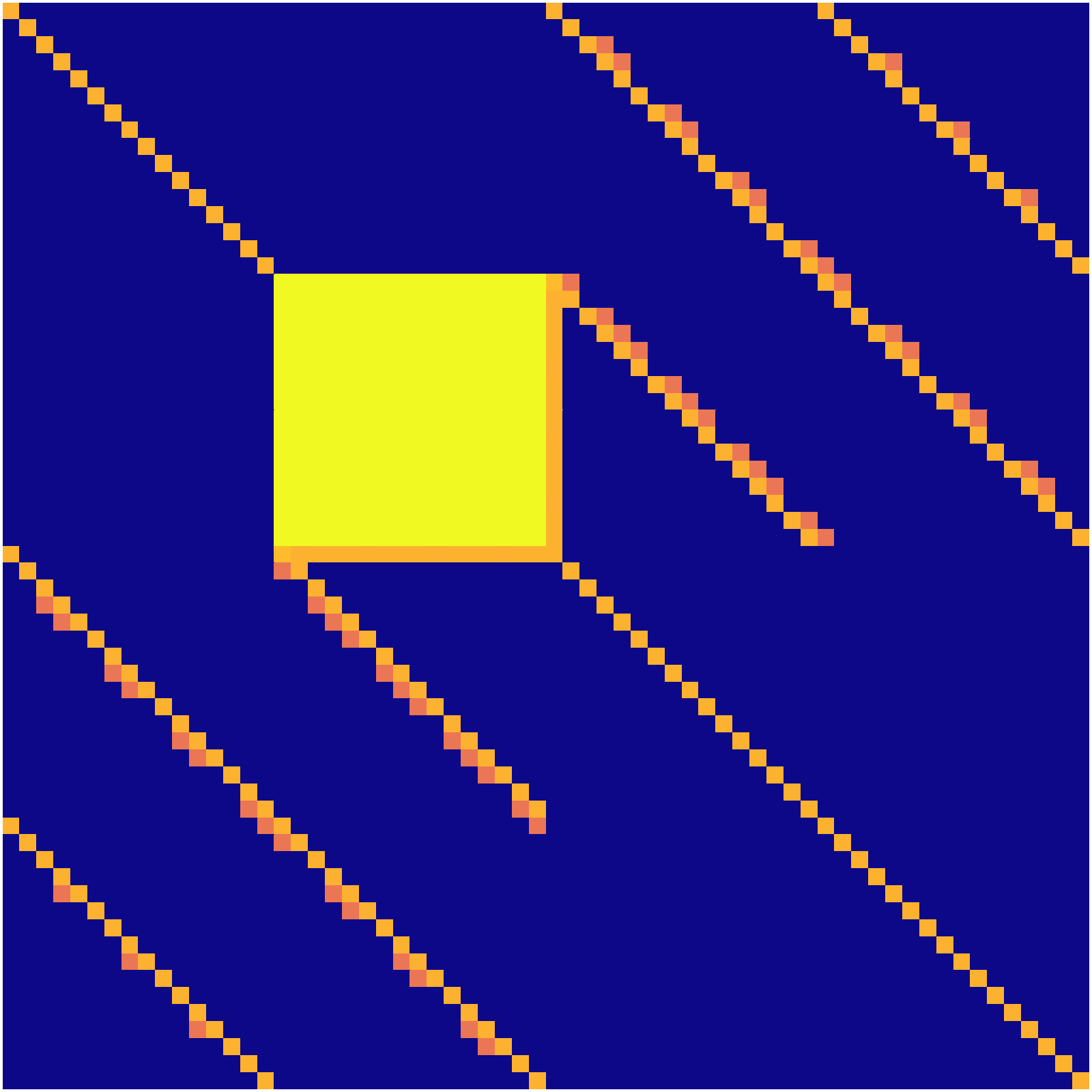}}\vspace{1pt} & \raisebox{-.5\height}{\includegraphics[width=1.7cm]{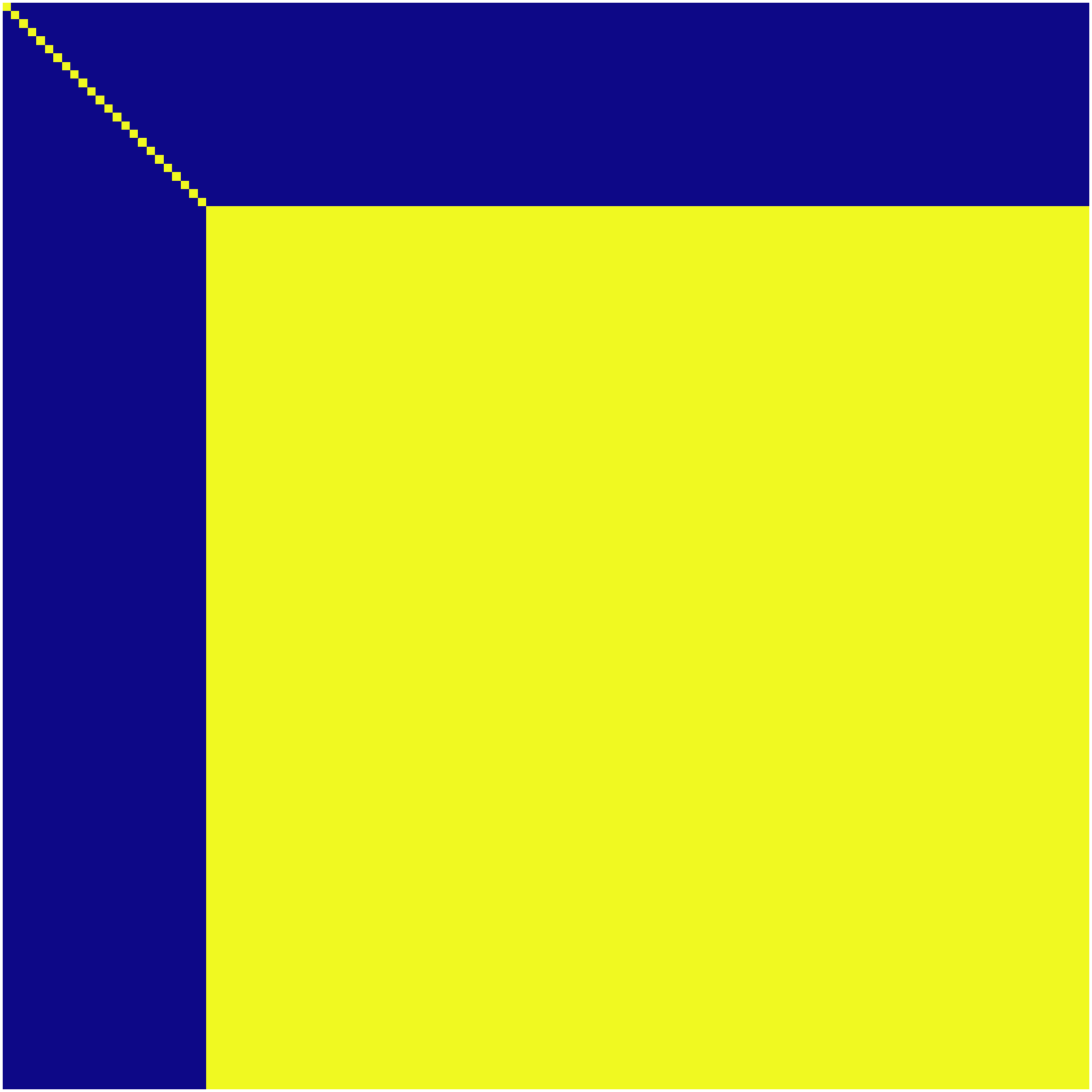}} & \makecell{ GHS\_indef/ \\ \quad exdata\_1 \\ $6001 \times 6001$ ($2269501$)} \\
\raisebox{-.5\height}{\includegraphics[width=1.7cm]{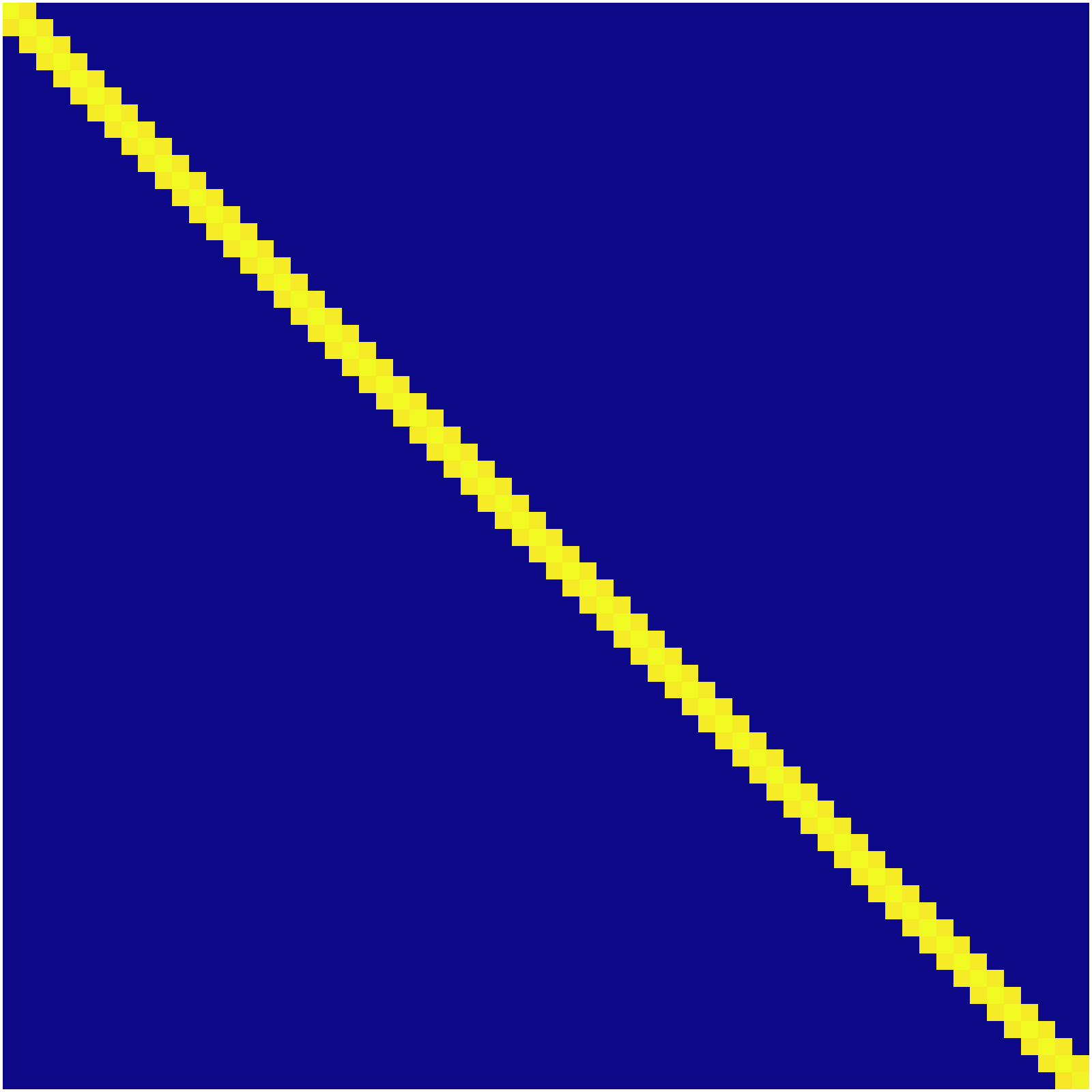}}\vspace{1pt} & \raisebox{-.5\height}{\includegraphics[width=1.7cm]{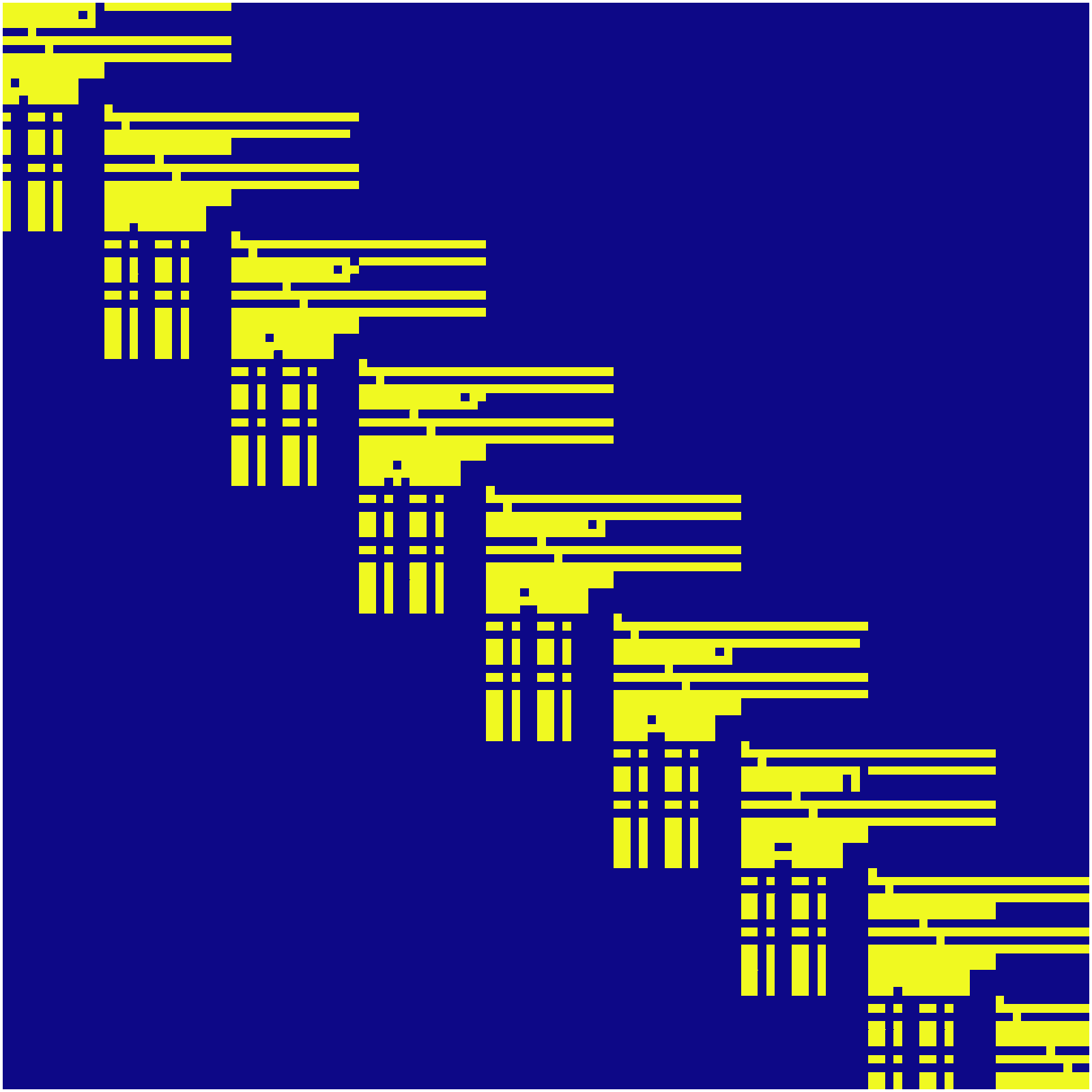}} & \makecell{ Goodwin/ \\ \quad Goodwin\_071 \\ $56021 \times 56021$ ($1797934$)} \\
\raisebox{-.5\height}{\includegraphics[width=1.7cm]{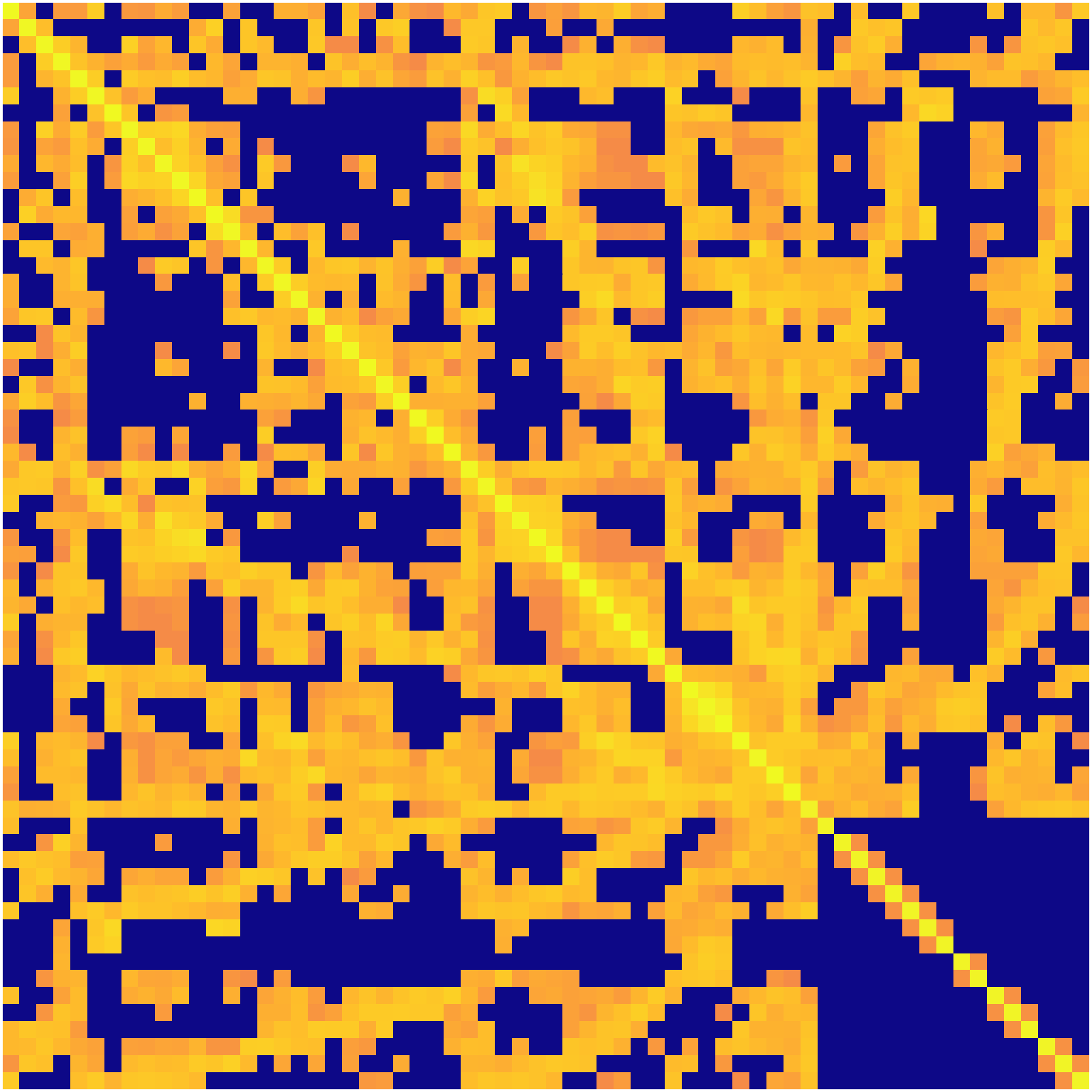}}\vspace{1pt} & \raisebox{-.5\height}{\includegraphics[width=1.7cm]{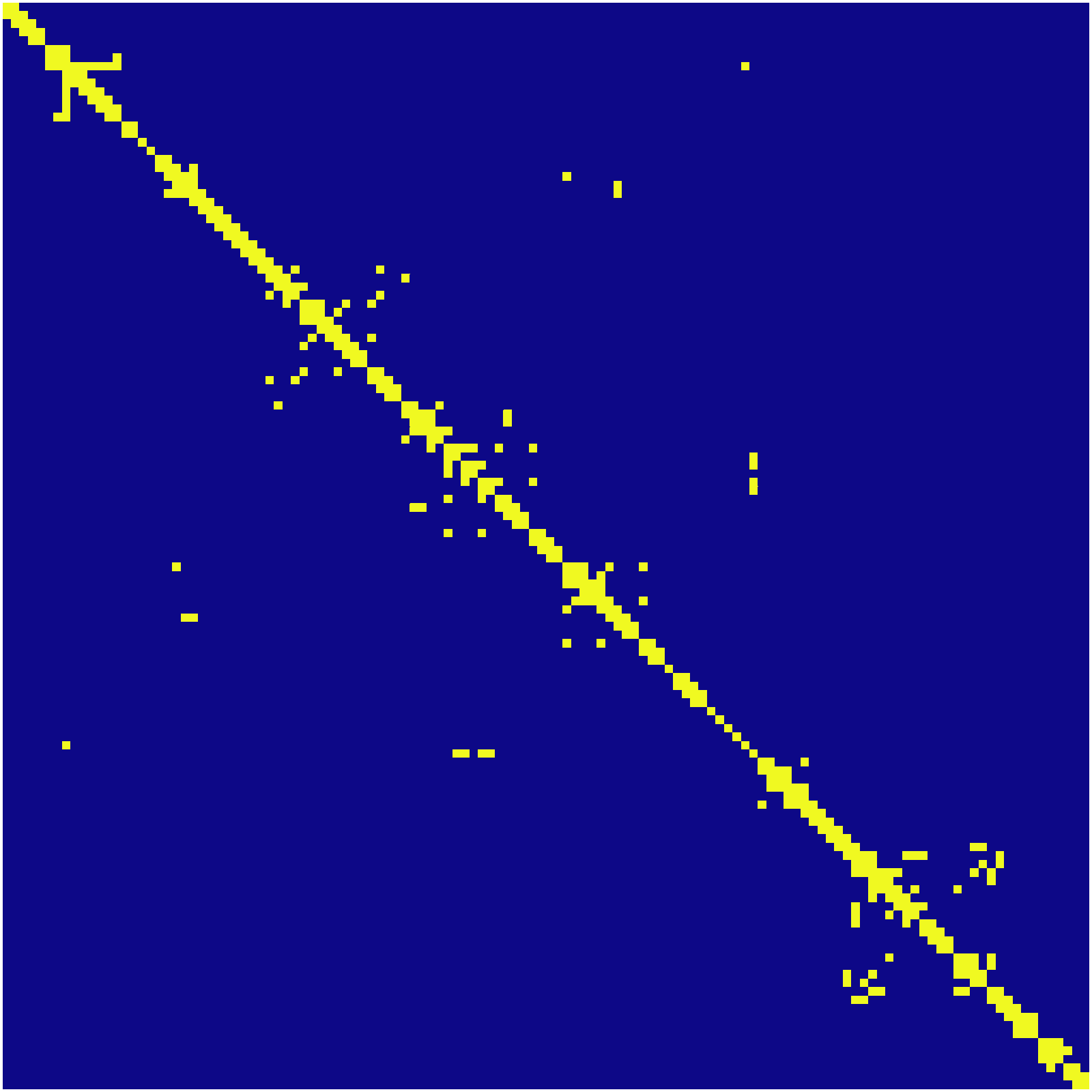}} & \makecell{ Hamm/ \\ \quad scircuit \\ $170998 \times 170998$ ($958936$)} \\
\raisebox{-.5\height}{\includegraphics[width=1.7cm]{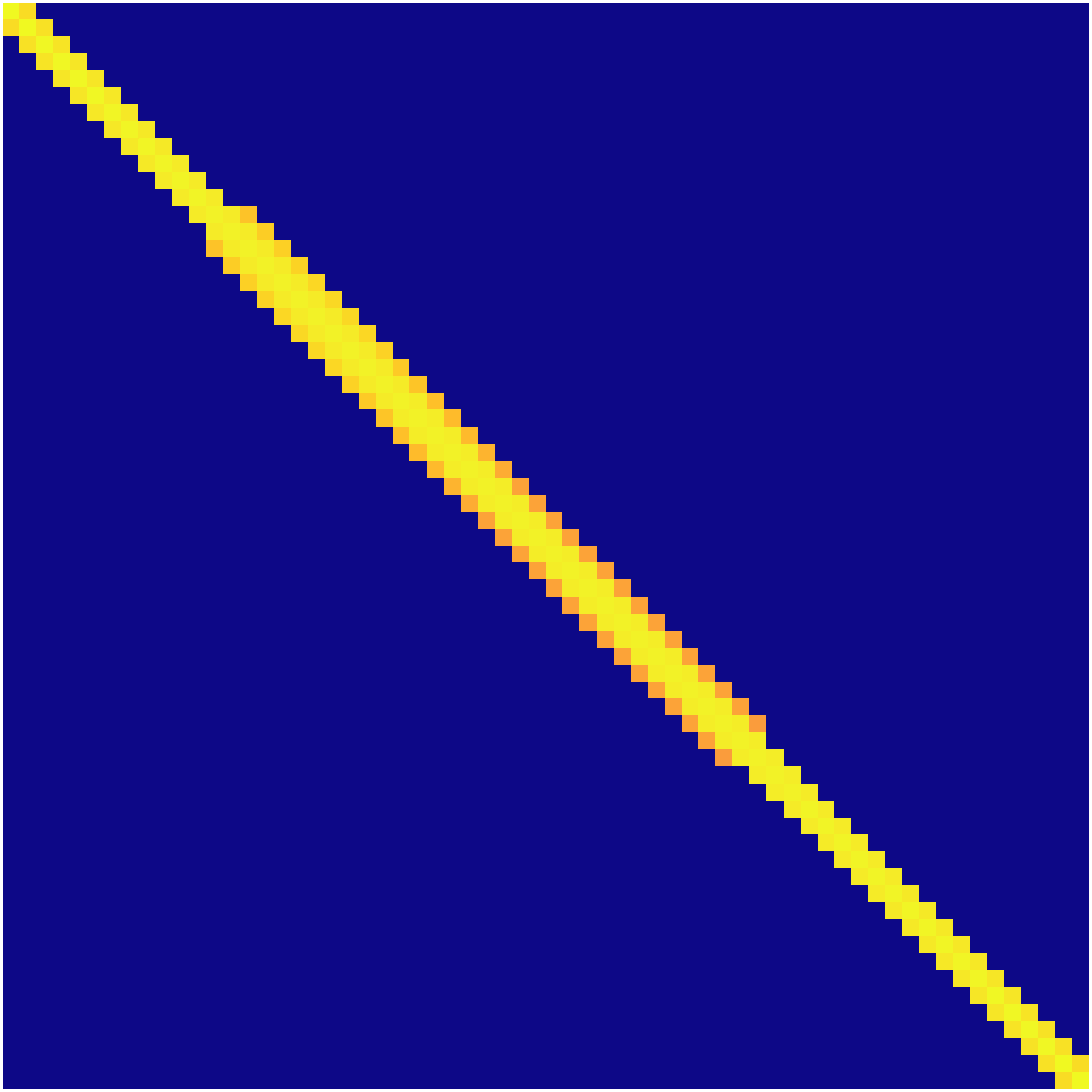}}\vspace{1pt} & \raisebox{-.5\height}{\includegraphics[width=1.7cm]{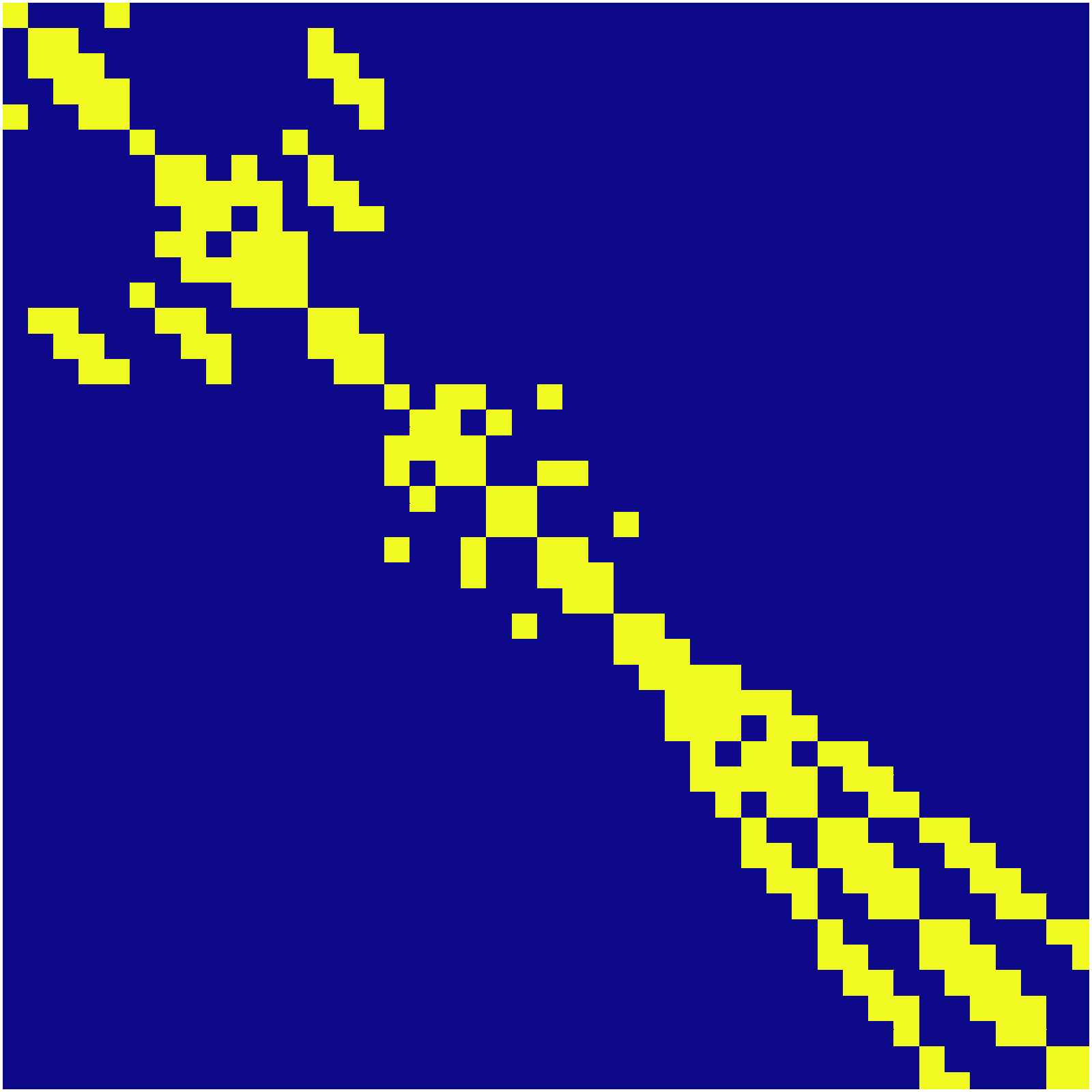}} & \makecell{ Janna/ \\ \quad Emilia\_923 \\ $923136 \times 923136$ ($41005206$)} \\
\raisebox{-.5\height}{\includegraphics[width=1.7cm]{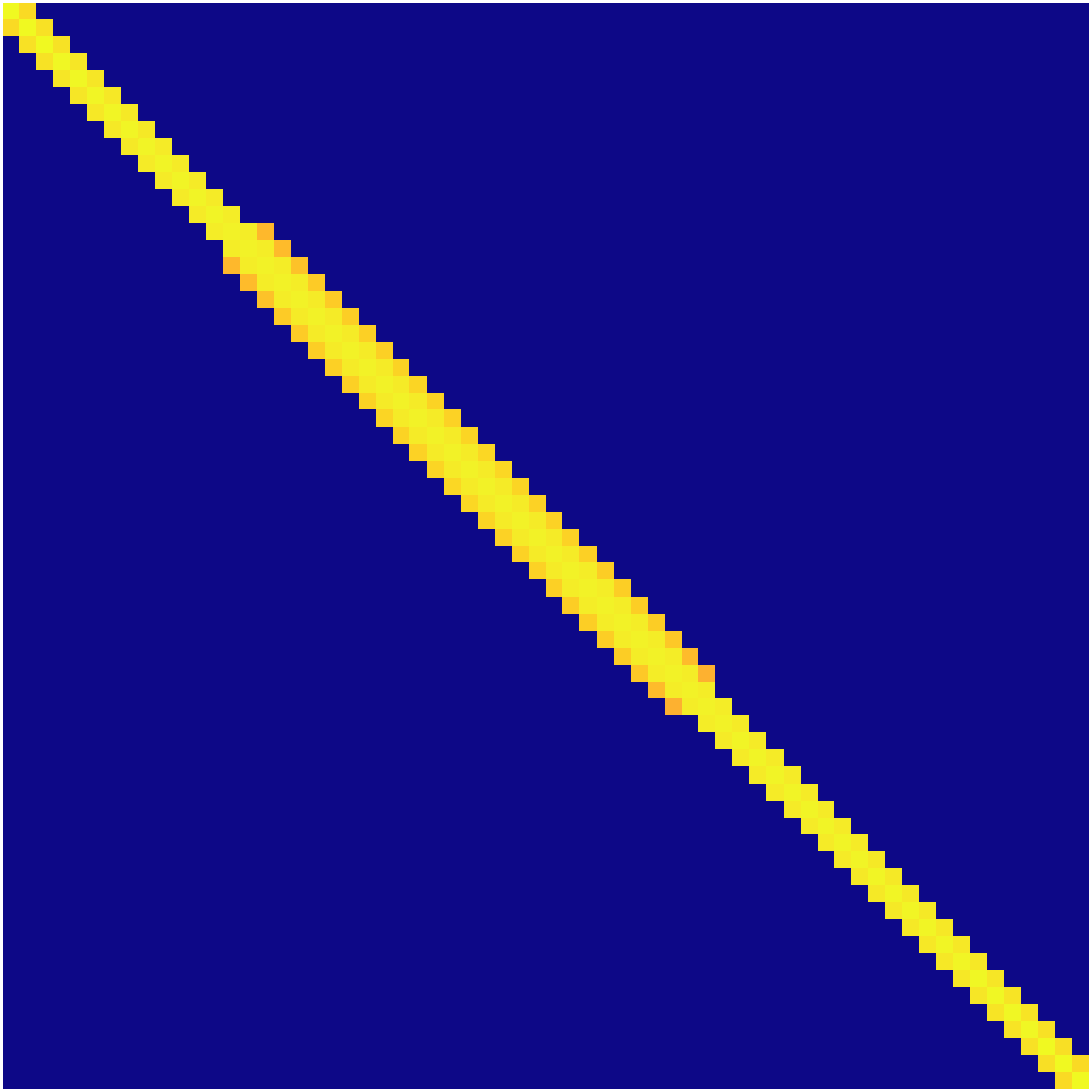}}\vspace{1pt} & \raisebox{-.5\height}{\includegraphics[width=1.7cm]{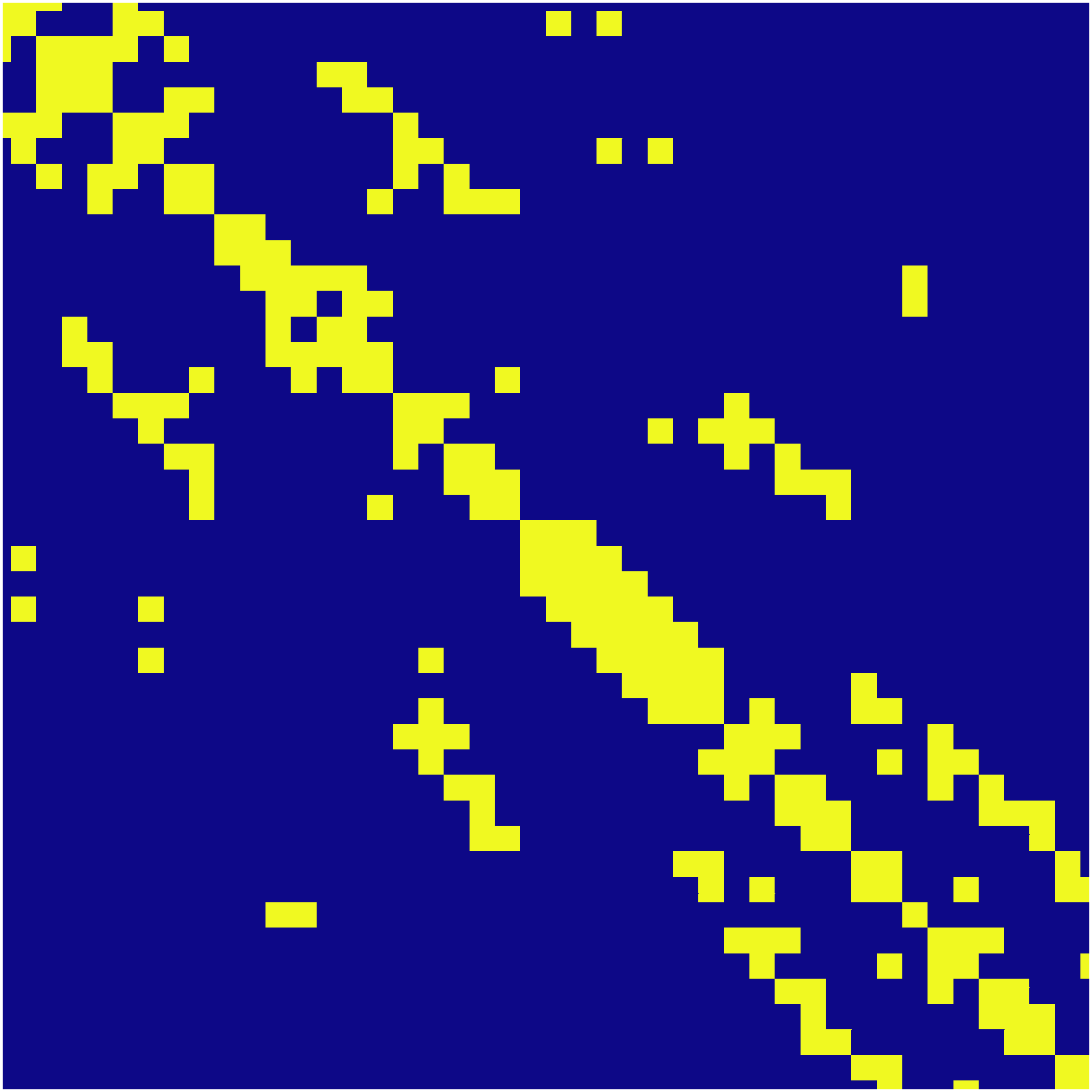}} & \makecell{ Janna/ \\ \quad Geo\_1438 \\ $1437960 \times 1437960$ ($63156690$)} \\
\raisebox{-.5\height}{\includegraphics[width=1.7cm]{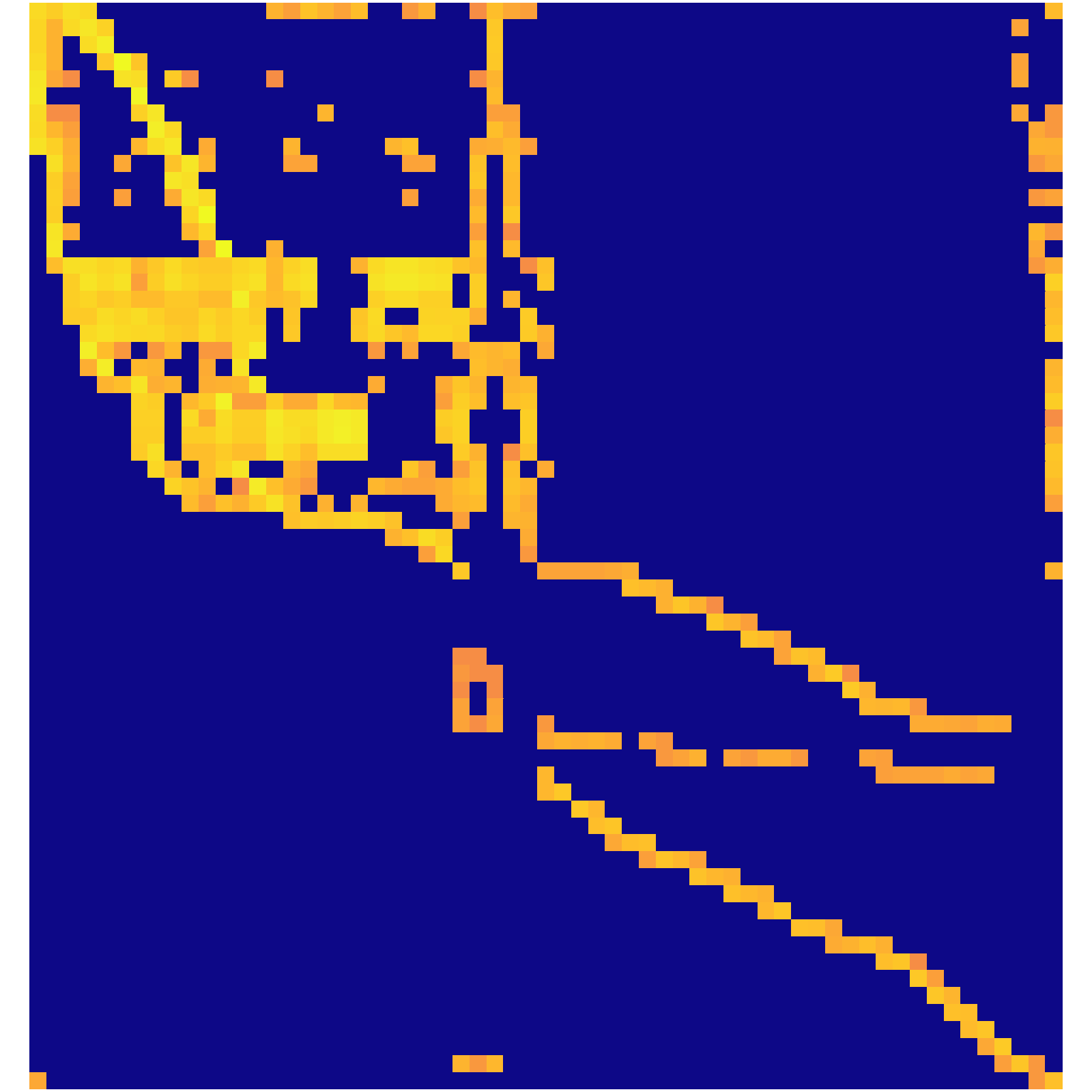}}\vspace{1pt} & \raisebox{-.5\height}{\includegraphics[width=1.7cm]{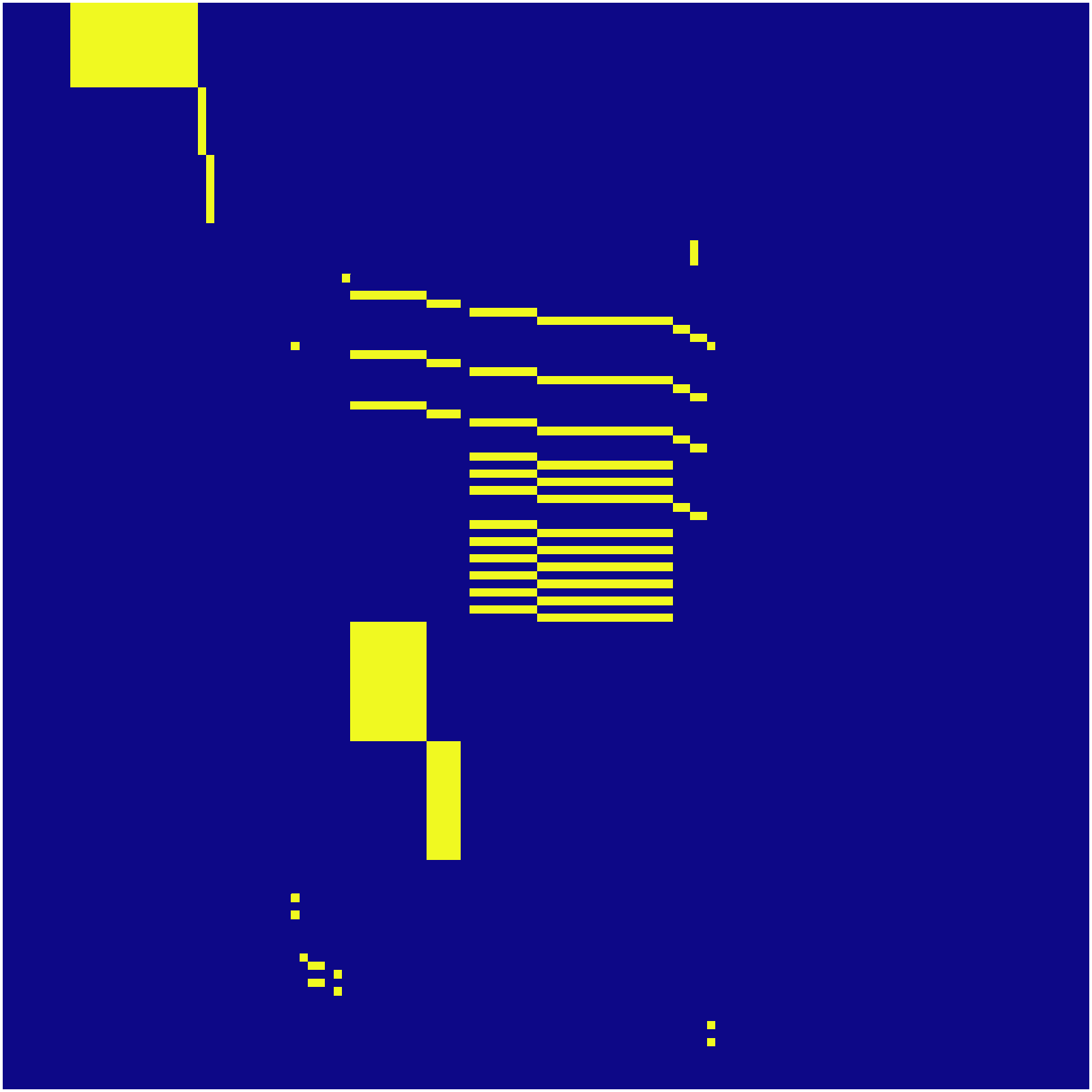}} & \makecell{ LPnetlib/ \\ \quad lpi\_gran \\ $2658 \times 2525$ ($20111$)} \\
\raisebox{-.5\height}{\includegraphics[width=1.7cm]{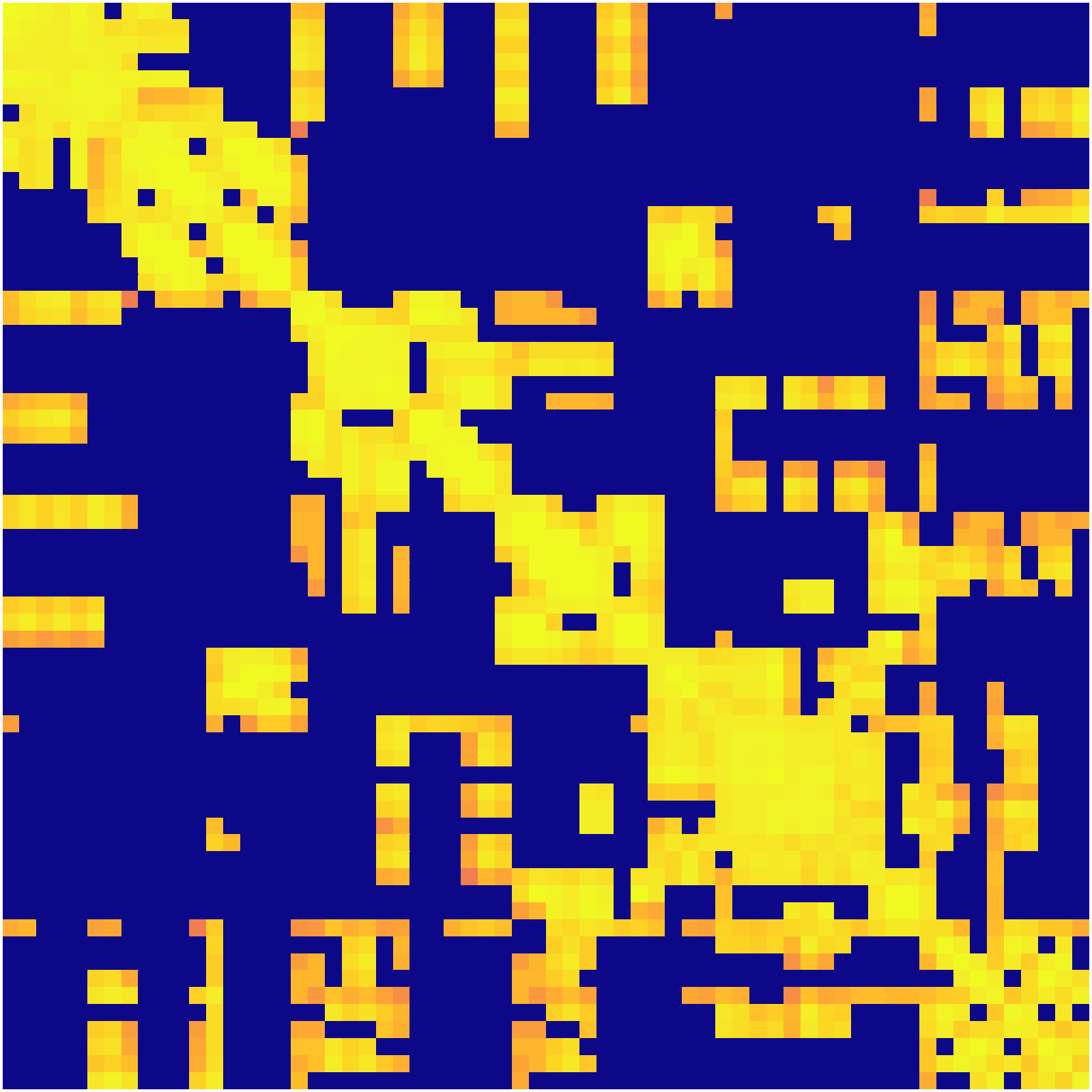}}\vspace{1pt} & \raisebox{-.5\height}{\includegraphics[width=1.7cm]{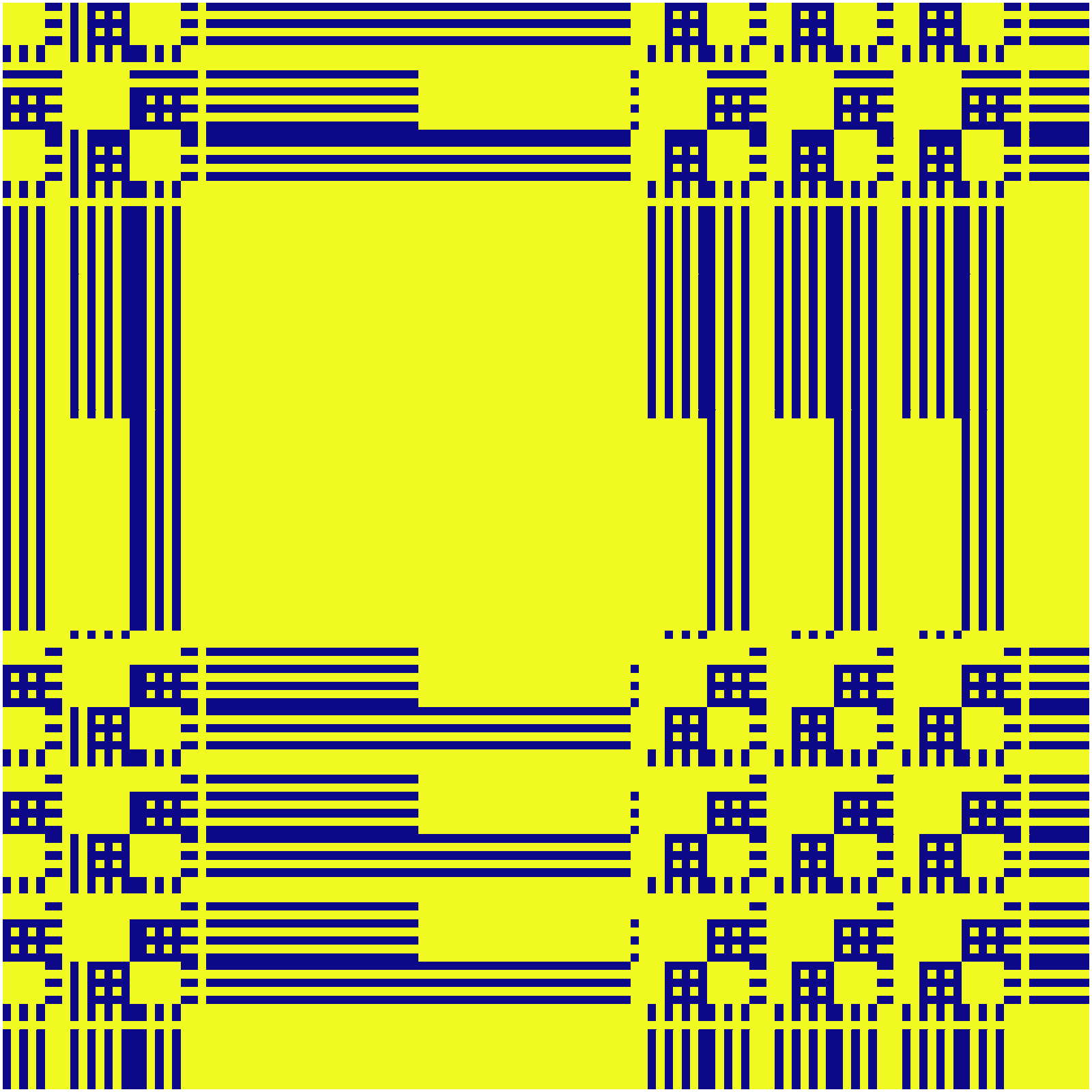}} & \makecell{ Norris/ \\ \quad heart3 \\ $2339 \times 2339$ ($682797$)} \\
\raisebox{-.5\height}{\includegraphics[width=1.7cm]{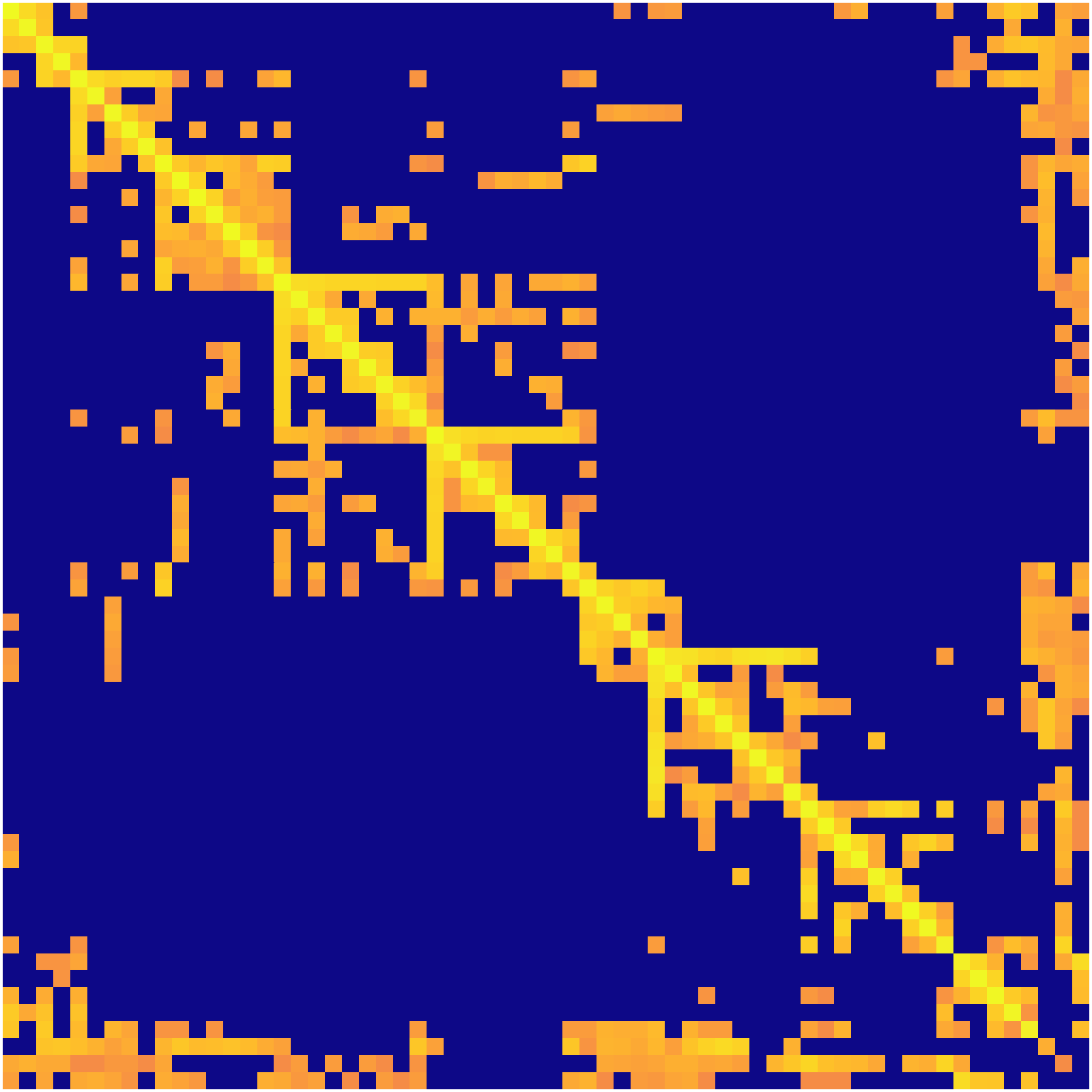}}\vspace{1pt} & \raisebox{-.5\height}{\includegraphics[width=1.7cm]{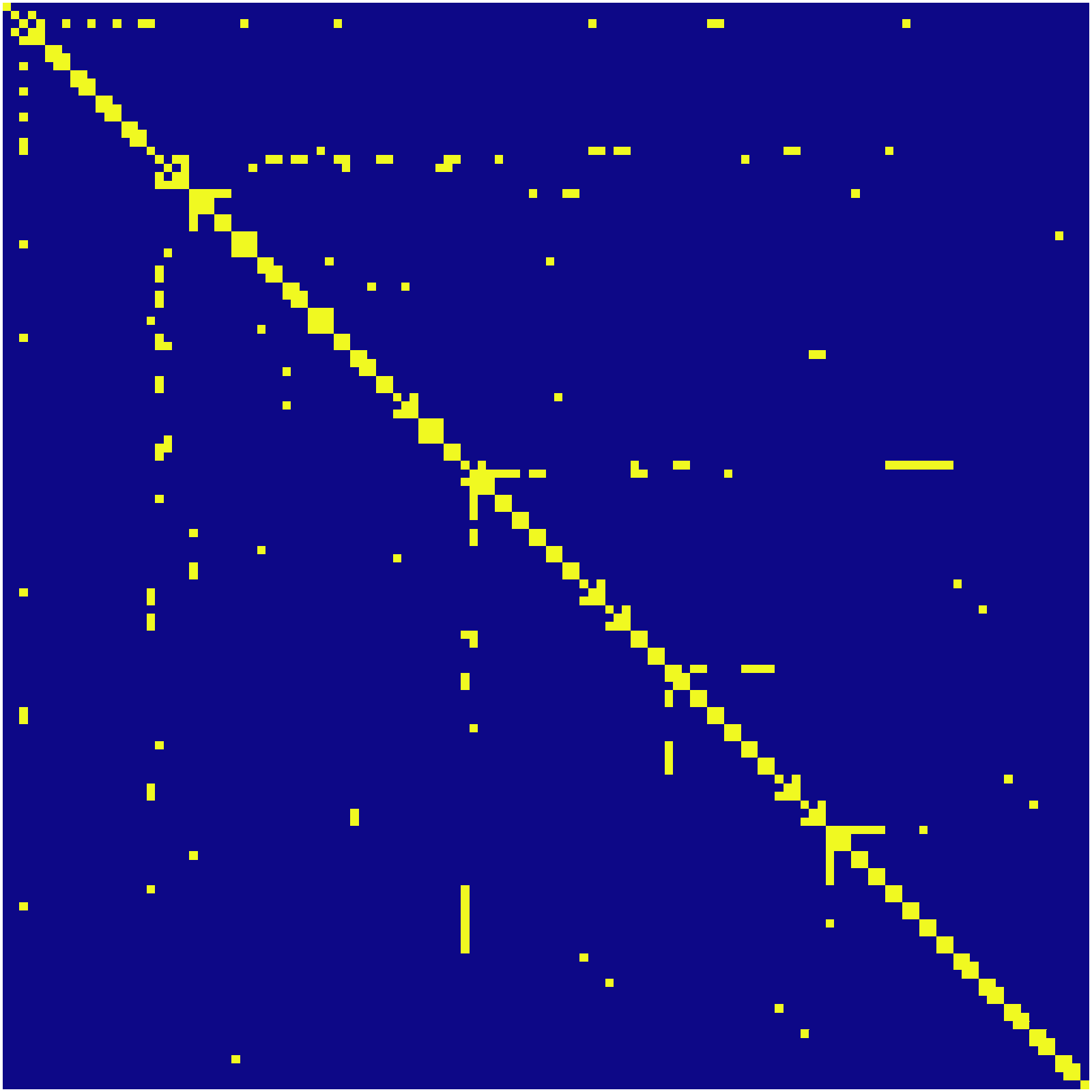}} & \makecell{ Rajat/ \\ \quad rajat26 \\ $51032 \times 51032$ ($249302$)} \\
\raisebox{-.5\height}{\includegraphics[width=1.7cm]{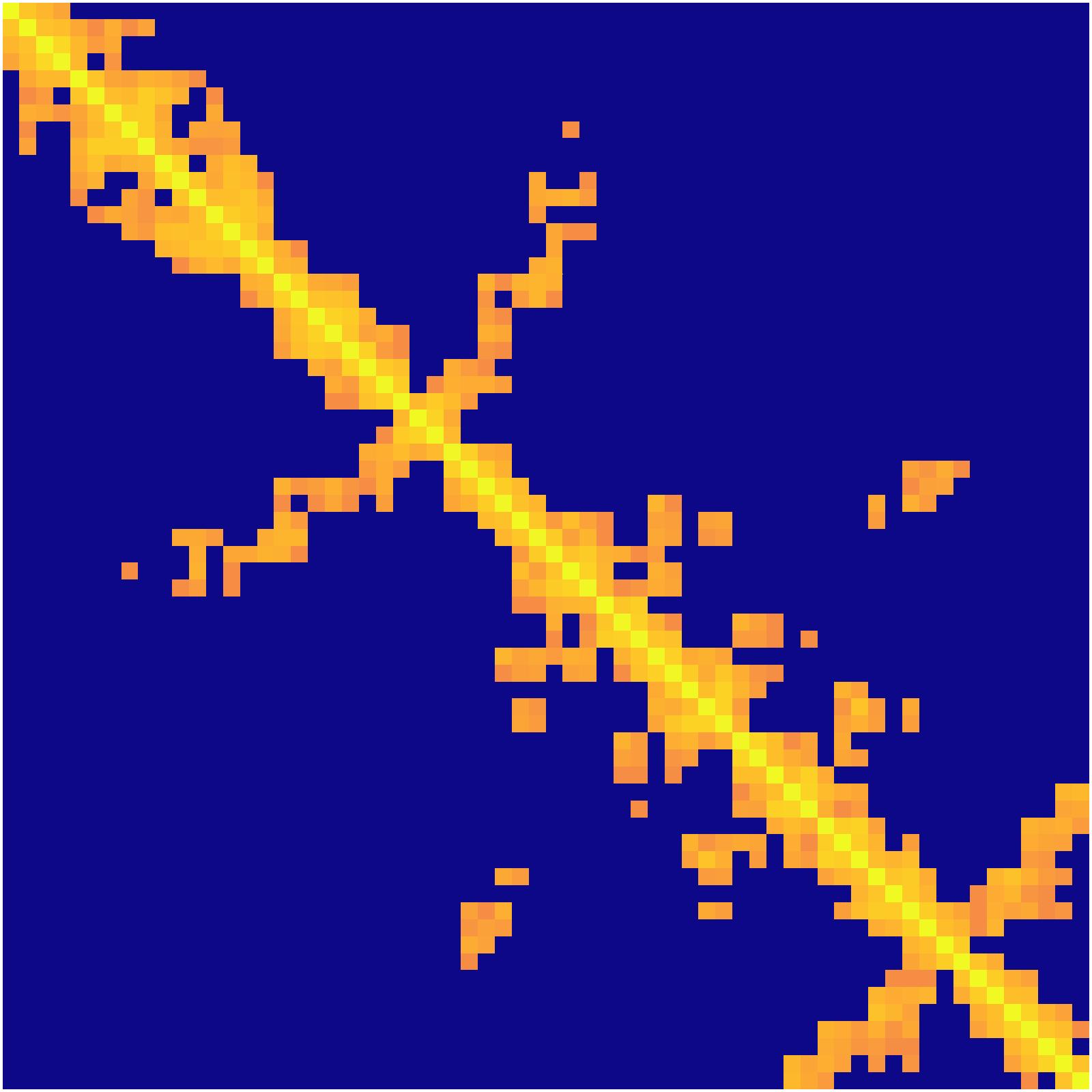}}\vspace{1pt} & \raisebox{-.5\height}{\includegraphics[width=1.7cm]{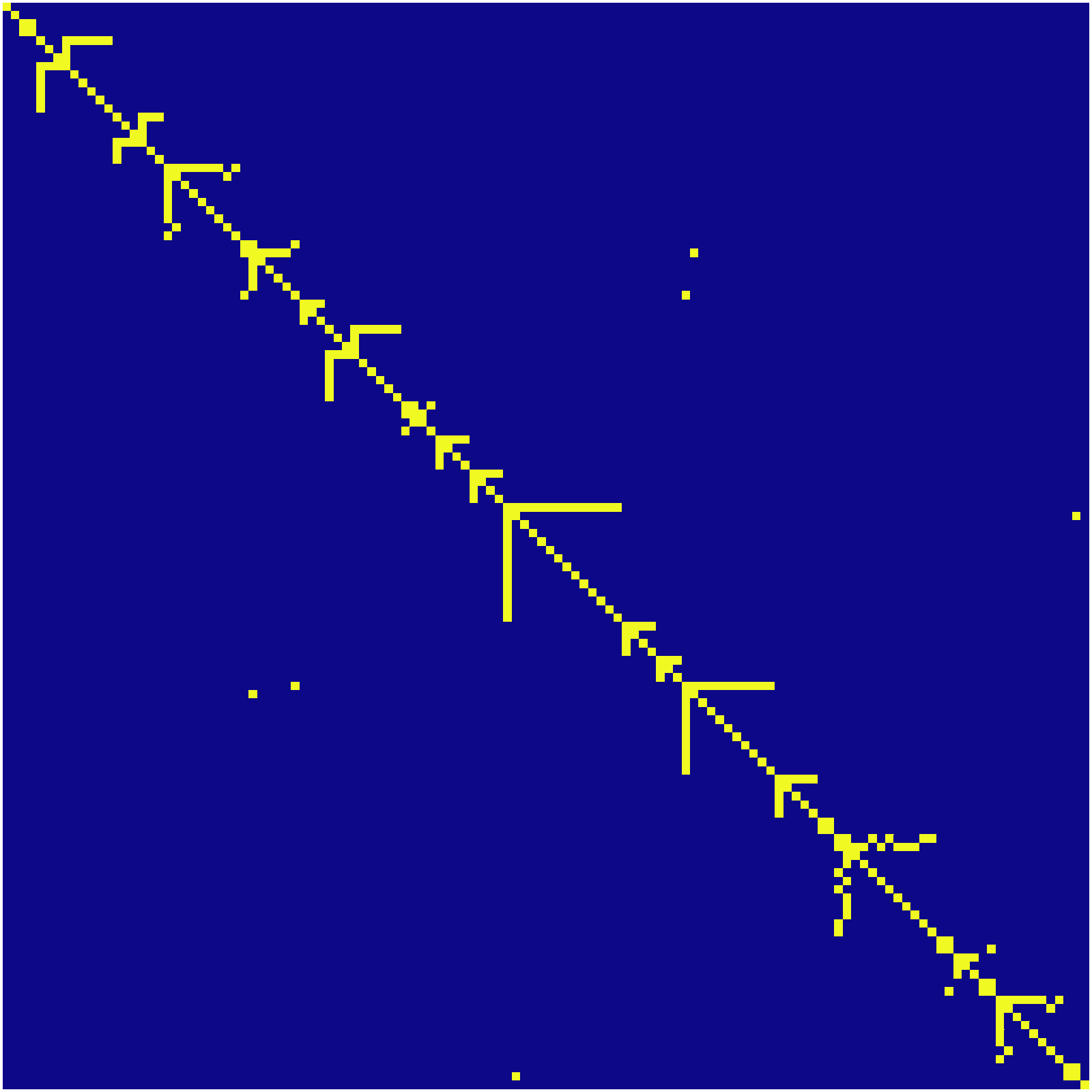}} & \makecell{ TAMU\_SmartGridCenter/ \\ \quad ACTIVSg70K \\ $69999 \times 69999$ ($238627$)} \\
\raisebox{-.5\height}{\includegraphics[width=1.7cm]{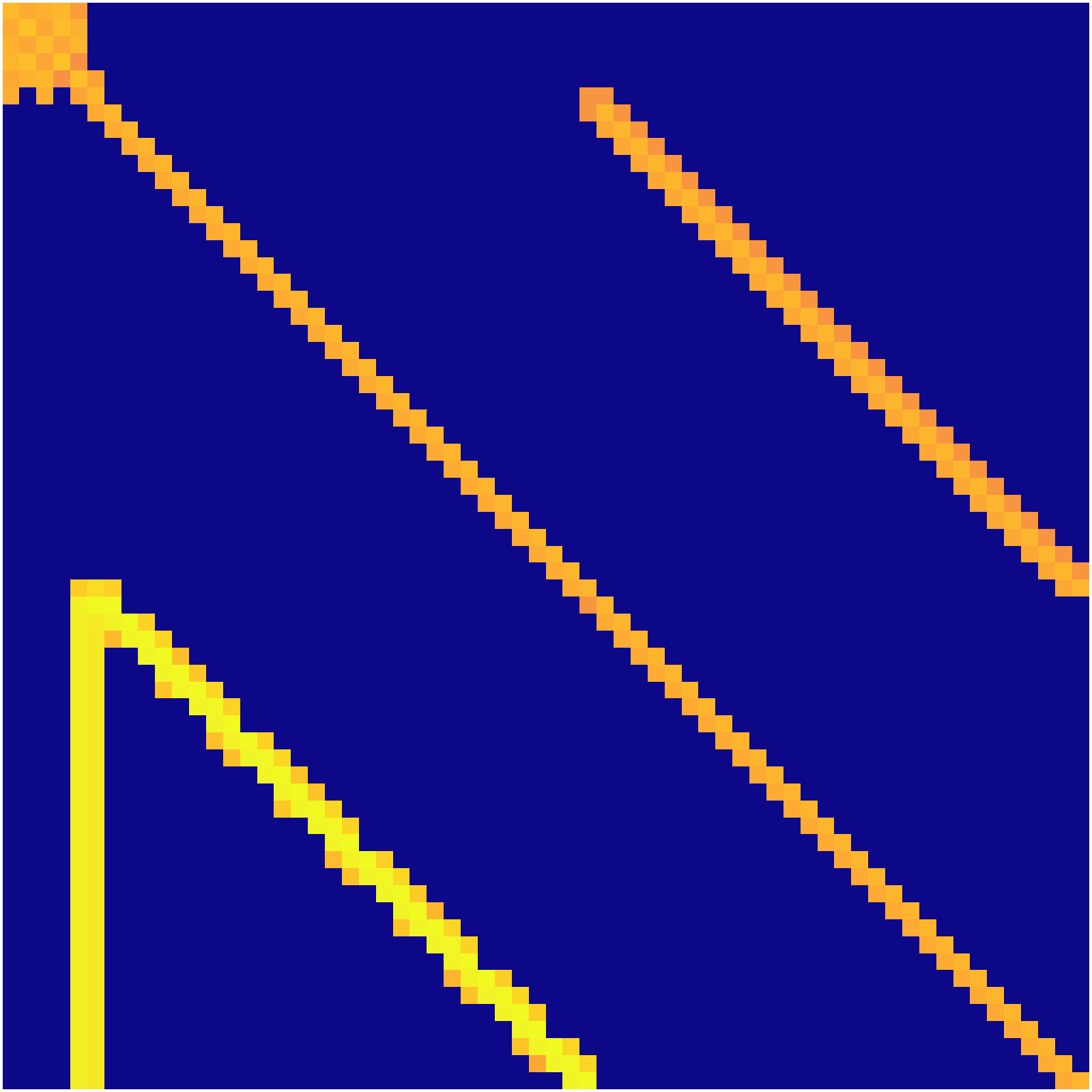}}\vspace{1pt} & \raisebox{-.5\height}{\includegraphics[width=1.7cm]{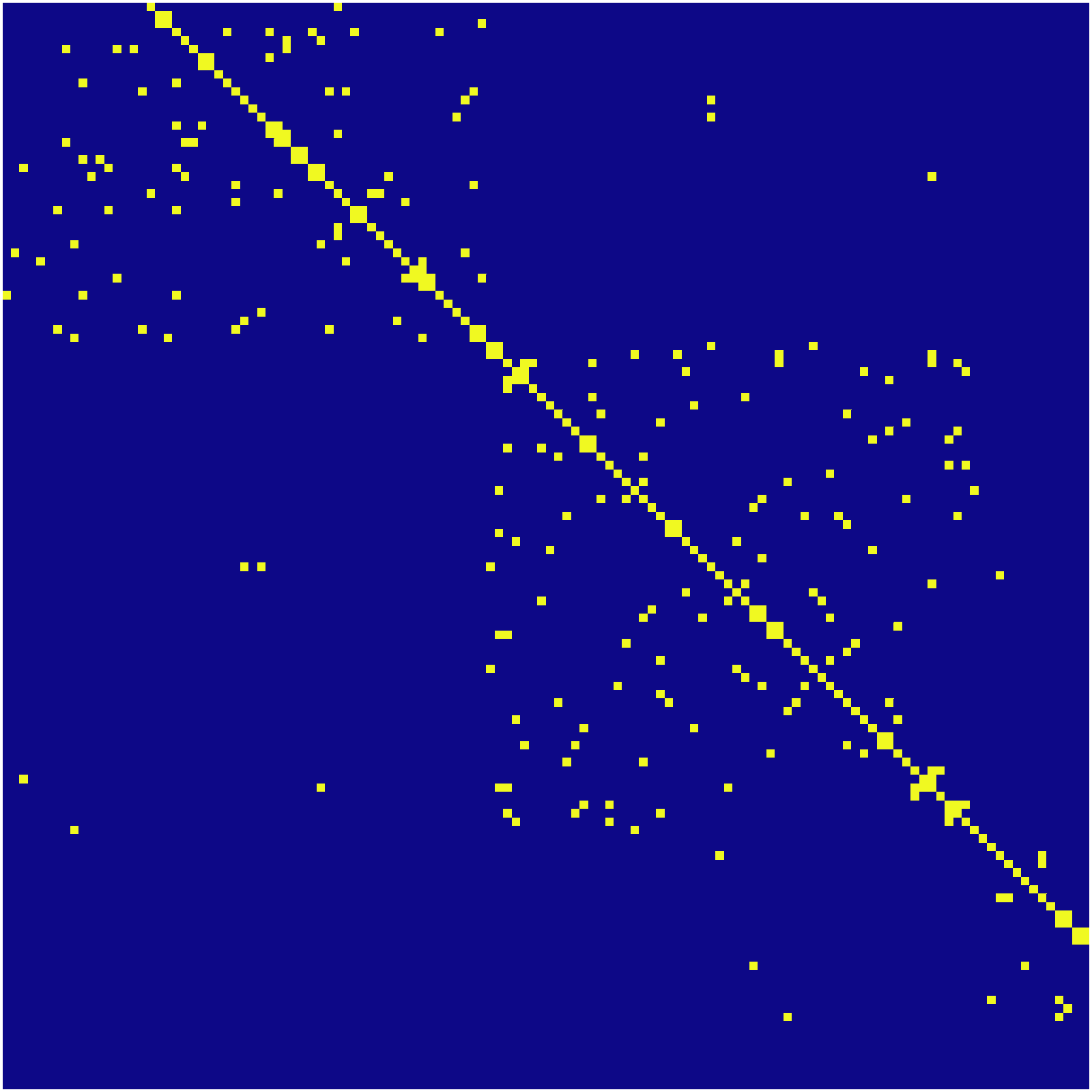}} & \makecell{ TSOPF/ \\ \quad TSOPF\_RS\_b678\_c1 \\ $18696 \times 18696$ ($4396289$)} \\
\end{tabular}

\end{table}

We compare several partitioners and matrix formats. The ``Strict'' label
refers to the \textproc{StrictPartitioner} algorithm. Note that we also use
the specialized conversion routine for the ``Strict 1D-VBR'' case. The
``Overlap ($rho = \rho$).'' label refers to the \textproc{OverlapPartitioner}
(Algorithm \ref{alg:overlappartitioner}) with a setting of $\rho$ that
worked well in our tests. Our \textproc{OptimalPartitioner} algorithm
(Algorithm \ref{alg:optimalpartitioner}) is tested with 3 different cost
models. The ``Min Memory'' label refers to minimizing the footprint of 1D-VBR
\eqref{eq:1dvbrmemory} or VBR \eqref{eq:1dvbrmemory}. The ``Min Compute''
label refers to minimizing the modeled computation time
\eqref{eq:vbrcompute}. The ``Min Blocks'' label refers to minimizing the
number of blocks \eqref{eq:numberofblocks}. When the
\textproc{OptimalPartitioner} is used to partition VBR, we partition the rows
first, then the columns, then the rows again. Further improvement after
continued alternation was observed to be negligible, suggesting either that
the initial partitioning problems are close to optimal, or that the row
partition is highly influential on the column partition, and vice-versa.


Since one might use our algorithms in the context of a sparse iterative
solver, where we partition once and multiply several times, using a
partitioner only produces an overall speedup after a certain number of SpMV
executions. If $t_{\text{VBR partition}}$ and $t_{\text{VBR convert}}$
are the measured times to partition and convert, and $t_{\text{VBR
multiply}}$ is the time to multiply once, then if we are to multiply $M$ times,
the total time to perform $M$ multiplications is 
$t_{\text{VBR partition}} + t_{\text{VBR convert}} + M\cdot t_{\text{VBR multiply}}$.
If the time required to multiply in CSR is $t_{\text{CSR multiply}}$, then
partitioning is the faster approach only if one plans to perform $M_{\text{critical}}$ multiplies, where
\begin{equation}\label{eq:1dvbrcriticalpoint}
    M_{\text{critical}} = \frac{t_{\text{VBR partition}} + t_{\text{VBR convert}}}{t_{\text{CSR multiply}} - t_{\text{VBR multiply}}}.
\end{equation}
We refer to $M_{\text{critical}}$ as the \textbf{critical point}.
Figure \ref{fig:profiles} shows performance profiles for all of our
partitioners on the metrics of memory usage, multiplication time, and
critical point, stratified by the floating point precision. Performance
profiles allow us to compare the relative partition quality visually over the
entire test set \cite{dolan_benchmarking_2002}. Table
\ref{tbl:results_spread} shows the distribution of running times, memory
usage, and critical points, all normalized to unpartitioned CSR, over both
precisions.

\begin{figure*}[!ht]
    \includegraphics[width=0.5\linewidth]{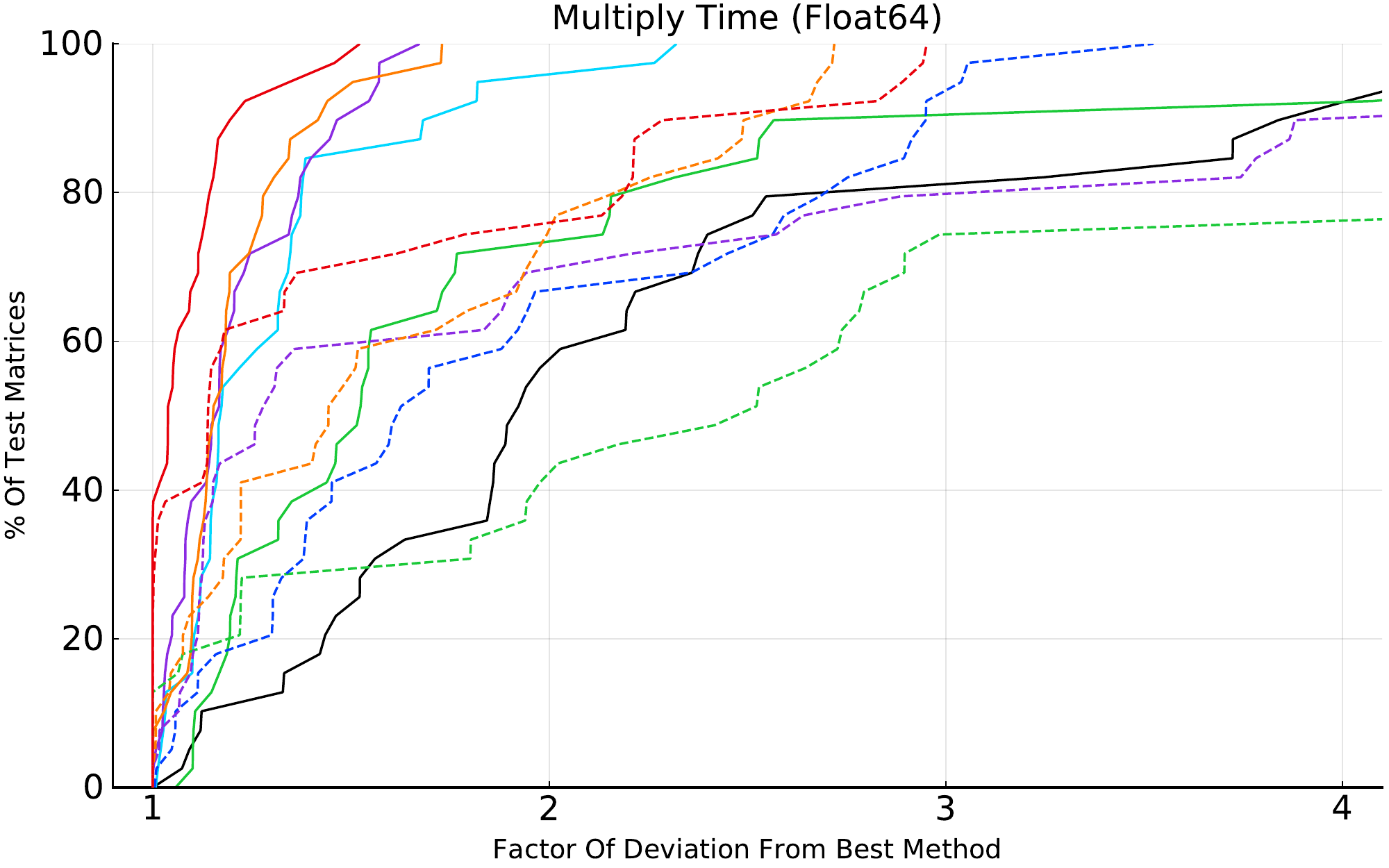}%
    \includegraphics[width=0.5\linewidth]{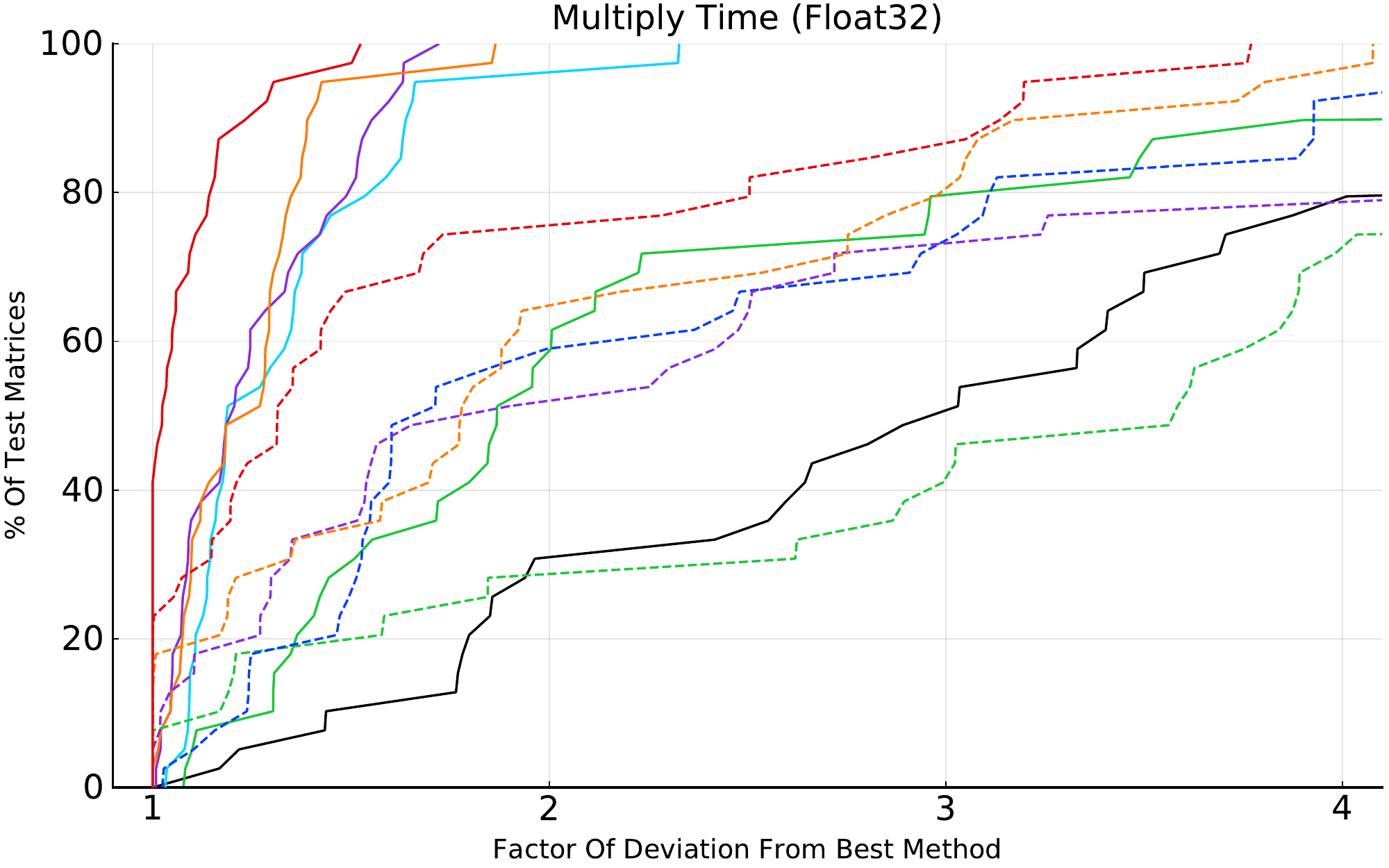}
    \includegraphics[width=0.5\linewidth]{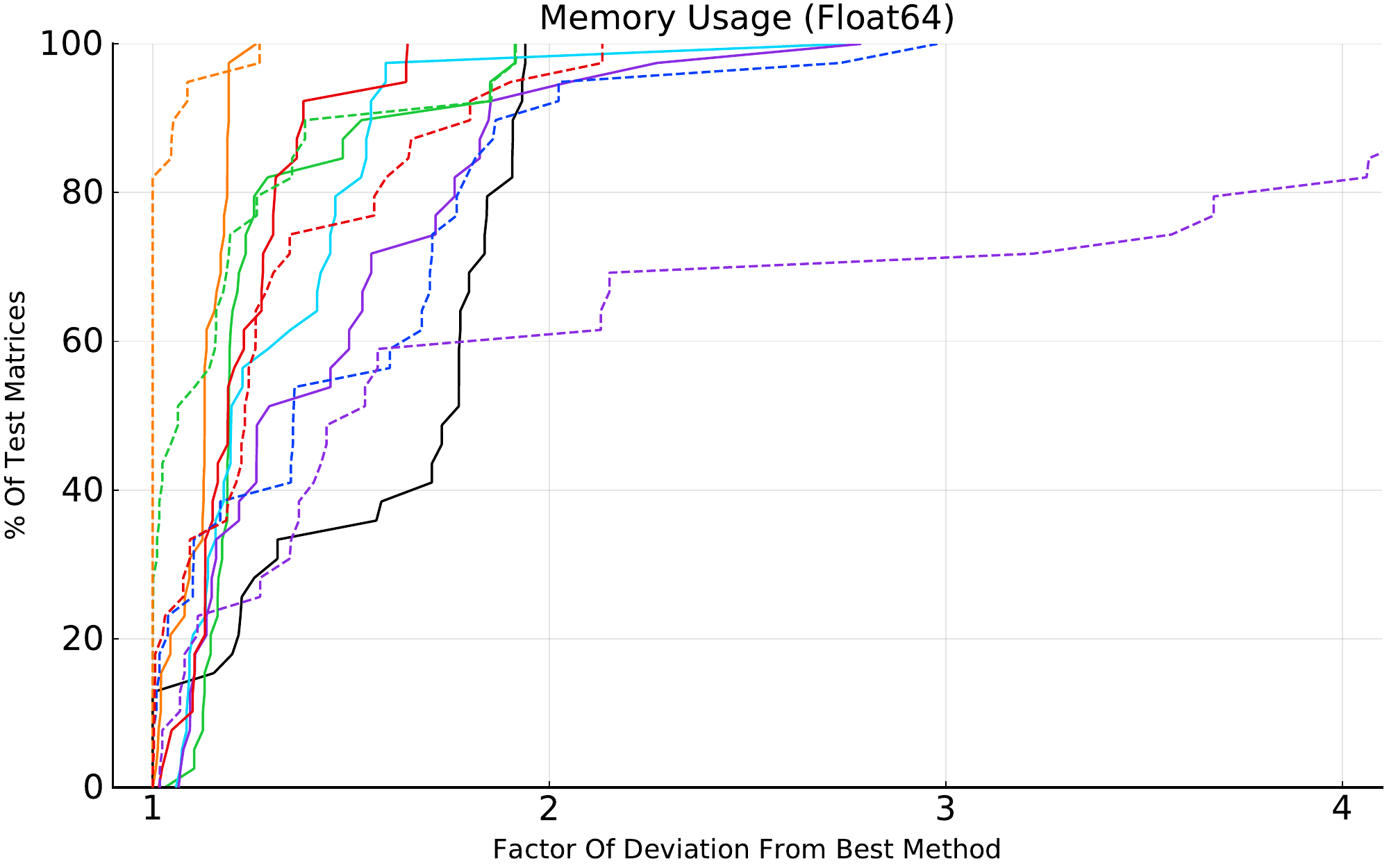}%
    \includegraphics[width=0.5\linewidth]{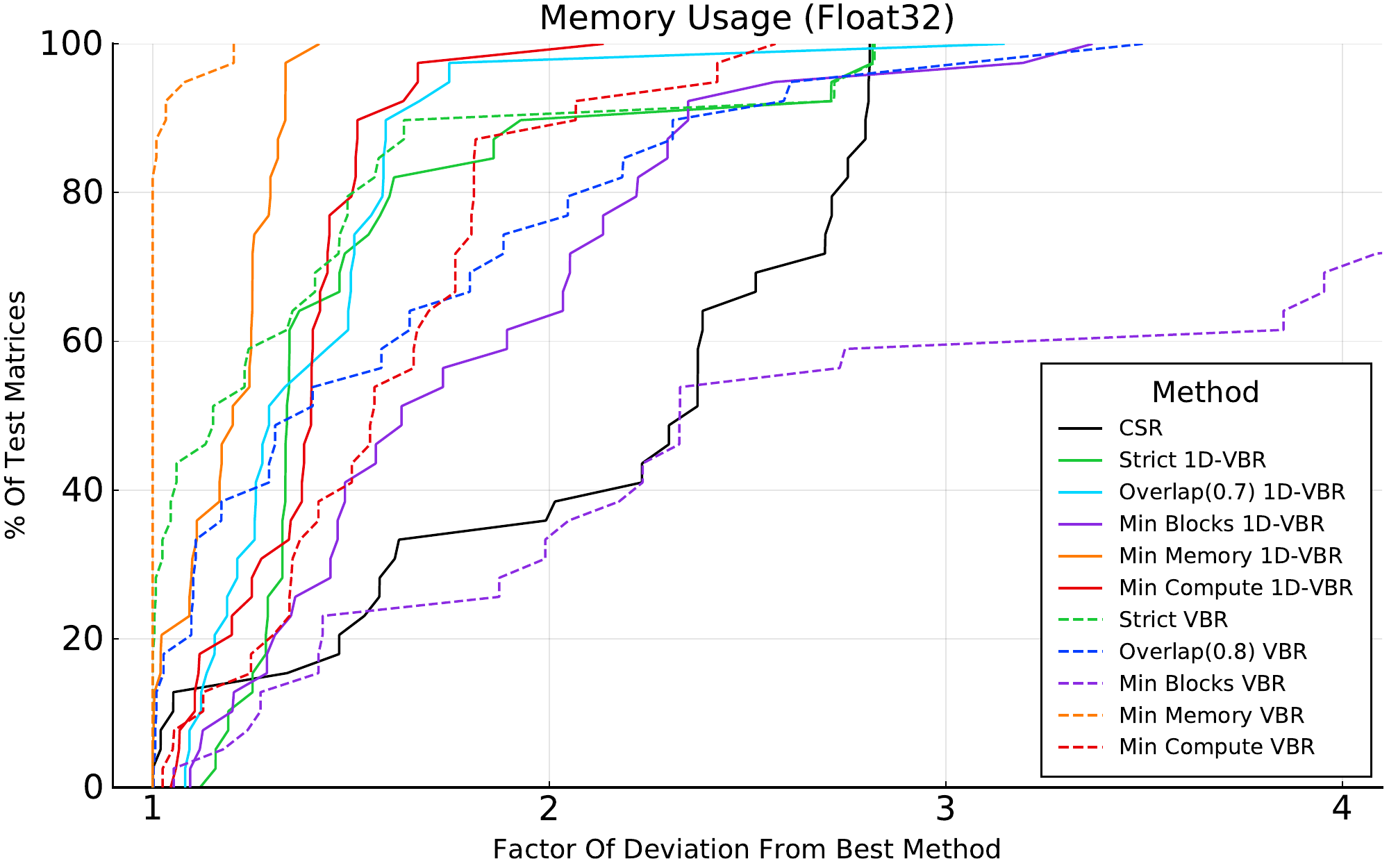}
    \caption{Performance profiles of sparse matrix-vector multiplication
    time, memory usage, and critical points for our partitioners on our
    entire test set. The $x$ axis shows the factor of deviation from the best
    performing partitioner and format, and the $y$ axis shows the percentage
    of test cases achieving such a factor. 
    } \label{fig:profiles}
\end{figure*}

\begin{table*}[htbp!]
    \centering
    \setlength{\tabcolsep}{2pt}\begin{tabular}{rp{12pt}rrrp{12pt}rrrp{12pt}rrr}
\multicolumn{1}{r}{\textbf{Partitioner}} &  & \multicolumn{3}{c}{\textbf{Memory Used}} &  & \multicolumn{3}{c}{\textbf{Multiply Time}} &  & \multicolumn{3}{c}{\textbf{Critical Point}}\\\cline{3-5}\cline{7-9}\cline{11-13}
\multicolumn{1}{r}{} &  & \multicolumn{1}{r}{\textbf{Q. 1}} & \multicolumn{1}{r}{\textbf{Med.}} & \multicolumn{1}{r}{\textbf{Q. 3}} &  & \multicolumn{1}{r}{\textbf{Q. 1}} & \multicolumn{1}{r}{\textbf{Med.}} & \multicolumn{1}{r}{\textbf{Q. 3}} &  & \multicolumn{1}{r}{\textbf{Q. 1}} & \multicolumn{1}{r}{\textbf{Med.}} & \multicolumn{1}{r}{\textbf{Q. 3}}\\
\hline
\textbf{Strict 1D-VBR} &   & $0.6$ & $0.808$ & $1$ &   & $0.573$ & $0.886$ & $1.17$ &   & $6.57$ & $46.1$ & $\infty$\\
\textbf{Overlap(0.9) 1D-VBR} &   & $0.615$ & $0.746$ & $0.918$ &   & $0.338$ & $0.545$ & $0.858$ &   & $10.9$ & $18.4$ & $68.3$\\
\textbf{Overlap(0.8) 1D-VBR} &   & $0.616$ & $0.746$ & $0.918$ &   & $0.338$ & $0.546$ & $0.902$ &   & $11.2$ & $18.4$ & $92.5$\\
\textbf{Overlap(0.7) 1D-VBR} &   & $0.619$ & $0.746$ & $0.918$ &   & $0.331$ & $0.547$ & $0.904$ &   & $10.9$ & $19$ & $69$\\
\textbf{Min Blocks 1D-VBR} &   & $0.606$ & $0.84$ & $1.32$ &   & $0.351$ & $0.493$ & $0.806$ &   & $13.9$ & $19$ & $66.3$\\
\textbf{Min Memory 1D-VBR} &   & $0.562$ & $0.633$ & $0.806$ &   & $0.347$ & $0.535$ & $0.723$ &   & $11.7$ & $17.3$ & $32.2$\\
\textbf{Min Compute 1D-VBR} &   & $0.585$ & $0.664$ & $0.857$ &   & $0.317$ & $0.45$ & $0.589$ &   & $12.4$ & $16.9$ & $29.1$\\
\textbf{Strict VBR} &   & $0.521$ & $0.696$ & $0.983$ &   & $0.543$ & $1.55$ & $2.16$ &   & $29.9$ & $\infty$ & $\infty$\\
\textbf{Overlap(0.9) VBR} &   & $0.575$ & $0.819$ & $1.18$ &   & $0.508$ & $0.769$ & $1.52$ &   & $28.8$ & $71.4$ & $\infty$\\
\textbf{Overlap(0.8) VBR} &   & $0.59$ & $0.834$ & $1.19$ &   & $0.487$ & $0.729$ & $1.56$ &   & $28.7$ & $68$ & $\infty$\\
\textbf{Overlap(0.7) VBR} &   & $0.605$ & $0.847$ & $1.2$ &   & $0.501$ & $0.777$ & $1.55$ &   & $28.5$ & $79.2$ & $\infty$\\
\textbf{Min Blocks VBR} &   & $0.67$ & $1.06$ & $2.8$ &   & $0.411$ & $0.573$ & $1.81$ &   & $29$ & $47.1$ & $\infty$\\
\textbf{Min Memory VBR} &   & $0.432$ & $0.56$ & $0.75$ &   & $0.451$ & $0.6$ & $1.31$ &   & $38.9$ & $62.1$ & $\infty$\\
\textbf{Min Compute VBR} &   & $0.556$ & $0.721$ & $1.06$ &   & $0.33$ & $0.512$ & $1.11$ &   & $36.2$ & $49.7$ & $\infty$\\
\end{tabular}

    \caption{The distribution of the normalized memory usage, multiply time, and
    critical point \eqref{eq:1dvbrcriticalpoint} over each matrix,
    transposition, and precision in our test set of matrices. A critical point
    of $\infty$ indicates that no speedup was observed. All metrics are normalized
    to unpartitioned CSR representation. The ``Q. 1'', ``Med'', and ``Q. 3'',
    columns refer to the least quartile, median, and greatest quartile,
    respectively, of the corresponding distribution. The three partitioners with
    ``Min'' objectives and the 1D-VBR format are contributions of this work.}
    \label{tbl:results_spread}
\end{table*}

Our results show that for both VBR and 1D-VBR, using Algorithm
\ref{alg:optimalpartitioner} to optimize memory usage or multiplication time
consistently produces smaller or faster VBR and 1D-VBR formats than any other
partitioner, respectively. Practitioners using variably blocked formats in
practice can expect to see improved compression or performance simply by
switching to our improved partitioning algorithm to split the matrix, though
they may need to use different cost models depending on their application.

The 1D-VBR format is the more performant choice, while VBR is capable of better
compression.  Using heuristics to partition for 1D-VBR often resulted in faster
matrix multiplication than the best VBR partitions. The superior performance of
1D-VBR is due to implementation differences between the two formats. All blocks
in a 1D-VBR block row are the same size, but each VBR block requires a random
access to the column partition $\Phi$, and a conditional jump to the appropriate
block width. Our memory minimizing partitioner was unmatched by any other
partitioner for compressing either format, but using this partitioner on VBR
format usually yielded the highest compression.

While the heuristics produced quality partitions on some matrices, minimizing
our cost models was effective on the entire test set, evinced by the superior
third quartile of memory usage and multiplication time for the ``Min Memory''
and ``Min Compute'' partitioners. A disadvantage of the overlap heuristic is
the need to set the $\rho$ parameter. We plotted values which worked well in
our tests, but it may be necessary to test several settings to find the best
one, an expensive process \cite{vuduc_fast_2005-1}.

When the precision is reduced, the reference CSR implementation deviates more
from the best (blocked) method. Thus, we can say that the benefits of
blocking are amplified when the precision is halved. Performance improves
because our vector units fit twice the elements, and compression improves
because the 64-bit integer indices become twice as large as the individual
elements, so there is more incentive to merge blocks. We expect these trends
to continue for further reduced precisions, especially when the
index precision remains the same.

Our dynamic programming algorithm was similarly efficient to other
heuristics. The distribution of critical points shows that the benefits of
partitioning outweighed the runtime of partitioning and conversion within a
similar number of multiplications. However, the simplicity of the ``Strict''
heuristic and its specialized conversion to 1D-VBR format made it a practical
choice in cases where the matrix had clear blocks. The critical points also
show the added partitioning and conversion costs and reduced performance of
VBR format, only breaking even after triple the multiplications.

\section{Conclusions}

We present an algorithm for optimally partitioning rows in a VBR matrix when
the columns are fixed. We apply the algorithm to a novel specialization of
VBR where only rows are blocked, 1D-VBR. Our algorithm effectively optimizes
partitions under a diverse family of cost models. We show that minimizing an
empirical cost model for SpMV runtime in 1D-VBR format yields the best
performance, and minimizing a cost model for memory consumption in VBR format
yields the best compression. The benefits of blocking are amplified as the
precision is reduced.

Existing algorithms using variably blocked formats
stand to benefit from employing these techniques to pick better partitions,
without any changes to the algorithms themselves. Since simultaneous
partitioning of rows and columns for VBR format is NP-hard and 1D-VBR is
faster and supports diverse cost models, practitioners using two-dimensional
aligned block algorithms should consider one-dimensional reformulations when
possible.

We hope to see our techniques applied to new cost models and contiguous
partitioning problems, such as block decompositions with sparse blocks
\cite{buluc_representation_2008, buluc_parallel_2009, hong_adaptive_2019,
namashivayam_variable-sized_2021}.

\newpage
\bibliographystyle{unsrt}
\bibliography{VBR}

\newpage
\clearpage

\appendix

\section{Finding Optimal VBR Partitions is NP-Hard}\label{app:vbrblockingnphard}

In this section, we show that Problem \ref{prob:vbrblocking}, finding the row
and column partition which maximizes a cost model $f(A, \Pi, \Phi)$ of the VBR
representation of some matrix $A$, is NP-Hard by reduction from the Maximum
Cut problem, one of Karp's 21 NP-Complete problems
\cite{karp_reducibility_1972, papadimitriou_optimization_1991}.

We restate the Maximum Cut problem here for convenience.
\begin{Definition}[Maximum Cut]\label{prob:maxcut}
Given an undirected graph $G = (V, E)$ with $m$ nodes and $n$ edges, split
$V$ into two sets $C_1 \subset V$ and $C_2 \subset V$ where $C_1 \cap C_2 =
\emptyset$ and the number of edges between $C_1$ and $C_2$ is maximized.
\end{Definition}

\begin{theorem}\label{thm:vbrblockingnphard}
Problem \ref{prob:vbrblocking} is NP-Hard for any $u_{\max} \geq 2$ and
$w_{\max} \geq 2$ and cost functions of the form
\begin{equation} \tag{\ref{eq:nphardcost}}
    f(A, \Pi, \Phi) = s\cdot N_{\text{index}} + N_{\text{value}}
\end{equation}
where $s \geq 1$ is constant.
\end{theorem}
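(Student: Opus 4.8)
The plan is to reduce from Maximum Cut (Definition~\ref{prob:maxcut}). Given a graph $G=(V,E)$ with $m$ vertices and $n$ edges, I would construct in polynomial time a sparse matrix $A$ --- for the fixed parameters $s$, $u_{\max}$, $w_{\max}$ in the theorem --- so that the minimum of $f(A,\Pi,\Phi)=s\,N_{\text{index}}+N_{\text{value}}$ over all contiguous partitions equals $C_G-\delta\cdot\mathrm{MaxCut}(G)$ for an explicitly computable constant $C_G$ and a fixed $\delta>0$. An optimal partition then reads off a maximum cut and conversely, giving NP-hardness. The matrix has $O(m+n)$ rows and columns and $O(n)$ nonzeros, so the reduction is polynomial.

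For the layout of $A$, I would devote a consecutive pair of rows $R_v$ to each vertex $v$ and a consecutive pair of columns $C_e$ to each edge $e$, interleaving ``separator'' rows and columns (empty, or carrying a single private nonzero) chosen so that, by a local exchange argument, one may assume no block of an optimal partition crosses a separator --- this also neutralises any $u_{\max}>2$ or $w_{\max}>2$, since wider/taller blocks would only span separators. This sanitation reduces attention to the ``canonical'' partitions: an independent binary choice per vertex $v$ of whether to merge $R_v$ into one height-$2$ block row (``side $0$'') or split it into two height-$1$ rows (``side $1$''), and an independent binary choice per edge $e$ of whether $C_e$ is merged. The induced $2$-colouring of $V$ is the object I will relate to cuts of $G$.

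At each incidence $(v,e)$ I place a gadget: a fixed $O(1)$-nonzero pattern inside the $2\times 2$ cell $R_v\times C_e$, using orientations that distinguish the first from the second endpoint of $e$. The gadget is to be designed so that, summing the contributions of $R_a\times C_e$ and $R_b\times C_e$ for $e=(a,b)$ and optimising over the single binary choice for $C_e$, the cost attributed to $e$ equals a fixed constant when $a,b$ lie on the same side and that constant minus $\delta$ when they lie on different sides --- a min-uncut/``not-all-equal'' gadget. Two points need care: (i) once we are at an optimum the column choice for $C_e$ must be forced by the two vertex choices, so it can be compiled away; and (ii) since $s\ge 1$ is an \emph{arbitrary} constant, the sign of the cost gap must not depend on $s$ --- the clean way to ensure this is to make the gadget induce the \emph{same} number of nonzero blocks in all four configurations of $\{\text{side}(a),\text{side}(b)\}$, so that the $s\,N_{\text{index}}$ term is constant over canonical partitions and only $N_{\text{value}}$ separates cut from uncut; replicating each gadget nonzero with a fixed multiplicity (or adding a few always-present nonzeros) gives the slack needed to arrange this.

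Given such a gadget, the conclusion is bookkeeping: for every canonical partition $f(A,\Pi,\Phi)=(\text{separator and baseline terms})+\sum_{e\in E}(\text{gadget contribution})=C_G-\delta\cdot|\{e:\text{endpoints of }e\text{ on different sides}\}|$, and every partition can be improved to a canonical one; hence $\min f=C_G-\delta\cdot\mathrm{MaxCut}(G)$, and the sides of an optimal $\Pi$ form a maximum cut. The main obstacle is the joint task of exhibiting the concrete gadget pattern and proving the sanitation lemma: one must verify by a finite case analysis over the constantly many local block configurations (including those merging across vertex/edge boundaries and, when $u_{\max},w_{\max}>2$, those using blocks taller or wider than $2$) that canonical partitions dominate, that the per-edge cost is affine in the ``is-cut'' indicator, and that the gap has the correct sign for every $s\ge1$. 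Once the gadget is fixed, everything else follows from summation and the polynomial size bound.
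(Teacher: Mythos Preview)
Your overall architecture---reduce from \textsc{Max-Cut}, lay out vertices as row groups and edges as column groups in an incidence-matrix pattern, insert separators, argue a sanitation lemma reducing to canonical partitions, then read off the cut---is exactly the paper's strategy. But two of your concrete design choices diverge from the paper in ways that would likely make the gadget construction fail or at least leave the real work undone.

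First, encoding the side of a vertex as ``merge the two rows'' versus ``split the two rows'' is asymmetric: the two sides produce blocks of different heights, so ``both endpoints on side~0'' and ``both endpoints on side~1'' will in general have different costs, and no per-vertex additive correction fixes the \emph{pairwise} interaction term. The paper avoids this by using \emph{three} rows per vertex and encoding the side as which adjacent pair is merged ($\{1,2\}$ versus $\{2,3\}$); both choices have the same shape (a $2$-row block plus a singleton), so the two ``same side'' cases are automatically symmetric and the edge cost really is affine in the cut indicator. Second, your plan to force $N_{\text{index}}$ to be constant across all canonical configurations is both stronger than needed and hard to realise with small gadgets (merging rows or columns strictly reduces the block count on any nontrivial pattern). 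The paper simply arranges that the cut configuration beats the uncut one in \emph{both} $N_{\text{index}}$ and $N_{\text{value}}$ (the gap works out to $3+s$), so the sign is correct for every $s\ge 1$ without any balancing.

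The part you flag as ``the main obstacle'' is indeed where the content lies, and the paper's solution is heavier than your sketch suggests: the gadgets are $\mu\times\mu$ with $\mu$ depending on $s$, containing a $3\times3$ diagonal/antidiagonal core plus $\mu_2$ replicated dense $1\times3$ patterns and $2\mu_3$ replicated single-entry patterns, separated by width-$\mu_1=\lfloor s+1\rfloor$ bands of zeros. The filler width is what drives the sanitation lemma (mixing filler and non-filler always loses at least $\mu_1>s$ values per block created), and the replication counts $\mu_2,\mu_3$ are tuned so that an exhaustive case analysis over the $4\times4$ local row/column partitions of the $3\times3$ core shows the two pairwise choices strictly dominate all others. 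Your proposal identifies the right lemma to prove but does not supply the mechanism that makes it true.
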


\begin{proof}
Assume we are given an instance of Maximum Cut (Problem \ref{prob:maxcut}).
We first define a matrix $A$ in terms of $G$ and $s$, then show a
correspondence between a class of partitions of $A$ and cuts in $G$. Finally,
we show that the $\Pi$ and $\Phi$ which optimize \eqref{eq:nphardcost} correspond
to a maximum cut through $G$.

Let $A$ be an $\mu m \times \mu n$ matrix of zeros and nonzeros, where nonzeros are
represented with $x$. Unless stated otherwise, entries of $A$ are defined to
be zero. Fix an ordering of the edges of $G$, and let $e_j = (i_1, i_2)$
where $i_1 < i_2$ be the $j^{th}$ edge in this ordering of $G$. We will
insert a $\mu \times \mu$ gadget into $A$ at each endpoint of $e_j$, where
\begin{align}
    \mu &= 3 + \mu_1 + (1 + \mu_1)\mu_2 + 2(1 + \mu_1)\mu_3\\
    \mu_1 &= \lfloor s + 1 \rfloor\\
    \mu_2 &= 32\\
    \mu_3 &= \lceil 28s - 10 \rceil
\end{align}
These constants depend on the relative weights that $s$ assigns to each
block and the size of each block. They are larger to make the proof shorter;
making them large allows us to upper bound $\mu_3$ by $28s-9$ when calculating
the cost of each gadget. To give an example of some smaller constants, if $s
= 1$, we can use $\mu_1 = 2, \mu_2 = 3, \mu_3 = 1$.

If we think of $A$ as being tiled with $\mu \times \mu$ tiles, the placement of
these tiles is analogous to the incidence matrix representation of $G$, so that
rows of tiles correspond to vertices and columns of tiles correspond to edges.
We insert the gadget $B_1$ at the intersection of the $i_1^{th}$ tile row and
$j^{th}$ tile column.
\[
B_1 =
\left[\begin{array}{cccccccccc}
 x & & & \cdots & x & \cdots & x & \cdots & & \cdots\\
 & x & & \cdots & x & \cdots & & \cdots & & \cdots\\
 & & x & \cdots & x & \cdots & & \cdots & x & \cdots\\
 \vdots & \vdots & \vdots & & & & & & & \\
 x & x & x & & & & & & \\
 \vdots & \vdots & \vdots & & & & & & & \\
 x & & & & & & & & & \\
 \vdots & \vdots & \vdots & & & & & & & \\
 & & x & & & & & & & \\
 \vdots & \vdots & \vdots & & & & & & & \\
\end{array}\right]
\]
We insert the gadget $B_2$ at the intersection of the $i_2^{th}$ tile row and
$j^{th}$ tile column.
\[
B_2 =
\left[\begin{array}{cccccccccc}
 & & x & \cdots & x & \cdots & x & \cdots & & \cdots\\
 & x & & \cdots & x & \cdots & & \cdots & & \cdots\\
 x & & & \cdots & x & \cdots & & \cdots & x & \cdots\\
 \vdots & \vdots & \vdots & & & & & & & \\
 x & x & x & & & & & & \\
 \vdots & \vdots & \vdots & & & & & & & \\
 x & & & & & & & & & \\
 \vdots & \vdots & \vdots & & & & & & & \\
 & & x & & & & & & & \\
 \vdots & \vdots & \vdots & & & & & & & \\
\end{array}\right]
\]

where the upper left patterns occur once, the patterns second to the right
and the bottom are repeated $\mu_2$ times, and the two rightmost and bottommost
patterns are repeated $\mu_3$ times. All patterns are followed by $\mu_1$ rows or
columns of zeros. Figure \ref{fig:biggadgetmatrix} gives an example of $A$ for
some $G$.

The gadgets are identical except for the upper left $3 \times 3$ pattern.
Thus, the patterns on the top of the gadgets are column-aligned and the
patterns on the left are row-aligned across gadget rows and gadget columns.
We refer to the resulting fully zero $\mu_1$ rows (resp. columns) as
\textbf{filler} rows (resp. columns).

We start by arguing that it is never optimal to produce a partition with a
row part that contains both filler rows and non-filler rows. A symmetric
argument holds for the columns.

First, consider the case where the row part contains filler rows on the top
or bottom. Separating these rows from the part reduces the sizes
$N_{\text{value}}$ of the blocks in that part without changing the number
$N_{\text{index}}$ of blocks, so the part cannot have been optimal.

Second, if the row part does not start or end with a filler row, it must
contain filler rows. Since the filler rows in $A$ occur in contiguous groups
of size $\mu_1$, this part must contain such a group. Consider a block in this
part. If the block contains nonzeros on only one side of the filler rows,
then separating the rows strictly reduces the size of the block without
adding any new blocks. If the block contains nonzeros on both sides of the
filler rows, then removing the rows creates a block, but deletes at least
$\mu_1$ explicitly stored zero values. Since $\mu_1 > s$, separating these filler
rows still reduces the cost of the partition, so it cannot have been optimal.

Therefore, optimal partitions do not merge different patterns together. We
won't concern ourselves with whether the filler rows have been merged
together, since it doesn't change the cost function. Since the patterns on
top consist of only one column, and the patterns on the side consist of only
one row, the only undetermined piece of our optimal partition is the
partition of the first three rows and columns of each gadget row and gadget
column, respectively.

There remains only four cases for the rows. Either each row lies in a
separate part, all rows share a part, the first two rows share a part, or the
last two rows share a part. A symmetric argument holds for the columns. 

Sections \ref{app:vbrblockingnphard:happygadget} and
\ref{app:vbrblockingnphard:sadgadget} exhaustively check that for all cases
where both the first three rows and columns of a gadget
have two parts each ($\Pi, \Phi \in \{[1{:}1, 2{:}3, ...], [1{:}2, 3{:}3, ...]\}$),
\begin{equation}\label{eq:happygadget}
    f(B_1, \Pi, \Phi) \leq 146 + 263s + 112s^2
\end{equation}
and that in all the other cases,
\begin{equation}\label{eq:sadgadget}
    f(B_1, \Pi, \Phi) \geq 147 + 263s + 112s^2
\end{equation}
The exhaustive proofs for $B_2$ are symmetric, so we omit them.

Assume we start with a row and column partition where only the first three
rows and columns of each gadget share parts in the partition. For every row
or column part with three members, we split off one row or column into a
different part. For any case where the first three rows or columns of a
gadget row or column all belong to different parts, we merge two of the rows
or columns. For every gadget in our initial partition whose first three rows
and columns had two parts each, it's blocks will be unchanged. For every other gadget,
the cost will be strictly reduced. Thus, optimal partitions only merge pairs
of rows and columns, and these pairs occur in the first three rows or columns
of each gadget row or column. In this case, the cost of every subassembly is
the same except for the upper left $3 \times 3$ pattern of each gadget.
Therefore, the remainder of the argument focuses on these assemblies.

At this point, we can establish a correspondence between cuts in the graph
and partitions. Let $(C_1, C_2)$ be a cut in the graph. We will define a row
partition $\Pi$ corresponding to this cut. Unless stated otherwise, rows in
this partition are assigned to distinct parts. If a vertex $i$ lies in $C_1$,
then we merge the first and second rows of the corresponding gadget row. If
our vertex $i$ lies in $C_2$, then we merge the second and third rows of the
gadget row. Consider the gadgets corresponding to an edge $e_j = (i_1,
i_2)$. Notice that if vertices $i_1$ and $i_2$ lie in the same part, $C_1$
for example, we have one of the following situations:
\[
\begin{array}{cc}\makecell{v_i \in V_1 \\ \Pi=[1{:}2, 3{:}3]} & 
\left[\begin{array}{ccc}\cline{1-1} \cline{2-2} 
\multicolumn{1}{|c}{x}& \multicolumn{1}{c|}{0}& \multicolumn{1}{c}{}\\
\multicolumn{1}{|c}{0}& \multicolumn{1}{c|}{x}& \multicolumn{1}{c}{}\\\cline{1-1} \cline{2-2} \cline{3-3} 
\multicolumn{1}{c}{}& \multicolumn{1}{c|}{}& \multicolumn{1}{c|}{x}\\\cline{3-3} 
\end{array}\right]
\\ & 
\\ \makecell{v_{i'} \in V_1 \\ \Pi=[1{:}2, 3{:}3]} &
\left[\begin{array}{ccc}\cline{1-1} \cline{2-2} \cline{3-3} 
\multicolumn{1}{|c}{0}& \multicolumn{1}{c|}{0}& \multicolumn{1}{c|}{x}\\
\multicolumn{1}{|c}{0}& \multicolumn{1}{c|}{x}& \multicolumn{1}{c|}{0}\\\cline{1-1} \cline{2-2} \cline{3-3} 
\multicolumn{1}{|c}{x}& \multicolumn{1}{c|}{0}& \multicolumn{1}{c}{}\\\cline{1-1} \cline{2-2} 
\end{array}\right]
\\ & 
\\ & e_j, \Phi=[1{:}2, 3{:}3]
\end{array}
\text{ or }
\begin{array}{c} 
\left[\begin{array}{ccc}\cline{1-1} \cline{2-2} \cline{3-3} 
\multicolumn{1}{|c|}{x}& \multicolumn{1}{c}{0}& \multicolumn{1}{c|}{0}\\
\multicolumn{1}{|c|}{0}& \multicolumn{1}{c}{x}& \multicolumn{1}{c|}{0}\\\cline{1-1} \cline{2-2} \cline{3-3} 
\multicolumn{1}{c|}{}& \multicolumn{1}{c}{0}& \multicolumn{1}{c|}{x}\\\cline{2-2} \cline{3-3} 
\end{array}\right]
\\
\\
\left[\begin{array}{ccc}\cline{2-2} \cline{3-3} 
\multicolumn{1}{c|}{}& \multicolumn{1}{c}{0}& \multicolumn{1}{c|}{x}\\
\multicolumn{1}{c|}{}& \multicolumn{1}{c}{x}& \multicolumn{1}{c|}{0}\\\cline{1-1} \cline{2-2} \cline{3-3} 
\multicolumn{1}{|c|}{x}& \multicolumn{1}{c}{}& \multicolumn{1}{c}{}\\\cline{1-1} 
\end{array}\right]
\\
\\e_j, \Phi=[1{:}1, 2{:}3]
\end{array}
\]

The cost of either arrangement is $f = 13 + 5s$. However, if vertices $i_1$
and $i_2$ lie in different parts, $C_1$ and $C_2$, for instance, the
situation is as follows:
\[
\begin{array}{cc}\makecell{v_i \in V_1 \\ \Pi=[1{:}2, 3{:}3]} & 
\left[\begin{array}{ccc}\cline{1-1} \cline{2-2} 
\multicolumn{1}{|c}{x}& \multicolumn{1}{c|}{0}& \multicolumn{1}{c}{}\\
\multicolumn{1}{|c}{0}& \multicolumn{1}{c|}{x}& \multicolumn{1}{c}{}\\\cline{1-1} \cline{2-2} \cline{3-3} 
\multicolumn{1}{c}{}& \multicolumn{1}{c|}{}& \multicolumn{1}{c|}{x}\\\cline{3-3} 
\end{array}\right]
\\ & 
\\ \makecell{v_{i'} \in V_2 \\ \Pi=[1{:}1, 2{:}3]} &
\left[\begin{array}{ccc}\cline{3-3} 
\multicolumn{1}{c}{}& \multicolumn{1}{c|}{}& \multicolumn{1}{c|}{x}\\\cline{1-1} \cline{2-2} \cline{3-3} 
\multicolumn{1}{|c}{0}& \multicolumn{1}{c|}{x}& \multicolumn{1}{c}{}\\
\multicolumn{1}{|c}{x}& \multicolumn{1}{c|}{0}& \multicolumn{1}{c}{}\\\cline{1-1} \cline{2-2} 
\end{array}\right]
\\ & 
\\ & e_j, \Phi=[1{:}2, 3{:}3]
\end{array}
\text{ or }
\begin{array}{c} 
\left[\begin{array}{ccc}\cline{1-1} \cline{2-2} \cline{3-3} 
\multicolumn{1}{|c|}{x}& \multicolumn{1}{c}{0}& \multicolumn{1}{c|}{0}\\
\multicolumn{1}{|c|}{0}& \multicolumn{1}{c}{x}& \multicolumn{1}{c|}{0}\\\cline{1-1} \cline{2-2} \cline{3-3} 
\multicolumn{1}{c|}{}& \multicolumn{1}{c}{0}& \multicolumn{1}{c|}{x}\\\cline{2-2} \cline{3-3} 
\end{array}\right]
\\
\\
\left[\begin{array}{ccc}\cline{2-2} \cline{3-3} 
\multicolumn{1}{c|}{}& \multicolumn{1}{c}{0}& \multicolumn{1}{c|}{x}\\\cline{1-1} \cline{2-2} \cline{3-3} 
\multicolumn{1}{|c|}{0}& \multicolumn{1}{c}{x}& \multicolumn{1}{c|}{0}\\
\multicolumn{1}{|c|}{x}& \multicolumn{1}{c}{0}& \multicolumn{1}{c|}{0}\\\cline{1-1} \cline{2-2} \cline{3-3} 
\end{array}\right]
\\
\\e_j, \Phi=[1{:}1, 2{:}3]
\end{array}
\]

Since these are the only two gadgets in the column corresponding to this
edge, we are free to choose a column partition. The above partition of
minimal cost has a cost of $f = 10 + 4s$, less than the case where the
vertices shared an edge.

Thus, the cost of an optimal column partition corresponding to the row
partition representation of a cut can be expressed as a constant minus $3 +
s$ times the number of cut edges. Since there is a bijection between cuts and
our ``pairwise'' row partitions (one of which we know to be optimal),
producing an optimal partition of rows and columns is equivalent to finding
the maximum cut in $G$. If we treat $s$ as a constant, our reduction imposes
only a constant factor of overhead in $m$ and $n$, and Problem
\ref{prob:maxcut} is reducible to Problem \ref{prob:vbrblocking} in
polynomial time.
\end{proof}

\begin{figure*}
    \centering
    \[
     G =
    \begin{tikzpicture}[auto, baseline=0cm, node distance=3cm, every loop/.style={},
        thick,main node/.style={circle,draw,font=\sffamily\Large\bfseries}]

    \node[main node] (1) {1};
    \node[main node] (2) [above right of=1] {2};
    \node[main node] (3) [below right of=2] {3};
    \node[main node] (4) [below right of=1] {4};

    \path[every node/.style={font=\sffamily\small}]
    (1) edge node {} (2)
    (1) edge node {} (3)
    (1) edge node {} (4)
    (2) edge node {} (3);

    \draw[dashed] 
    ([yshift=0.75cm]$ (1)!0.5!(2) $ ) -- 
    ([yshift=-0.75cm]$ (1)!0.5!(4) $ );
    \end{tikzpicture}
    \]

    \[
     A = \resizebox{0.93\textwidth}{!}{\input{thebiggadgetmatrix}}
    \]
    \caption{An example of our reduction for a simple graph $G$.  Recall that
    rows of gadgets correspond to vertices, and columns of gadgets correspond to
    edges, so that the gadget positions in $A$ are analogous to an incidence
    matrix (not an adjacency matrix) representation of $G$. The maximum cut in
    $G$ is shown, and the corresponding optimal partition of $A$ under the cost
    function where $s = 1$ is shown.  Notice that gadgets corresponding to edges
    that cross the cut cost less than the gadgets of edges that do not cross the
    cut.}\label{fig:biggadgetmatrix}
\end{figure*}

Notice that our proof of Theorem \ref{thm:vbrblockingnphard} makes no
assumption on the size of $u_{\max}$ or $w_{\max}$, asking only that they are
at least 2 and enforcing pairwise partitions with the cost function. We chose
this proof technique since the cost function is realistic and in some
situations there may be no limit on the sizes of blocks ($w_{\max} = n$).
However, it requires large gadgets when $s$ is large. Alternative gadgets can
be used in situations where $s$ is large by instead using $u_{\max}$ and
$w_{\max}$ to constrain the sizes of blocks and choosing different gadgets.
This leads us to Theorem \ref{thm:vbrblockingnphard2}, which corresponds to
the case where $s$ is large enough that $N_{\text{value}}$ would be
considered negligibly small.

\begin{theorem}\label{thm:vbrblockingnphard2}
Problem \ref{prob:vbrblocking} is NP-Hard for any $u_{\max} \geq 2$ and
$w_{\max} \geq 2$ and the cost function
\begin{equation} \tag{\ref{eq:nphardcost2}}
    f(A, \Pi, \Phi) = N_{\text{index}}.
\end{equation}
\end{theorem}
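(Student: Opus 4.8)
The plan is to reduce from Maximum Cut (Definition~\ref{prob:maxcut}) using a construction that mirrors the proof of Theorem~\ref{thm:vbrblockingnphard}, but with gadgets redesigned so that the block-size caps $u_{\max}$ and $w_{\max}$, rather than the $N_{\text{value}}$ term, are what pin down the shape of an optimal partition. The motivating observation is that for the objective $f = N_{\text{index}} = \sum_k |\gamma_k|$, merging two adjacent row (or column) parts changes the cost by $-|\gamma_k \cap \gamma_{k+1}| \le 0$, so merging is \emph{always} weakly beneficial and the caps $u_{\max}, w_{\max}$ are exactly what prevents collapsing everything to a single block. So, given a graph $G$, I would build a matrix $A$ whose tiles are arranged like the incidence matrix of $G$ --- the $i$-th tile row for vertex $i$, the $j$-th tile column for edge $e_j = (i_1, i_2)$, with gadget $B_1$ in tile $(i_1,j)$ and $B_2$ in tile $(i_2,j)$ --- where each vertex gadget row and each edge gadget column is sized as a function of $u_{\max}$ (resp.\ $w_{\max}$): slightly taller (resp.\ wider) than one block, with a small constant-size ``core'' of rows (resp.\ columns) carrying distinct sparsity patterns and the remaining rows (resp.\ columns) being uniform padding, so that in a maximal-merging optimal partition the only freedom left is, per vertex, which consecutive pair of core rows to separate (encoding the side of the cut), and, per edge, which consecutive pair of core columns to separate.

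The argument then follows the same skeleton as before. First I would establish a structural lemma: in any optimal partition (i) there is an optimal partition respecting all gadget-row and gadget-column boundaries, because any block straddling two gadgets only fragments the partition elsewhere without recouping the cost; and (ii) within each gadget the padding rows and columns are merged in the unique canonical way forced by the caps, leaving precisely one ``phase bit'' per vertex gadget and one per edge gadget. This collapses the optimization to a choice of one bit per vertex together with one bit per edge, and I would set up the bijection between these ``canonical'' partitions and cuts of $G$ exactly as in the proof of Theorem~\ref{thm:vbrblockingnphard}, with the vertex bit recording membership in $C_1$ versus $C_2$.

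Next I would carry out the case analysis on an edge gadget, now counting only blocks. The claim to verify --- analogous to \eqref{eq:happygadget} and \eqref{eq:sadgadget}, but for the block-count objective and the new cores of $B_1$ and $B_2$ --- is that when the two endpoints of $e_j$ lie on the same side of the cut, the cores contribute some fixed number $a$ of nonzero blocks regardless of the column bit, whereas when $e_j$ crosses the cut the best column bit yields $a - c$ blocks for a positive constant $c$ independent of the instance. Summing over all edges then gives that $N_{\text{index}}$ of an optimal partition equals a constant (the fixed contribution of all padding and of all ``same-side'' cores) minus $c$ times the number of cut edges, so minimizing $N_{\text{index}}$ over $A$ is equivalent to maximizing the cut of $G$. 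Since every gadget has size polynomial in $u_{\max}$ and $w_{\max}$ (fixed parameters), $A$ has size polynomial in $|V|+|E|$, the reduction is polynomial time, and an optimal partition of $A$ produces a maximum cut of $G$, completing the proof.

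The main obstacle is the gadget design underlying steps one and two: without the $N_{\text{value}}$ penalty there is uniform pressure to merge as aggressively as the caps allow, so the padding sizes and the core patterns must be engineered so that every deviation from the canonical partition (a straddling block, an unintended merge within a gadget, or a wrong split location) strictly increases $N_{\text{index}}$, and this must hold for \emph{every} admissible pair $u_{\max}, w_{\max} \ge 2$, while still leaving exactly the intended one-bit-per-vertex and one-bit-per-edge degrees of freedom and producing a positive gap $c$ between cut and non-cut edges. Pinning down concrete cores for $B_1$ and $B_2$ and checking by exhaustive case analysis (as in Sections~\ref{app:vbrblockingnphard:happygadget}--\ref{app:vbrblockingnphard:sadgadget}) that the block-count inequalities hold in every configuration is the delicate, though essentially routine, part of the work.
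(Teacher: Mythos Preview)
Your proposal is correct and follows essentially the same route as the paper: reduce from Maximum Cut via an incidence-matrix layout of gadgets, rely on the caps $u_{\max},w_{\max}$ (rather than $N_{\text{value}}$) to force splits, and show a constant block-count gap between cut and uncut edges. The paper's concrete gadget is simpler than the ``core plus uniform padding'' structure you sketch: each gadget is a dense $(u_{\max}+1)\times(w_{\max}+1)$ rectangle with two diagonally opposite corner entries removed ($B_1$ missing the upper-right and lower-left, $B_2$ the other two), surrounded by $u_{\max}$ zero rows and $w_{\max}$ zero columns so that no part can span two dense regions; this forces exactly one row split and one column split inside each dense region, the split positions encode the bits, and the case analysis yields $6$ versus $7$ blocks per edge column.
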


\begin{proof}
The proof is similar to that of Theorem \ref{thm:vbrblockingnphard}, but
because $u_{\max}$ and $w_{\max}$ are the only factors limiting the size of blocks, we will
use different gadgets. Since the rest of the proof is so similar, we only
describe the new gadgets.

Let $B_1$ be a $2u_{\max} + 1 \times 2w_{\max} + 1$
matrix whose upper left $u_{\max} + 1 \times w_{\max} + 1$ subregion is nonzero,
save for the upper right and lower left entries of the subregion, and zero
everywhere else. Let $B_2$ be the same as $B_1$, except $B_2$'s upper left
and lower right entries of the dense subregion are zero and the upper right
and lower left entries of the upper left subregion are nonzero.
For example, if $u_{\max} = w_{\max} = 3$, our gadgets are defined as:
\[
B_1 =
\left[\begin{array}{ccccc}
x & x & x & & \cdots \\
x & x & x & x & \cdots\\
x & x & x & x & \cdots\\
  & x & x & x & \cdots\\
\vdots & \vdots & \vdots & \vdots & \\
\end{array}\right], ~~~
B_2 =
\left[\begin{array}{ccccc}
  & x & x & x & \cdots \\
x & x & x & x & \cdots\\
x & x & x & x & \cdots\\
x & x & x &   & \cdots\\
\vdots & \vdots & \vdots & \vdots & \\
\end{array}\right]
\]

Our cost function asks us to minimize the total number of blocks. Since the
dense region is of size $u_{\max} + 1 \times w_{\max} + 1$, each gadget must
contribute at least three blocks to this cost function. This can be achieved
with one horizontal and one vertical split in the dense region that isolates
one of the zeros on the corners. Any other decomposition with a single
horizontal and single vertical split produces four blocks. Thus, when
vertices $i_1$ and $i_2$ lie on the same side of the cut, this corresponds to
one of the following situations (we keep our example value of $u_{\max} = w_{\max} =
3$):
\[
\begin{array}{cc}\makecell{v_i \in V_1, \Pi \\ = [1{:}3, 4{:}4]} & 
\left[\begin{array}{cccc}\cline{1-1} \cline{2-2} \cline{3-3} \cline{4-4} 
\multicolumn{1}{|c}{x}& \multicolumn{1}{c}{x}& \multicolumn{1}{c|}{x}& \multicolumn{1}{c|}{0}\\
\multicolumn{1}{|c}{x}& \multicolumn{1}{c}{x}& \multicolumn{1}{c|}{x}& \multicolumn{1}{c|}{x}\\
\multicolumn{1}{|c}{x}& \multicolumn{1}{c}{x}& \multicolumn{1}{c|}{x}& \multicolumn{1}{c|}{x}\\\cline{1-1} \cline{2-2} \cline{3-3} \cline{4-4} 
\multicolumn{1}{|c}{0}& \multicolumn{1}{c}{x}& \multicolumn{1}{c|}{x}& \multicolumn{1}{c|}{x}\\\cline{1-1} \cline{2-2} \cline{3-3} \cline{4-4} 
\end{array}\right]
\\ & 
\\ \makecell{v_{i'} \in V_1, \Pi \\ = [1{:}3, 4{:}4]}&
\left[\begin{array}{cccc}\cline{1-1} \cline{2-2} \cline{3-3} \cline{4-4} 
\multicolumn{1}{|c}{0}& \multicolumn{1}{c}{x}& \multicolumn{1}{c|}{x}& \multicolumn{1}{c|}{x}\\
\multicolumn{1}{|c}{x}& \multicolumn{1}{c}{x}& \multicolumn{1}{c|}{x}& \multicolumn{1}{c|}{x}\\
\multicolumn{1}{|c}{x}& \multicolumn{1}{c}{x}& \multicolumn{1}{c|}{x}& \multicolumn{1}{c|}{x}\\\cline{1-1} \cline{2-2} \cline{3-3} \cline{4-4} 
\multicolumn{1}{|c}{x}& \multicolumn{1}{c}{x}& \multicolumn{1}{c|}{x}& \multicolumn{1}{c}{}\\\cline{1-1} \cline{2-2} \cline{3-3} 
\end{array}\right]
\\ & 
\\ & e_j, \Phi = [1{:}3, 4{:}4]
\end{array}
\text{ or }
\begin{array}{c} 
\left[\begin{array}{cccc}\cline{1-1} \cline{2-2} \cline{3-3} \cline{4-4} 
\multicolumn{1}{|c|}{x}& \multicolumn{1}{c}{x}& \multicolumn{1}{c}{x}& \multicolumn{1}{c|}{0}\\
\multicolumn{1}{|c|}{x}& \multicolumn{1}{c}{x}& \multicolumn{1}{c}{x}& \multicolumn{1}{c|}{x}\\
\multicolumn{1}{|c|}{x}& \multicolumn{1}{c}{x}& \multicolumn{1}{c}{x}& \multicolumn{1}{c|}{x}\\\cline{1-1} \cline{2-2} \cline{3-3} \cline{4-4} 
\multicolumn{1}{c|}{}& \multicolumn{1}{c}{x}& \multicolumn{1}{c}{x}& \multicolumn{1}{c|}{x}\\\cline{2-2} \cline{3-3} \cline{4-4} 
\end{array}\right]
\\
\\
\left[\begin{array}{cccc}\cline{1-1} \cline{2-2} \cline{3-3} \cline{4-4} 
\multicolumn{1}{|c|}{0}& \multicolumn{1}{c}{x}& \multicolumn{1}{c}{x}& \multicolumn{1}{c|}{x}\\
\multicolumn{1}{|c|}{x}& \multicolumn{1}{c}{x}& \multicolumn{1}{c}{x}& \multicolumn{1}{c|}{x}\\
\multicolumn{1}{|c|}{x}& \multicolumn{1}{c}{x}& \multicolumn{1}{c}{x}& \multicolumn{1}{c|}{x}\\\cline{1-1} \cline{2-2} \cline{3-3} \cline{4-4} 
\multicolumn{1}{|c|}{x}& \multicolumn{1}{c}{x}& \multicolumn{1}{c}{x}& \multicolumn{1}{c|}{0}\\\cline{1-1} \cline{2-2} \cline{3-3} \cline{4-4} 
\end{array}\right]
\\
\\e_j, \Phi = [1{:}1, 2{:}4]
\end{array}
\]

When vertices $i_1$ and $i_2$ like
on different sides of the cut, we have:
\[
\begin{array}{cc}\makecell{v_i \in V_1, \Pi \\ = [1{:}3, 4{:}4]} & 
\left[\begin{array}{cccc}\cline{1-1} \cline{2-2} \cline{3-3} \cline{4-4} 
\multicolumn{1}{|c}{x}& \multicolumn{1}{c}{x}& \multicolumn{1}{c|}{x}& \multicolumn{1}{c|}{0}\\
\multicolumn{1}{|c}{x}& \multicolumn{1}{c}{x}& \multicolumn{1}{c|}{x}& \multicolumn{1}{c|}{x}\\
\multicolumn{1}{|c}{x}& \multicolumn{1}{c}{x}& \multicolumn{1}{c|}{x}& \multicolumn{1}{c|}{x}\\\cline{1-1} \cline{2-2} \cline{3-3} \cline{4-4} 
\multicolumn{1}{|c}{0}& \multicolumn{1}{c}{x}& \multicolumn{1}{c|}{x}& \multicolumn{1}{c|}{x}\\\cline{1-1} \cline{2-2} \cline{3-3} \cline{4-4} 
\end{array}\right]
\\ & 
\\ \makecell{v_{i'} \in V_2, \Pi \\ = [1{:}1, 2{:}4]} &
\left[\begin{array}{cccc}\cline{1-1} \cline{2-2} \cline{3-3} \cline{4-4} 
\multicolumn{1}{|c}{0}& \multicolumn{1}{c}{x}& \multicolumn{1}{c|}{x}& \multicolumn{1}{c|}{x}\\\cline{1-1} \cline{2-2} \cline{3-3} \cline{4-4} 
\multicolumn{1}{|c}{x}& \multicolumn{1}{c}{x}& \multicolumn{1}{c|}{x}& \multicolumn{1}{c|}{x}\\
\multicolumn{1}{|c}{x}& \multicolumn{1}{c}{x}& \multicolumn{1}{c|}{x}& \multicolumn{1}{c|}{x}\\
\multicolumn{1}{|c}{x}& \multicolumn{1}{c}{x}& \multicolumn{1}{c|}{x}& \multicolumn{1}{c|}{0}\\\cline{1-1} \cline{2-2} \cline{3-3} \cline{4-4} 
\end{array}\right]
\\ & 
\\ & e_j, \Phi = [1{:}3, 4{:}4]
\end{array}
\text{ or }
\begin{array}{c} 
\left[\begin{array}{cccc}\cline{1-1} \cline{2-2} \cline{3-3} \cline{4-4} 
\multicolumn{1}{|c|}{x}& \multicolumn{1}{c}{x}& \multicolumn{1}{c}{x}& \multicolumn{1}{c|}{0}\\
\multicolumn{1}{|c|}{x}& \multicolumn{1}{c}{x}& \multicolumn{1}{c}{x}& \multicolumn{1}{c|}{x}\\
\multicolumn{1}{|c|}{x}& \multicolumn{1}{c}{x}& \multicolumn{1}{c}{x}& \multicolumn{1}{c|}{x}\\\cline{1-1} \cline{2-2} \cline{3-3} \cline{4-4} 
\multicolumn{1}{c|}{}& \multicolumn{1}{c}{x}& \multicolumn{1}{c}{x}& \multicolumn{1}{c|}{x}\\\cline{2-2} \cline{3-3} \cline{4-4} 
\end{array}\right]
\\
\\
\left[\begin{array}{cccc}\cline{2-2} \cline{3-3} \cline{4-4} 
\multicolumn{1}{c|}{}& \multicolumn{1}{c}{x}& \multicolumn{1}{c}{x}& \multicolumn{1}{c|}{x}\\\cline{1-1} \cline{2-2} \cline{3-3} \cline{4-4} 
\multicolumn{1}{|c|}{x}& \multicolumn{1}{c}{x}& \multicolumn{1}{c}{x}& \multicolumn{1}{c|}{x}\\
\multicolumn{1}{|c|}{x}& \multicolumn{1}{c}{x}& \multicolumn{1}{c}{x}& \multicolumn{1}{c|}{x}\\
\multicolumn{1}{|c|}{x}& \multicolumn{1}{c}{x}& \multicolumn{1}{c}{x}& \multicolumn{1}{c|}{0}\\\cline{1-1} \cline{2-2} \cline{3-3} \cline{4-4} 
\end{array}\right]
\\
\\e_j, \Phi = [1{:}1, 2{:}4]
\end{array}
\]

By choosing the correct column partition, the case where vertices $i_1$ and
$i_2$ lie on different sides of the cut can be made to use only 6 blocks.
When these vertices are on the same side of the cut we require 7
blocks.
\end{proof}

Corollary \ref{cor:symvbrblockingnphard} addresses the hardness of the
symmetric case, when the row and column partitions are constrained to be the
same.

\begin{corollary}\label{cor:symvbrblockingnphard}
    Even when the row and column partitions are constrained to be the same,
    Problem \ref{prob:vbrblocking} is NP-Hard for any $u_{\max} \geq 2$,
    $w_{\max} \geq 2$ and the cost functions of Theorems
    \ref{thm:vbrblockingnphard} or \ref{thm:vbrblockingnphard2}.
\end{corollary}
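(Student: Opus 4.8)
The plan is to reduce from the \emph{asymmetric} version of the problem, which Theorems~\ref{thm:vbrblockingnphard} and~\ref{thm:vbrblockingnphard2} already establish to be NP-hard, by ``symmetrizing'' the hard matrix. Given a Maximum Cut instance, let $A$ be the $M \times N$ matrix produced by the reduction in the relevant theorem, instantiated (in the case of Theorem~\ref{thm:vbrblockingnphard2}) with both block bounds set to $u^{\star} := \min(u_{\max}, w_{\max})$; note $u^{\star} \geq 2$, and the reduction of Theorem~\ref{thm:vbrblockingnphard} works for any bounds at least $2$ since it enforces pairwise partitions through the cost. Form the symmetric $(M + W + N) \times (M + W + N)$ matrix
\[
A' = \begin{pmatrix} 0 & 0 & A \\ 0 & 0 & 0 \\ A^{T} & 0 & 0 \end{pmatrix},
\]
where the middle band has width $W := \max(u_{\max}, w_{\max})$. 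This is a polynomial-time construction, and $(A')^{T} = A'$. A valid shared partition of $A'$ must satisfy $|\Pi'_k| \leq u^{\star}$ for every part, since each part serves both as a row part (height bound $u_{\max}$) and a column part (width bound $w_{\max}$).

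The key observation is that the two diagonal blocks and the band of $A'$ are entirely zero, so a shared row/column partition of $A'$ decouples into a row partition $\Pi$ of $A$ and an \emph{independent} column partition $\Phi$ of $A$. Concretely: no part has size exceeding $u^{\star} \leq W$, so no part can span the band from the $A$ region into the $A^{T}$ region; any part that mixes band coordinates (which index all-zero rows and columns of $A'$) with non-band coordinates can be split so that the band coordinates form their own part(s) without increasing $N_{\text{index}}$ (all-zero rows and columns carry no nonzero blocks) and without increasing $N_{\text{value}}$ --- this is exactly the ``filler'' argument already used in the proof of Theorem~\ref{thm:vbrblockingnphard}, and it is weakly cost-decreasing for the pure $N_{\text{index}}$ objective of Theorem~\ref{thm:vbrblockingnphard2} as well. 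Hence some optimal symmetric partition $\Pi'$ of $A'$ splits at the band, and its restrictions $\Pi$ to the first $M$ coordinates and $\Phi$ to the last $N$ coordinates form a pair of partitions of $A$ with all parts of size at most $u^{\star}$. Because the first $M$ rows of $A'$ have nonzeros only in its last $N$ columns, and symmetrically, a short bookkeeping check gives $N_{\text{index}}(A', \Pi') = 2\,N_{\text{index}}(A, \Pi, \Phi)$ and $N_{\text{value}}(A', \Pi') = 2\,N_{\text{value}}(A, \Pi, \Phi)$, so $f(A', \Pi') = 2\,f(A, \Pi, \Phi)$ for both cost functions~\eqref{eq:nphardcost} and~\eqref{eq:nphardcost2} (neither has $\alpha$ terms, so the band contributes nothing). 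Conversely, any pair $(\Pi, \Phi)$ for $A$ lifts to a symmetric partition of $A'$ of cost $2\,f(A, \Pi, \Phi)$. Thus an optimal symmetric partition of $A'$ yields an optimal pair for $A$, which by Theorems~\ref{thm:vbrblockingnphard} or~\ref{thm:vbrblockingnphard2} yields a maximum cut in $G$.

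The step I expect to be the main obstacle is the boundary analysis: pinning down that an optimal symmetric partition really can be taken to split cleanly at the band, so that the decoupling into $(\Pi, \Phi)$ is exact. The subtlety is that a part straddling the band would, if its ``diagonal'' block $(S, S)$ were nonzero, contribute fewer index blocks when kept merged than when split, which could in principle favor straddling; choosing the band width at least $u^{\star}$ rules this out by forcing $(S, S)$ to be zero for every admissible part $S$ that meets both the $A$ region and the band. For the $N_{\text{index}}$-only objective one must also phrase the reduction as ``there exists an optimal partition of the desired form'' rather than ``every optimal partition has this form,'' since splitting off zero rows and columns is only weakly beneficial there. Everything else --- the factor-of-two cost identity and the appeal to the two theorems --- is routine.
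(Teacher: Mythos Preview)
Your proof is correct and follows the paper's route: embed $A$ and $A^{T}$ on the anti-diagonal of a larger symmetric matrix, argue that an optimal shared partition may be taken to split at the boundary, and read off $(\Pi,\Phi)$ with exactly half the cost. The paper uses the bare $\left(\begin{smallmatrix}0 & A\\ A^{T} & 0\end{smallmatrix}\right)$ and simply asserts the boundary split is free ``since no nonzeros are shared between the two sides''; your insertion of a zero band of width $\max(u_{\max},w_{\max})$ is a harmless refinement that makes this step airtight by forcing the diagonal self-block of any admissible straddling part to be zero, which is exactly the edge case you flagged as the main obstacle.
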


\begin{proof}
    We proceed by reducing the asymmetric case to the symmetric one. Consider
    an instance of Problem \ref{prob:vbrblocking} for an $m \times n$ matrix
    $A$. We then form the $m + n \times m + n$ matrix
    \[
        B = \left[\begin{array}{cc} 0 & A \\ A^{\intercal} & 0 \end{array}\right].
    \]
    We then solve our symmetric problem on $B$. Note that we can match or improve the
    cost of any partition with a group which spans both sides of our block
    matrix by simply splitting that group along the boundary, since no
    nonzeros are shared between the two sides of the boundary. We set $\Pi$
    to be the partition of the first $m$ rows/columns of $B$, and $\Phi$ to
    be the partition of the last $n$ rows/columns of $B$. Note that the cost
    of our symmetric partition on $B$ will be twice that of the resulting
    cost of our partition of $A$, and any partition of $A$ can induce a
    corresponding partition of $B$. Thus, a solution to our symmetric problem
    on $B$ would solve the original problem of partitioning $A$, and the symmetric
    case is NP-hard as well.
\end{proof}

\subsection{Exhaustively Checking \eqref{eq:happygadget}}\label{app:vbrblockingnphard:happygadget}
\paragraph{$\Pi = [1{:}1, 2{:}3, ...], \Phi = [1{:}1, 2{:}3, ...]$}
\[
\left[\begin{array}{cccccccccc}\cline{1-1} \cline{5-5} \cline{7-7} 
\multicolumn{1}{|c|}{x}& \multicolumn{1}{c}{}& \multicolumn{1}{c}{}& \multicolumn{1}{c|}{\cdots}& \multicolumn{1}{c|}{x}& \multicolumn{1}{c|}{\cdots}& \multicolumn{1}{c|}{x}& \multicolumn{1}{c}{\cdots}& \multicolumn{1}{c}{}& \multicolumn{1}{c}{\cdots}\\\cline{1-1} \cline{2-2} \cline{3-3} \cline{5-5} \cline{7-7} \cline{9-9} 
\multicolumn{1}{c|}{}& \multicolumn{1}{c}{x}& \multicolumn{1}{c|}{0}& \multicolumn{1}{c|}{\cdots}& \multicolumn{1}{c|}{x}& \multicolumn{1}{c}{\cdots}& \multicolumn{1}{c}{}& \multicolumn{1}{c|}{\cdots}& \multicolumn{1}{c|}{0}& \multicolumn{1}{c}{\cdots}\\
\multicolumn{1}{c|}{}& \multicolumn{1}{c}{0}& \multicolumn{1}{c|}{x}& \multicolumn{1}{c|}{\cdots}& \multicolumn{1}{c|}{x}& \multicolumn{1}{c}{\cdots}& \multicolumn{1}{c}{}& \multicolumn{1}{c|}{\cdots}& \multicolumn{1}{c|}{x}& \multicolumn{1}{c}{\cdots}\\\cline{2-2} \cline{3-3} \cline{5-5} \cline{9-9} 
\multicolumn{1}{c}{\vdots}& \multicolumn{1}{c}{\vdots}& \multicolumn{1}{c}{\vdots}& \multicolumn{1}{c}{}& \multicolumn{1}{c}{}& \multicolumn{1}{c}{}& \multicolumn{1}{c}{}& \multicolumn{1}{c}{}& \multicolumn{1}{c}{}& \multicolumn{1}{c}{}\\\cline{1-1} \cline{2-2} \cline{3-3} 
\multicolumn{1}{|c|}{x}& \multicolumn{1}{c}{x}& \multicolumn{1}{c|}{x}& \multicolumn{1}{c}{}& \multicolumn{1}{c}{}& \multicolumn{1}{c}{}& \multicolumn{1}{c}{}& \multicolumn{1}{c}{}& \multicolumn{1}{c}{}& \multicolumn{1}{c}{}\\\cline{1-1} \cline{2-2} \cline{3-3} 
\multicolumn{1}{c}{\vdots}& \multicolumn{1}{c}{\vdots}& \multicolumn{1}{c}{\vdots}& \multicolumn{1}{c}{}& \multicolumn{1}{c}{}& \multicolumn{1}{c}{}& \multicolumn{1}{c}{}& \multicolumn{1}{c}{}& \multicolumn{1}{c}{}& \multicolumn{1}{c}{}\\\cline{1-1} 
\multicolumn{1}{|c|}{x}& \multicolumn{1}{c}{}& \multicolumn{1}{c}{}& \multicolumn{1}{c}{}& \multicolumn{1}{c}{}& \multicolumn{1}{c}{}& \multicolumn{1}{c}{}& \multicolumn{1}{c}{}& \multicolumn{1}{c}{}& \multicolumn{1}{c}{}\\\cline{1-1} 
\multicolumn{1}{c}{\vdots}& \multicolumn{1}{c}{\vdots}& \multicolumn{1}{c}{\vdots}& \multicolumn{1}{c}{}& \multicolumn{1}{c}{}& \multicolumn{1}{c}{}& \multicolumn{1}{c}{}& \multicolumn{1}{c}{}& \multicolumn{1}{c}{}& \multicolumn{1}{c}{}\\\cline{2-2} \cline{3-3} 
\multicolumn{1}{c|}{}& \multicolumn{1}{c}{0}& \multicolumn{1}{c|}{x}& \multicolumn{1}{c}{}& \multicolumn{1}{c}{}& \multicolumn{1}{c}{}& \multicolumn{1}{c}{}& \multicolumn{1}{c}{}& \multicolumn{1}{c}{}& \multicolumn{1}{c}{}\\\cline{2-2} \cline{3-3} 
\multicolumn{1}{c}{\vdots}& \multicolumn{1}{c}{\vdots}& \multicolumn{1}{c}{\vdots}& \multicolumn{1}{c}{}& \multicolumn{1}{c}{}& \multicolumn{1}{c}{}& \multicolumn{1}{c}{}& \multicolumn{1}{c}{}& \multicolumn{1}{c}{}& \multicolumn{1}{c}{}\\
\end{array}\right]
\]
\begin{align*}
f(B_1, \Pi, \Phi) &= N_{\text{value}} + N_{\text{index}}s\\
&= (5 + 6\mu_2 + 6\mu_3) + (2 + 4\mu_2 + 4\mu_3)s\\
&= 197 + 6\lceil 28s - 10\rceil + (130 + 4\lceil 28s - 10\rceil)s\\
&\leq 143 + 262s + 112s^2
\\&\leq 146 + 263s + 112s^2
\end{align*}
\paragraph{$\Pi = [1{:}1, 2{:}3, ...], \Phi = [1{:}2, 3{:}3, ...]$}
\[
\left[\begin{array}{cccccccccc}\cline{1-1} \cline{2-2} \cline{5-5} \cline{7-7} 
\multicolumn{1}{|c}{x}& \multicolumn{1}{c|}{0}& \multicolumn{1}{c}{}& \multicolumn{1}{c|}{\cdots}& \multicolumn{1}{c|}{x}& \multicolumn{1}{c|}{\cdots}& \multicolumn{1}{c|}{x}& \multicolumn{1}{c}{\cdots}& \multicolumn{1}{c}{}& \multicolumn{1}{c}{\cdots}\\\cline{1-1} \cline{2-2} \cline{3-3} \cline{5-5} \cline{7-7} \cline{9-9} 
\multicolumn{1}{|c}{0}& \multicolumn{1}{c|}{x}& \multicolumn{1}{c|}{0}& \multicolumn{1}{c|}{\cdots}& \multicolumn{1}{c|}{x}& \multicolumn{1}{c}{\cdots}& \multicolumn{1}{c}{}& \multicolumn{1}{c|}{\cdots}& \multicolumn{1}{c|}{0}& \multicolumn{1}{c}{\cdots}\\
\multicolumn{1}{|c}{0}& \multicolumn{1}{c|}{0}& \multicolumn{1}{c|}{x}& \multicolumn{1}{c|}{\cdots}& \multicolumn{1}{c|}{x}& \multicolumn{1}{c}{\cdots}& \multicolumn{1}{c}{}& \multicolumn{1}{c|}{\cdots}& \multicolumn{1}{c|}{x}& \multicolumn{1}{c}{\cdots}\\\cline{1-1} \cline{2-2} \cline{3-3} \cline{5-5} \cline{9-9} 
\multicolumn{1}{c}{\vdots}& \multicolumn{1}{c}{\vdots}& \multicolumn{1}{c}{\vdots}& \multicolumn{1}{c}{}& \multicolumn{1}{c}{}& \multicolumn{1}{c}{}& \multicolumn{1}{c}{}& \multicolumn{1}{c}{}& \multicolumn{1}{c}{}& \multicolumn{1}{c}{}\\\cline{1-1} \cline{2-2} \cline{3-3} 
\multicolumn{1}{|c}{x}& \multicolumn{1}{c|}{x}& \multicolumn{1}{c|}{x}& \multicolumn{1}{c}{}& \multicolumn{1}{c}{}& \multicolumn{1}{c}{}& \multicolumn{1}{c}{}& \multicolumn{1}{c}{}& \multicolumn{1}{c}{}& \multicolumn{1}{c}{}\\\cline{1-1} \cline{2-2} \cline{3-3} 
\multicolumn{1}{c}{\vdots}& \multicolumn{1}{c}{\vdots}& \multicolumn{1}{c}{\vdots}& \multicolumn{1}{c}{}& \multicolumn{1}{c}{}& \multicolumn{1}{c}{}& \multicolumn{1}{c}{}& \multicolumn{1}{c}{}& \multicolumn{1}{c}{}& \multicolumn{1}{c}{}\\\cline{1-1} \cline{2-2} 
\multicolumn{1}{|c}{x}& \multicolumn{1}{c|}{0}& \multicolumn{1}{c}{}& \multicolumn{1}{c}{}& \multicolumn{1}{c}{}& \multicolumn{1}{c}{}& \multicolumn{1}{c}{}& \multicolumn{1}{c}{}& \multicolumn{1}{c}{}& \multicolumn{1}{c}{}\\\cline{1-1} \cline{2-2} 
\multicolumn{1}{c}{\vdots}& \multicolumn{1}{c}{\vdots}& \multicolumn{1}{c}{\vdots}& \multicolumn{1}{c}{}& \multicolumn{1}{c}{}& \multicolumn{1}{c}{}& \multicolumn{1}{c}{}& \multicolumn{1}{c}{}& \multicolumn{1}{c}{}& \multicolumn{1}{c}{}\\\cline{3-3} 
\multicolumn{1}{c}{}& \multicolumn{1}{c|}{}& \multicolumn{1}{c|}{x}& \multicolumn{1}{c}{}& \multicolumn{1}{c}{}& \multicolumn{1}{c}{}& \multicolumn{1}{c}{}& \multicolumn{1}{c}{}& \multicolumn{1}{c}{}& \multicolumn{1}{c}{}\\\cline{3-3} 
\multicolumn{1}{c}{\vdots}& \multicolumn{1}{c}{\vdots}& \multicolumn{1}{c}{\vdots}& \multicolumn{1}{c}{}& \multicolumn{1}{c}{}& \multicolumn{1}{c}{}& \multicolumn{1}{c}{}& \multicolumn{1}{c}{}& \multicolumn{1}{c}{}& \multicolumn{1}{c}{}\\
\end{array}\right]
\]
\begin{align*}
f(B_1, \Pi, \Phi) &= N_{\text{value}} + N_{\text{index}}s\\
&= (8 + 6\mu_2 + 6\mu_3) + (3 + 4\mu_2 + 4\mu_3)s\\
&= 200 + 6\lceil 28s - 10\rceil + (131 + 4\lceil 28s - 10\rceil)s\\
&\leq 146 + 263s + 112s^2
\end{align*}
\paragraph{$\Pi = [1{:}2, 3{:}3, ...], \Phi = [1{:}1, 2{:}3, ...]$}
\[
\left[\begin{array}{cccccccccc}\cline{1-1} \cline{2-2} \cline{3-3} \cline{5-5} \cline{7-7} 
\multicolumn{1}{|c|}{x}& \multicolumn{1}{c}{0}& \multicolumn{1}{c|}{0}& \multicolumn{1}{c|}{\cdots}& \multicolumn{1}{c|}{x}& \multicolumn{1}{c|}{\cdots}& \multicolumn{1}{c|}{x}& \multicolumn{1}{c}{\cdots}& \multicolumn{1}{c}{}& \multicolumn{1}{c}{\cdots}\\
\multicolumn{1}{|c|}{0}& \multicolumn{1}{c}{x}& \multicolumn{1}{c|}{0}& \multicolumn{1}{c|}{\cdots}& \multicolumn{1}{c|}{x}& \multicolumn{1}{c|}{\cdots}& \multicolumn{1}{c|}{0}& \multicolumn{1}{c}{\cdots}& \multicolumn{1}{c}{}& \multicolumn{1}{c}{\cdots}\\\cline{1-1} \cline{2-2} \cline{3-3} \cline{5-5} \cline{7-7} \cline{9-9} 
\multicolumn{1}{c|}{}& \multicolumn{1}{c}{0}& \multicolumn{1}{c|}{x}& \multicolumn{1}{c|}{\cdots}& \multicolumn{1}{c|}{x}& \multicolumn{1}{c}{\cdots}& \multicolumn{1}{c}{}& \multicolumn{1}{c|}{\cdots}& \multicolumn{1}{c|}{x}& \multicolumn{1}{c}{\cdots}\\\cline{2-2} \cline{3-3} \cline{5-5} \cline{9-9} 
\multicolumn{1}{c}{\vdots}& \multicolumn{1}{c}{\vdots}& \multicolumn{1}{c}{\vdots}& \multicolumn{1}{c}{}& \multicolumn{1}{c}{}& \multicolumn{1}{c}{}& \multicolumn{1}{c}{}& \multicolumn{1}{c}{}& \multicolumn{1}{c}{}& \multicolumn{1}{c}{}\\\cline{1-1} \cline{2-2} \cline{3-3} 
\multicolumn{1}{|c|}{x}& \multicolumn{1}{c}{x}& \multicolumn{1}{c|}{x}& \multicolumn{1}{c}{}& \multicolumn{1}{c}{}& \multicolumn{1}{c}{}& \multicolumn{1}{c}{}& \multicolumn{1}{c}{}& \multicolumn{1}{c}{}& \multicolumn{1}{c}{}\\\cline{1-1} \cline{2-2} \cline{3-3} 
\multicolumn{1}{c}{\vdots}& \multicolumn{1}{c}{\vdots}& \multicolumn{1}{c}{\vdots}& \multicolumn{1}{c}{}& \multicolumn{1}{c}{}& \multicolumn{1}{c}{}& \multicolumn{1}{c}{}& \multicolumn{1}{c}{}& \multicolumn{1}{c}{}& \multicolumn{1}{c}{}\\\cline{1-1} 
\multicolumn{1}{|c|}{x}& \multicolumn{1}{c}{}& \multicolumn{1}{c}{}& \multicolumn{1}{c}{}& \multicolumn{1}{c}{}& \multicolumn{1}{c}{}& \multicolumn{1}{c}{}& \multicolumn{1}{c}{}& \multicolumn{1}{c}{}& \multicolumn{1}{c}{}\\\cline{1-1} 
\multicolumn{1}{c}{\vdots}& \multicolumn{1}{c}{\vdots}& \multicolumn{1}{c}{\vdots}& \multicolumn{1}{c}{}& \multicolumn{1}{c}{}& \multicolumn{1}{c}{}& \multicolumn{1}{c}{}& \multicolumn{1}{c}{}& \multicolumn{1}{c}{}& \multicolumn{1}{c}{}\\\cline{2-2} \cline{3-3} 
\multicolumn{1}{c|}{}& \multicolumn{1}{c}{0}& \multicolumn{1}{c|}{x}& \multicolumn{1}{c}{}& \multicolumn{1}{c}{}& \multicolumn{1}{c}{}& \multicolumn{1}{c}{}& \multicolumn{1}{c}{}& \multicolumn{1}{c}{}& \multicolumn{1}{c}{}\\\cline{2-2} \cline{3-3} 
\multicolumn{1}{c}{\vdots}& \multicolumn{1}{c}{\vdots}& \multicolumn{1}{c}{\vdots}& \multicolumn{1}{c}{}& \multicolumn{1}{c}{}& \multicolumn{1}{c}{}& \multicolumn{1}{c}{}& \multicolumn{1}{c}{}& \multicolumn{1}{c}{}& \multicolumn{1}{c}{}\\
\end{array}\right]
\]
\begin{align*}
f(B_1, \Pi, \Phi) &= N_{\text{value}} + N_{\text{index}}s\\
&= (8 + 6\mu_2 + 6\mu_3) + (3 + 4\mu_2 + 4\mu_3)s\\
&= 200 + 6\lceil 28s - 10\rceil + (131 + 4\lceil 28s - 10\rceil)s\\
&\leq 146 + 263s + 112s^2
\end{align*}
\paragraph{$\Pi = [1{:}2, 3{:}3, ...], \Phi = [1{:}2, 3{:}3, ...]$}
\[
\left[\begin{array}{cccccccccc}\cline{1-1} \cline{2-2} \cline{5-5} \cline{7-7} 
\multicolumn{1}{|c}{x}& \multicolumn{1}{c|}{0}& \multicolumn{1}{c}{}& \multicolumn{1}{c|}{\cdots}& \multicolumn{1}{c|}{x}& \multicolumn{1}{c|}{\cdots}& \multicolumn{1}{c|}{x}& \multicolumn{1}{c}{\cdots}& \multicolumn{1}{c}{}& \multicolumn{1}{c}{\cdots}\\
\multicolumn{1}{|c}{0}& \multicolumn{1}{c|}{x}& \multicolumn{1}{c}{}& \multicolumn{1}{c|}{\cdots}& \multicolumn{1}{c|}{x}& \multicolumn{1}{c|}{\cdots}& \multicolumn{1}{c|}{0}& \multicolumn{1}{c}{\cdots}& \multicolumn{1}{c}{}& \multicolumn{1}{c}{\cdots}\\\cline{1-1} \cline{2-2} \cline{3-3} \cline{5-5} \cline{7-7} \cline{9-9} 
\multicolumn{1}{c}{}& \multicolumn{1}{c|}{}& \multicolumn{1}{c|}{x}& \multicolumn{1}{c|}{\cdots}& \multicolumn{1}{c|}{x}& \multicolumn{1}{c}{\cdots}& \multicolumn{1}{c}{}& \multicolumn{1}{c|}{\cdots}& \multicolumn{1}{c|}{x}& \multicolumn{1}{c}{\cdots}\\\cline{3-3} \cline{5-5} \cline{9-9} 
\multicolumn{1}{c}{\vdots}& \multicolumn{1}{c}{\vdots}& \multicolumn{1}{c}{\vdots}& \multicolumn{1}{c}{}& \multicolumn{1}{c}{}& \multicolumn{1}{c}{}& \multicolumn{1}{c}{}& \multicolumn{1}{c}{}& \multicolumn{1}{c}{}& \multicolumn{1}{c}{}\\\cline{1-1} \cline{2-2} \cline{3-3} 
\multicolumn{1}{|c}{x}& \multicolumn{1}{c|}{x}& \multicolumn{1}{c|}{x}& \multicolumn{1}{c}{}& \multicolumn{1}{c}{}& \multicolumn{1}{c}{}& \multicolumn{1}{c}{}& \multicolumn{1}{c}{}& \multicolumn{1}{c}{}& \multicolumn{1}{c}{}\\\cline{1-1} \cline{2-2} \cline{3-3} 
\multicolumn{1}{c}{\vdots}& \multicolumn{1}{c}{\vdots}& \multicolumn{1}{c}{\vdots}& \multicolumn{1}{c}{}& \multicolumn{1}{c}{}& \multicolumn{1}{c}{}& \multicolumn{1}{c}{}& \multicolumn{1}{c}{}& \multicolumn{1}{c}{}& \multicolumn{1}{c}{}\\\cline{1-1} \cline{2-2} 
\multicolumn{1}{|c}{x}& \multicolumn{1}{c|}{0}& \multicolumn{1}{c}{}& \multicolumn{1}{c}{}& \multicolumn{1}{c}{}& \multicolumn{1}{c}{}& \multicolumn{1}{c}{}& \multicolumn{1}{c}{}& \multicolumn{1}{c}{}& \multicolumn{1}{c}{}\\\cline{1-1} \cline{2-2} 
\multicolumn{1}{c}{\vdots}& \multicolumn{1}{c}{\vdots}& \multicolumn{1}{c}{\vdots}& \multicolumn{1}{c}{}& \multicolumn{1}{c}{}& \multicolumn{1}{c}{}& \multicolumn{1}{c}{}& \multicolumn{1}{c}{}& \multicolumn{1}{c}{}& \multicolumn{1}{c}{}\\\cline{3-3} 
\multicolumn{1}{c}{}& \multicolumn{1}{c|}{}& \multicolumn{1}{c|}{x}& \multicolumn{1}{c}{}& \multicolumn{1}{c}{}& \multicolumn{1}{c}{}& \multicolumn{1}{c}{}& \multicolumn{1}{c}{}& \multicolumn{1}{c}{}& \multicolumn{1}{c}{}\\\cline{3-3} 
\multicolumn{1}{c}{\vdots}& \multicolumn{1}{c}{\vdots}& \multicolumn{1}{c}{\vdots}& \multicolumn{1}{c}{}& \multicolumn{1}{c}{}& \multicolumn{1}{c}{}& \multicolumn{1}{c}{}& \multicolumn{1}{c}{}& \multicolumn{1}{c}{}& \multicolumn{1}{c}{}\\
\end{array}\right]
\]
\begin{align*}
f(B_1, \Pi, \Phi) &= N_{\text{value}} + N_{\text{index}}s\\
&= (5 + 6\mu_2 + 6\mu_3) + (2 + 4\mu_2 + 4\mu_3)s\\
&= 197 + 6\lceil 28s - 10\rceil + (130 + 4\lceil 28s - 10\rceil)s\\
&\leq 143 + 262s + 112s^2
\\&\leq 146 + 263s + 112s^2
\end{align*}
\subsection{Exhaustively Checking \eqref{eq:sadgadget}}\label{app:vbrblockingnphard:sadgadget}
\paragraph{$\Pi = [1{:}3, ...], \Phi = [1{:}3, ...]$}
\[
\left[\begin{array}{cccccccccc}\cline{1-1} \cline{2-2} \cline{3-3} \cline{5-5} \cline{7-7} \cline{9-9} 
\multicolumn{1}{|c}{x}& \multicolumn{1}{c}{0}& \multicolumn{1}{c|}{0}& \multicolumn{1}{c|}{\cdots}& \multicolumn{1}{c|}{x}& \multicolumn{1}{c|}{\cdots}& \multicolumn{1}{c|}{x}& \multicolumn{1}{c|}{\cdots}& \multicolumn{1}{c|}{0}& \multicolumn{1}{c}{\cdots}\\
\multicolumn{1}{|c}{0}& \multicolumn{1}{c}{x}& \multicolumn{1}{c|}{0}& \multicolumn{1}{c|}{\cdots}& \multicolumn{1}{c|}{x}& \multicolumn{1}{c|}{\cdots}& \multicolumn{1}{c|}{0}& \multicolumn{1}{c|}{\cdots}& \multicolumn{1}{c|}{0}& \multicolumn{1}{c}{\cdots}\\
\multicolumn{1}{|c}{0}& \multicolumn{1}{c}{0}& \multicolumn{1}{c|}{x}& \multicolumn{1}{c|}{\cdots}& \multicolumn{1}{c|}{x}& \multicolumn{1}{c|}{\cdots}& \multicolumn{1}{c|}{0}& \multicolumn{1}{c|}{\cdots}& \multicolumn{1}{c|}{x}& \multicolumn{1}{c}{\cdots}\\\cline{1-1} \cline{2-2} \cline{3-3} \cline{5-5} \cline{7-7} \cline{9-9} 
\multicolumn{1}{c}{\vdots}& \multicolumn{1}{c}{\vdots}& \multicolumn{1}{c}{\vdots}& \multicolumn{1}{c}{}& \multicolumn{1}{c}{}& \multicolumn{1}{c}{}& \multicolumn{1}{c}{}& \multicolumn{1}{c}{}& \multicolumn{1}{c}{}& \multicolumn{1}{c}{}\\\cline{1-1} \cline{2-2} \cline{3-3} 
\multicolumn{1}{|c}{x}& \multicolumn{1}{c}{x}& \multicolumn{1}{c|}{x}& \multicolumn{1}{c}{}& \multicolumn{1}{c}{}& \multicolumn{1}{c}{}& \multicolumn{1}{c}{}& \multicolumn{1}{c}{}& \multicolumn{1}{c}{}& \multicolumn{1}{c}{}\\\cline{1-1} \cline{2-2} \cline{3-3} 
\multicolumn{1}{c}{\vdots}& \multicolumn{1}{c}{\vdots}& \multicolumn{1}{c}{\vdots}& \multicolumn{1}{c}{}& \multicolumn{1}{c}{}& \multicolumn{1}{c}{}& \multicolumn{1}{c}{}& \multicolumn{1}{c}{}& \multicolumn{1}{c}{}& \multicolumn{1}{c}{}\\\cline{1-1} \cline{2-2} \cline{3-3} 
\multicolumn{1}{|c}{x}& \multicolumn{1}{c}{0}& \multicolumn{1}{c|}{0}& \multicolumn{1}{c}{}& \multicolumn{1}{c}{}& \multicolumn{1}{c}{}& \multicolumn{1}{c}{}& \multicolumn{1}{c}{}& \multicolumn{1}{c}{}& \multicolumn{1}{c}{}\\\cline{1-1} \cline{2-2} \cline{3-3} 
\multicolumn{1}{c}{\vdots}& \multicolumn{1}{c}{\vdots}& \multicolumn{1}{c}{\vdots}& \multicolumn{1}{c}{}& \multicolumn{1}{c}{}& \multicolumn{1}{c}{}& \multicolumn{1}{c}{}& \multicolumn{1}{c}{}& \multicolumn{1}{c}{}& \multicolumn{1}{c}{}\\\cline{1-1} \cline{2-2} \cline{3-3} 
\multicolumn{1}{|c}{0}& \multicolumn{1}{c}{0}& \multicolumn{1}{c|}{x}& \multicolumn{1}{c}{}& \multicolumn{1}{c}{}& \multicolumn{1}{c}{}& \multicolumn{1}{c}{}& \multicolumn{1}{c}{}& \multicolumn{1}{c}{}& \multicolumn{1}{c}{}\\\cline{1-1} \cline{2-2} \cline{3-3} 
\multicolumn{1}{c}{\vdots}& \multicolumn{1}{c}{\vdots}& \multicolumn{1}{c}{\vdots}& \multicolumn{1}{c}{}& \multicolumn{1}{c}{}& \multicolumn{1}{c}{}& \multicolumn{1}{c}{}& \multicolumn{1}{c}{}& \multicolumn{1}{c}{}& \multicolumn{1}{c}{}\\
\end{array}\right]
\]
\begin{align*}
f(B_1, \Pi, \Phi) &= N_{\text{value}} + N_{\text{index}}s\\
&= (9 + 6\mu_2 + 12\mu_3) + (1 + 2\mu_2 + 4\mu_3)s\\
&= 201 + 12\lceil 28s - 10\rceil + (65 + 4\lceil 28s - 10\rceil)s\\
&\geq 81 + 361s + 112s^2
\\&\geq 147 + 263s + 112s^2
\end{align*}
\paragraph{$\Pi = [1{:}3, ...], \Phi = [1{:}1, 2{:}3, ...]$}
\[
\left[\begin{array}{cccccccccc}\cline{1-1} \cline{2-2} \cline{3-3} \cline{5-5} \cline{7-7} \cline{9-9} 
\multicolumn{1}{|c|}{x}& \multicolumn{1}{c}{0}& \multicolumn{1}{c|}{0}& \multicolumn{1}{c|}{\cdots}& \multicolumn{1}{c|}{x}& \multicolumn{1}{c|}{\cdots}& \multicolumn{1}{c|}{x}& \multicolumn{1}{c|}{\cdots}& \multicolumn{1}{c|}{0}& \multicolumn{1}{c}{\cdots}\\
\multicolumn{1}{|c|}{0}& \multicolumn{1}{c}{x}& \multicolumn{1}{c|}{0}& \multicolumn{1}{c|}{\cdots}& \multicolumn{1}{c|}{x}& \multicolumn{1}{c|}{\cdots}& \multicolumn{1}{c|}{0}& \multicolumn{1}{c|}{\cdots}& \multicolumn{1}{c|}{0}& \multicolumn{1}{c}{\cdots}\\
\multicolumn{1}{|c|}{0}& \multicolumn{1}{c}{0}& \multicolumn{1}{c|}{x}& \multicolumn{1}{c|}{\cdots}& \multicolumn{1}{c|}{x}& \multicolumn{1}{c|}{\cdots}& \multicolumn{1}{c|}{0}& \multicolumn{1}{c|}{\cdots}& \multicolumn{1}{c|}{x}& \multicolumn{1}{c}{\cdots}\\\cline{1-1} \cline{2-2} \cline{3-3} \cline{5-5} \cline{7-7} \cline{9-9} 
\multicolumn{1}{c}{\vdots}& \multicolumn{1}{c}{\vdots}& \multicolumn{1}{c}{\vdots}& \multicolumn{1}{c}{}& \multicolumn{1}{c}{}& \multicolumn{1}{c}{}& \multicolumn{1}{c}{}& \multicolumn{1}{c}{}& \multicolumn{1}{c}{}& \multicolumn{1}{c}{}\\\cline{1-1} \cline{2-2} \cline{3-3} 
\multicolumn{1}{|c|}{x}& \multicolumn{1}{c}{x}& \multicolumn{1}{c|}{x}& \multicolumn{1}{c}{}& \multicolumn{1}{c}{}& \multicolumn{1}{c}{}& \multicolumn{1}{c}{}& \multicolumn{1}{c}{}& \multicolumn{1}{c}{}& \multicolumn{1}{c}{}\\\cline{1-1} \cline{2-2} \cline{3-3} 
\multicolumn{1}{c}{\vdots}& \multicolumn{1}{c}{\vdots}& \multicolumn{1}{c}{\vdots}& \multicolumn{1}{c}{}& \multicolumn{1}{c}{}& \multicolumn{1}{c}{}& \multicolumn{1}{c}{}& \multicolumn{1}{c}{}& \multicolumn{1}{c}{}& \multicolumn{1}{c}{}\\\cline{1-1} 
\multicolumn{1}{|c|}{x}& \multicolumn{1}{c}{}& \multicolumn{1}{c}{}& \multicolumn{1}{c}{}& \multicolumn{1}{c}{}& \multicolumn{1}{c}{}& \multicolumn{1}{c}{}& \multicolumn{1}{c}{}& \multicolumn{1}{c}{}& \multicolumn{1}{c}{}\\\cline{1-1} 
\multicolumn{1}{c}{\vdots}& \multicolumn{1}{c}{\vdots}& \multicolumn{1}{c}{\vdots}& \multicolumn{1}{c}{}& \multicolumn{1}{c}{}& \multicolumn{1}{c}{}& \multicolumn{1}{c}{}& \multicolumn{1}{c}{}& \multicolumn{1}{c}{}& \multicolumn{1}{c}{}\\\cline{2-2} \cline{3-3} 
\multicolumn{1}{c|}{}& \multicolumn{1}{c}{0}& \multicolumn{1}{c|}{x}& \multicolumn{1}{c}{}& \multicolumn{1}{c}{}& \multicolumn{1}{c}{}& \multicolumn{1}{c}{}& \multicolumn{1}{c}{}& \multicolumn{1}{c}{}& \multicolumn{1}{c}{}\\\cline{2-2} \cline{3-3} 
\multicolumn{1}{c}{\vdots}& \multicolumn{1}{c}{\vdots}& \multicolumn{1}{c}{\vdots}& \multicolumn{1}{c}{}& \multicolumn{1}{c}{}& \multicolumn{1}{c}{}& \multicolumn{1}{c}{}& \multicolumn{1}{c}{}& \multicolumn{1}{c}{}& \multicolumn{1}{c}{}\\
\end{array}\right]
\]
\begin{align*}
f(B_1, \Pi, \Phi) &= N_{\text{value}} + N_{\text{index}}s\\
&= (9 + 6\mu_2 + 9\mu_3) + (2 + 3\mu_2 + 4\mu_3)s\\
&= 201 + 9\lceil 28s - 10\rceil + (98 + 4\lceil 28s - 10\rceil)s\\
&\geq 111 + 310s + 112s^2
\\&\geq 147 + 263s + 112s^2
\end{align*}
\paragraph{$\Pi = [1{:}3, ...], \Phi = [1{:}2, 3{:}3, ...]$}
\[
\left[\begin{array}{cccccccccc}\cline{1-1} \cline{2-2} \cline{3-3} \cline{5-5} \cline{7-7} \cline{9-9} 
\multicolumn{1}{|c}{x}& \multicolumn{1}{c|}{0}& \multicolumn{1}{c|}{0}& \multicolumn{1}{c|}{\cdots}& \multicolumn{1}{c|}{x}& \multicolumn{1}{c|}{\cdots}& \multicolumn{1}{c|}{x}& \multicolumn{1}{c|}{\cdots}& \multicolumn{1}{c|}{0}& \multicolumn{1}{c}{\cdots}\\
\multicolumn{1}{|c}{0}& \multicolumn{1}{c|}{x}& \multicolumn{1}{c|}{0}& \multicolumn{1}{c|}{\cdots}& \multicolumn{1}{c|}{x}& \multicolumn{1}{c|}{\cdots}& \multicolumn{1}{c|}{0}& \multicolumn{1}{c|}{\cdots}& \multicolumn{1}{c|}{0}& \multicolumn{1}{c}{\cdots}\\
\multicolumn{1}{|c}{0}& \multicolumn{1}{c|}{0}& \multicolumn{1}{c|}{x}& \multicolumn{1}{c|}{\cdots}& \multicolumn{1}{c|}{x}& \multicolumn{1}{c|}{\cdots}& \multicolumn{1}{c|}{0}& \multicolumn{1}{c|}{\cdots}& \multicolumn{1}{c|}{x}& \multicolumn{1}{c}{\cdots}\\\cline{1-1} \cline{2-2} \cline{3-3} \cline{5-5} \cline{7-7} \cline{9-9} 
\multicolumn{1}{c}{\vdots}& \multicolumn{1}{c}{\vdots}& \multicolumn{1}{c}{\vdots}& \multicolumn{1}{c}{}& \multicolumn{1}{c}{}& \multicolumn{1}{c}{}& \multicolumn{1}{c}{}& \multicolumn{1}{c}{}& \multicolumn{1}{c}{}& \multicolumn{1}{c}{}\\\cline{1-1} \cline{2-2} \cline{3-3} 
\multicolumn{1}{|c}{x}& \multicolumn{1}{c|}{x}& \multicolumn{1}{c|}{x}& \multicolumn{1}{c}{}& \multicolumn{1}{c}{}& \multicolumn{1}{c}{}& \multicolumn{1}{c}{}& \multicolumn{1}{c}{}& \multicolumn{1}{c}{}& \multicolumn{1}{c}{}\\\cline{1-1} \cline{2-2} \cline{3-3} 
\multicolumn{1}{c}{\vdots}& \multicolumn{1}{c}{\vdots}& \multicolumn{1}{c}{\vdots}& \multicolumn{1}{c}{}& \multicolumn{1}{c}{}& \multicolumn{1}{c}{}& \multicolumn{1}{c}{}& \multicolumn{1}{c}{}& \multicolumn{1}{c}{}& \multicolumn{1}{c}{}\\\cline{1-1} \cline{2-2} 
\multicolumn{1}{|c}{x}& \multicolumn{1}{c|}{0}& \multicolumn{1}{c}{}& \multicolumn{1}{c}{}& \multicolumn{1}{c}{}& \multicolumn{1}{c}{}& \multicolumn{1}{c}{}& \multicolumn{1}{c}{}& \multicolumn{1}{c}{}& \multicolumn{1}{c}{}\\\cline{1-1} \cline{2-2} 
\multicolumn{1}{c}{\vdots}& \multicolumn{1}{c}{\vdots}& \multicolumn{1}{c}{\vdots}& \multicolumn{1}{c}{}& \multicolumn{1}{c}{}& \multicolumn{1}{c}{}& \multicolumn{1}{c}{}& \multicolumn{1}{c}{}& \multicolumn{1}{c}{}& \multicolumn{1}{c}{}\\\cline{3-3} 
\multicolumn{1}{c}{}& \multicolumn{1}{c|}{}& \multicolumn{1}{c|}{x}& \multicolumn{1}{c}{}& \multicolumn{1}{c}{}& \multicolumn{1}{c}{}& \multicolumn{1}{c}{}& \multicolumn{1}{c}{}& \multicolumn{1}{c}{}& \multicolumn{1}{c}{}\\\cline{3-3} 
\multicolumn{1}{c}{\vdots}& \multicolumn{1}{c}{\vdots}& \multicolumn{1}{c}{\vdots}& \multicolumn{1}{c}{}& \multicolumn{1}{c}{}& \multicolumn{1}{c}{}& \multicolumn{1}{c}{}& \multicolumn{1}{c}{}& \multicolumn{1}{c}{}& \multicolumn{1}{c}{}\\
\end{array}\right]
\]
\begin{align*}
f(B_1, \Pi, \Phi) &= N_{\text{value}} + N_{\text{index}}s\\
&= (9 + 6\mu_2 + 9\mu_3) + (2 + 3\mu_2 + 4\mu_3)s\\
&= 201 + 9\lceil 28s - 10\rceil + (98 + 4\lceil 28s - 10\rceil)s\\
&\geq 111 + 310s + 112s^2
\\&\geq 147 + 263s + 112s^2
\end{align*}
\paragraph{$\Pi = [1{:}3, ...], \Phi = [1{:}1, 2{:}2, 3{:}3, ...]$}
\[
\left[\begin{array}{cccccccccc}\cline{1-1} \cline{2-2} \cline{3-3} \cline{5-5} \cline{7-7} \cline{9-9} 
\multicolumn{1}{|c|}{x}& \multicolumn{1}{c|}{0}& \multicolumn{1}{c|}{0}& \multicolumn{1}{c|}{\cdots}& \multicolumn{1}{c|}{x}& \multicolumn{1}{c|}{\cdots}& \multicolumn{1}{c|}{x}& \multicolumn{1}{c|}{\cdots}& \multicolumn{1}{c|}{0}& \multicolumn{1}{c}{\cdots}\\
\multicolumn{1}{|c|}{0}& \multicolumn{1}{c|}{x}& \multicolumn{1}{c|}{0}& \multicolumn{1}{c|}{\cdots}& \multicolumn{1}{c|}{x}& \multicolumn{1}{c|}{\cdots}& \multicolumn{1}{c|}{0}& \multicolumn{1}{c|}{\cdots}& \multicolumn{1}{c|}{0}& \multicolumn{1}{c}{\cdots}\\
\multicolumn{1}{|c|}{0}& \multicolumn{1}{c|}{0}& \multicolumn{1}{c|}{x}& \multicolumn{1}{c|}{\cdots}& \multicolumn{1}{c|}{x}& \multicolumn{1}{c|}{\cdots}& \multicolumn{1}{c|}{0}& \multicolumn{1}{c|}{\cdots}& \multicolumn{1}{c|}{x}& \multicolumn{1}{c}{\cdots}\\\cline{1-1} \cline{2-2} \cline{3-3} \cline{5-5} \cline{7-7} \cline{9-9} 
\multicolumn{1}{c}{\vdots}& \multicolumn{1}{c}{\vdots}& \multicolumn{1}{c}{\vdots}& \multicolumn{1}{c}{}& \multicolumn{1}{c}{}& \multicolumn{1}{c}{}& \multicolumn{1}{c}{}& \multicolumn{1}{c}{}& \multicolumn{1}{c}{}& \multicolumn{1}{c}{}\\\cline{1-1} \cline{2-2} \cline{3-3} 
\multicolumn{1}{|c|}{x}& \multicolumn{1}{c|}{x}& \multicolumn{1}{c|}{x}& \multicolumn{1}{c}{}& \multicolumn{1}{c}{}& \multicolumn{1}{c}{}& \multicolumn{1}{c}{}& \multicolumn{1}{c}{}& \multicolumn{1}{c}{}& \multicolumn{1}{c}{}\\\cline{1-1} \cline{2-2} \cline{3-3} 
\multicolumn{1}{c}{\vdots}& \multicolumn{1}{c}{\vdots}& \multicolumn{1}{c}{\vdots}& \multicolumn{1}{c}{}& \multicolumn{1}{c}{}& \multicolumn{1}{c}{}& \multicolumn{1}{c}{}& \multicolumn{1}{c}{}& \multicolumn{1}{c}{}& \multicolumn{1}{c}{}\\\cline{1-1} 
\multicolumn{1}{|c|}{x}& \multicolumn{1}{c}{}& \multicolumn{1}{c}{}& \multicolumn{1}{c}{}& \multicolumn{1}{c}{}& \multicolumn{1}{c}{}& \multicolumn{1}{c}{}& \multicolumn{1}{c}{}& \multicolumn{1}{c}{}& \multicolumn{1}{c}{}\\\cline{1-1} 
\multicolumn{1}{c}{\vdots}& \multicolumn{1}{c}{\vdots}& \multicolumn{1}{c}{\vdots}& \multicolumn{1}{c}{}& \multicolumn{1}{c}{}& \multicolumn{1}{c}{}& \multicolumn{1}{c}{}& \multicolumn{1}{c}{}& \multicolumn{1}{c}{}& \multicolumn{1}{c}{}\\\cline{3-3} 
\multicolumn{1}{c}{}& \multicolumn{1}{c|}{}& \multicolumn{1}{c|}{x}& \multicolumn{1}{c}{}& \multicolumn{1}{c}{}& \multicolumn{1}{c}{}& \multicolumn{1}{c}{}& \multicolumn{1}{c}{}& \multicolumn{1}{c}{}& \multicolumn{1}{c}{}\\\cline{3-3} 
\multicolumn{1}{c}{\vdots}& \multicolumn{1}{c}{\vdots}& \multicolumn{1}{c}{\vdots}& \multicolumn{1}{c}{}& \multicolumn{1}{c}{}& \multicolumn{1}{c}{}& \multicolumn{1}{c}{}& \multicolumn{1}{c}{}& \multicolumn{1}{c}{}& \multicolumn{1}{c}{}\\
\end{array}\right]
\]
\begin{align*}
f(B_1, \Pi, \Phi) &= N_{\text{value}} + N_{\text{index}}s\\
&= (9 + 6\mu_2 + 8\mu_3) + (3 + 4\mu_2 + 4\mu_3)s\\
&= 201 + 8\lceil 28s - 10\rceil + (131 + 4\lceil 28s - 10\rceil)s\\
&\geq 121 + 315s + 112s^2
\\&\geq 147 + 263s + 112s^2
\end{align*}
\paragraph{$\Pi = [1{:}1, 2{:}3, ...], \Phi = [1{:}3, ...]$}
\[
\left[\begin{array}{cccccccccc}\cline{1-1} \cline{2-2} \cline{3-3} \cline{5-5} \cline{7-7} 
\multicolumn{1}{|c}{x}& \multicolumn{1}{c}{0}& \multicolumn{1}{c|}{0}& \multicolumn{1}{c|}{\cdots}& \multicolumn{1}{c|}{x}& \multicolumn{1}{c|}{\cdots}& \multicolumn{1}{c|}{x}& \multicolumn{1}{c}{\cdots}& \multicolumn{1}{c}{}& \multicolumn{1}{c}{\cdots}\\\cline{1-1} \cline{2-2} \cline{3-3} \cline{5-5} \cline{7-7} \cline{9-9} 
\multicolumn{1}{|c}{0}& \multicolumn{1}{c}{x}& \multicolumn{1}{c|}{0}& \multicolumn{1}{c|}{\cdots}& \multicolumn{1}{c|}{x}& \multicolumn{1}{c}{\cdots}& \multicolumn{1}{c}{}& \multicolumn{1}{c|}{\cdots}& \multicolumn{1}{c|}{0}& \multicolumn{1}{c}{\cdots}\\
\multicolumn{1}{|c}{0}& \multicolumn{1}{c}{0}& \multicolumn{1}{c|}{x}& \multicolumn{1}{c|}{\cdots}& \multicolumn{1}{c|}{x}& \multicolumn{1}{c}{\cdots}& \multicolumn{1}{c}{}& \multicolumn{1}{c|}{\cdots}& \multicolumn{1}{c|}{x}& \multicolumn{1}{c}{\cdots}\\\cline{1-1} \cline{2-2} \cline{3-3} \cline{5-5} \cline{9-9} 
\multicolumn{1}{c}{\vdots}& \multicolumn{1}{c}{\vdots}& \multicolumn{1}{c}{\vdots}& \multicolumn{1}{c}{}& \multicolumn{1}{c}{}& \multicolumn{1}{c}{}& \multicolumn{1}{c}{}& \multicolumn{1}{c}{}& \multicolumn{1}{c}{}& \multicolumn{1}{c}{}\\\cline{1-1} \cline{2-2} \cline{3-3} 
\multicolumn{1}{|c}{x}& \multicolumn{1}{c}{x}& \multicolumn{1}{c|}{x}& \multicolumn{1}{c}{}& \multicolumn{1}{c}{}& \multicolumn{1}{c}{}& \multicolumn{1}{c}{}& \multicolumn{1}{c}{}& \multicolumn{1}{c}{}& \multicolumn{1}{c}{}\\\cline{1-1} \cline{2-2} \cline{3-3} 
\multicolumn{1}{c}{\vdots}& \multicolumn{1}{c}{\vdots}& \multicolumn{1}{c}{\vdots}& \multicolumn{1}{c}{}& \multicolumn{1}{c}{}& \multicolumn{1}{c}{}& \multicolumn{1}{c}{}& \multicolumn{1}{c}{}& \multicolumn{1}{c}{}& \multicolumn{1}{c}{}\\\cline{1-1} \cline{2-2} \cline{3-3} 
\multicolumn{1}{|c}{x}& \multicolumn{1}{c}{0}& \multicolumn{1}{c|}{0}& \multicolumn{1}{c}{}& \multicolumn{1}{c}{}& \multicolumn{1}{c}{}& \multicolumn{1}{c}{}& \multicolumn{1}{c}{}& \multicolumn{1}{c}{}& \multicolumn{1}{c}{}\\\cline{1-1} \cline{2-2} \cline{3-3} 
\multicolumn{1}{c}{\vdots}& \multicolumn{1}{c}{\vdots}& \multicolumn{1}{c}{\vdots}& \multicolumn{1}{c}{}& \multicolumn{1}{c}{}& \multicolumn{1}{c}{}& \multicolumn{1}{c}{}& \multicolumn{1}{c}{}& \multicolumn{1}{c}{}& \multicolumn{1}{c}{}\\\cline{1-1} \cline{2-2} \cline{3-3} 
\multicolumn{1}{|c}{0}& \multicolumn{1}{c}{0}& \multicolumn{1}{c|}{x}& \multicolumn{1}{c}{}& \multicolumn{1}{c}{}& \multicolumn{1}{c}{}& \multicolumn{1}{c}{}& \multicolumn{1}{c}{}& \multicolumn{1}{c}{}& \multicolumn{1}{c}{}\\\cline{1-1} \cline{2-2} \cline{3-3} 
\multicolumn{1}{c}{\vdots}& \multicolumn{1}{c}{\vdots}& \multicolumn{1}{c}{\vdots}& \multicolumn{1}{c}{}& \multicolumn{1}{c}{}& \multicolumn{1}{c}{}& \multicolumn{1}{c}{}& \multicolumn{1}{c}{}& \multicolumn{1}{c}{}& \multicolumn{1}{c}{}\\
\end{array}\right]
\]
\begin{align*}
f(B_1, \Pi, \Phi) &= N_{\text{value}} + N_{\text{index}}s\\
&= (9 + 6\mu_2 + 9\mu_3) + (2 + 3\mu_2 + 4\mu_3)s\\
&= 201 + 9\lceil 28s - 10\rceil + (98 + 4\lceil 28s - 10\rceil)s\\
&\geq 111 + 310s + 112s^2
\\&\geq 147 + 263s + 112s^2
\end{align*}
\paragraph{$\Pi = [1{:}1, 2{:}3, ...], \Phi = [1{:}1, 2{:}2, 3{:}3, ...]$}
\[
\left[\begin{array}{cccccccccc}\cline{1-1} \cline{5-5} \cline{7-7} 
\multicolumn{1}{|c|}{x}& \multicolumn{1}{c}{}& \multicolumn{1}{c}{}& \multicolumn{1}{c|}{\cdots}& \multicolumn{1}{c|}{x}& \multicolumn{1}{c|}{\cdots}& \multicolumn{1}{c|}{x}& \multicolumn{1}{c}{\cdots}& \multicolumn{1}{c}{}& \multicolumn{1}{c}{\cdots}\\\cline{1-1} \cline{2-2} \cline{3-3} \cline{5-5} \cline{7-7} \cline{9-9} 
\multicolumn{1}{c|}{}& \multicolumn{1}{c|}{x}& \multicolumn{1}{c|}{0}& \multicolumn{1}{c|}{\cdots}& \multicolumn{1}{c|}{x}& \multicolumn{1}{c}{\cdots}& \multicolumn{1}{c}{}& \multicolumn{1}{c|}{\cdots}& \multicolumn{1}{c|}{0}& \multicolumn{1}{c}{\cdots}\\
\multicolumn{1}{c|}{}& \multicolumn{1}{c|}{0}& \multicolumn{1}{c|}{x}& \multicolumn{1}{c|}{\cdots}& \multicolumn{1}{c|}{x}& \multicolumn{1}{c}{\cdots}& \multicolumn{1}{c}{}& \multicolumn{1}{c|}{\cdots}& \multicolumn{1}{c|}{x}& \multicolumn{1}{c}{\cdots}\\\cline{2-2} \cline{3-3} \cline{5-5} \cline{9-9} 
\multicolumn{1}{c}{\vdots}& \multicolumn{1}{c}{\vdots}& \multicolumn{1}{c}{\vdots}& \multicolumn{1}{c}{}& \multicolumn{1}{c}{}& \multicolumn{1}{c}{}& \multicolumn{1}{c}{}& \multicolumn{1}{c}{}& \multicolumn{1}{c}{}& \multicolumn{1}{c}{}\\\cline{1-1} \cline{2-2} \cline{3-3} 
\multicolumn{1}{|c|}{x}& \multicolumn{1}{c|}{x}& \multicolumn{1}{c|}{x}& \multicolumn{1}{c}{}& \multicolumn{1}{c}{}& \multicolumn{1}{c}{}& \multicolumn{1}{c}{}& \multicolumn{1}{c}{}& \multicolumn{1}{c}{}& \multicolumn{1}{c}{}\\\cline{1-1} \cline{2-2} \cline{3-3} 
\multicolumn{1}{c}{\vdots}& \multicolumn{1}{c}{\vdots}& \multicolumn{1}{c}{\vdots}& \multicolumn{1}{c}{}& \multicolumn{1}{c}{}& \multicolumn{1}{c}{}& \multicolumn{1}{c}{}& \multicolumn{1}{c}{}& \multicolumn{1}{c}{}& \multicolumn{1}{c}{}\\\cline{1-1} 
\multicolumn{1}{|c|}{x}& \multicolumn{1}{c}{}& \multicolumn{1}{c}{}& \multicolumn{1}{c}{}& \multicolumn{1}{c}{}& \multicolumn{1}{c}{}& \multicolumn{1}{c}{}& \multicolumn{1}{c}{}& \multicolumn{1}{c}{}& \multicolumn{1}{c}{}\\\cline{1-1} 
\multicolumn{1}{c}{\vdots}& \multicolumn{1}{c}{\vdots}& \multicolumn{1}{c}{\vdots}& \multicolumn{1}{c}{}& \multicolumn{1}{c}{}& \multicolumn{1}{c}{}& \multicolumn{1}{c}{}& \multicolumn{1}{c}{}& \multicolumn{1}{c}{}& \multicolumn{1}{c}{}\\\cline{3-3} 
\multicolumn{1}{c}{}& \multicolumn{1}{c|}{}& \multicolumn{1}{c|}{x}& \multicolumn{1}{c}{}& \multicolumn{1}{c}{}& \multicolumn{1}{c}{}& \multicolumn{1}{c}{}& \multicolumn{1}{c}{}& \multicolumn{1}{c}{}& \multicolumn{1}{c}{}\\\cline{3-3} 
\multicolumn{1}{c}{\vdots}& \multicolumn{1}{c}{\vdots}& \multicolumn{1}{c}{\vdots}& \multicolumn{1}{c}{}& \multicolumn{1}{c}{}& \multicolumn{1}{c}{}& \multicolumn{1}{c}{}& \multicolumn{1}{c}{}& \multicolumn{1}{c}{}& \multicolumn{1}{c}{}\\
\end{array}\right]
\]
\begin{align*}
f(B_1, \Pi, \Phi) &= N_{\text{value}} + N_{\text{index}}s\\
&= (5 + 6\mu_2 + 5\mu_3) + (3 + 5\mu_2 + 4\mu_3)s\\
&= 197 + 5\lceil 28s - 10\rceil + (163 + 4\lceil 28s - 10\rceil)s\\
&\geq 147 + 263s + 112s^2
\end{align*}
\paragraph{$\Pi = [1{:}2, 3{:}3, ...], \Phi = [1{:}3, ...]$}
\[
\left[\begin{array}{cccccccccc}\cline{1-1} \cline{2-2} \cline{3-3} \cline{5-5} \cline{7-7} 
\multicolumn{1}{|c}{x}& \multicolumn{1}{c}{0}& \multicolumn{1}{c|}{0}& \multicolumn{1}{c|}{\cdots}& \multicolumn{1}{c|}{x}& \multicolumn{1}{c|}{\cdots}& \multicolumn{1}{c|}{x}& \multicolumn{1}{c}{\cdots}& \multicolumn{1}{c}{}& \multicolumn{1}{c}{\cdots}\\
\multicolumn{1}{|c}{0}& \multicolumn{1}{c}{x}& \multicolumn{1}{c|}{0}& \multicolumn{1}{c|}{\cdots}& \multicolumn{1}{c|}{x}& \multicolumn{1}{c|}{\cdots}& \multicolumn{1}{c|}{0}& \multicolumn{1}{c}{\cdots}& \multicolumn{1}{c}{}& \multicolumn{1}{c}{\cdots}\\\cline{1-1} \cline{2-2} \cline{3-3} \cline{5-5} \cline{7-7} \cline{9-9} 
\multicolumn{1}{|c}{0}& \multicolumn{1}{c}{0}& \multicolumn{1}{c|}{x}& \multicolumn{1}{c|}{\cdots}& \multicolumn{1}{c|}{x}& \multicolumn{1}{c}{\cdots}& \multicolumn{1}{c}{}& \multicolumn{1}{c|}{\cdots}& \multicolumn{1}{c|}{x}& \multicolumn{1}{c}{\cdots}\\\cline{1-1} \cline{2-2} \cline{3-3} \cline{5-5} \cline{9-9} 
\multicolumn{1}{c}{\vdots}& \multicolumn{1}{c}{\vdots}& \multicolumn{1}{c}{\vdots}& \multicolumn{1}{c}{}& \multicolumn{1}{c}{}& \multicolumn{1}{c}{}& \multicolumn{1}{c}{}& \multicolumn{1}{c}{}& \multicolumn{1}{c}{}& \multicolumn{1}{c}{}\\\cline{1-1} \cline{2-2} \cline{3-3} 
\multicolumn{1}{|c}{x}& \multicolumn{1}{c}{x}& \multicolumn{1}{c|}{x}& \multicolumn{1}{c}{}& \multicolumn{1}{c}{}& \multicolumn{1}{c}{}& \multicolumn{1}{c}{}& \multicolumn{1}{c}{}& \multicolumn{1}{c}{}& \multicolumn{1}{c}{}\\\cline{1-1} \cline{2-2} \cline{3-3} 
\multicolumn{1}{c}{\vdots}& \multicolumn{1}{c}{\vdots}& \multicolumn{1}{c}{\vdots}& \multicolumn{1}{c}{}& \multicolumn{1}{c}{}& \multicolumn{1}{c}{}& \multicolumn{1}{c}{}& \multicolumn{1}{c}{}& \multicolumn{1}{c}{}& \multicolumn{1}{c}{}\\\cline{1-1} \cline{2-2} \cline{3-3} 
\multicolumn{1}{|c}{x}& \multicolumn{1}{c}{0}& \multicolumn{1}{c|}{0}& \multicolumn{1}{c}{}& \multicolumn{1}{c}{}& \multicolumn{1}{c}{}& \multicolumn{1}{c}{}& \multicolumn{1}{c}{}& \multicolumn{1}{c}{}& \multicolumn{1}{c}{}\\\cline{1-1} \cline{2-2} \cline{3-3} 
\multicolumn{1}{c}{\vdots}& \multicolumn{1}{c}{\vdots}& \multicolumn{1}{c}{\vdots}& \multicolumn{1}{c}{}& \multicolumn{1}{c}{}& \multicolumn{1}{c}{}& \multicolumn{1}{c}{}& \multicolumn{1}{c}{}& \multicolumn{1}{c}{}& \multicolumn{1}{c}{}\\\cline{1-1} \cline{2-2} \cline{3-3} 
\multicolumn{1}{|c}{0}& \multicolumn{1}{c}{0}& \multicolumn{1}{c|}{x}& \multicolumn{1}{c}{}& \multicolumn{1}{c}{}& \multicolumn{1}{c}{}& \multicolumn{1}{c}{}& \multicolumn{1}{c}{}& \multicolumn{1}{c}{}& \multicolumn{1}{c}{}\\\cline{1-1} \cline{2-2} \cline{3-3} 
\multicolumn{1}{c}{\vdots}& \multicolumn{1}{c}{\vdots}& \multicolumn{1}{c}{\vdots}& \multicolumn{1}{c}{}& \multicolumn{1}{c}{}& \multicolumn{1}{c}{}& \multicolumn{1}{c}{}& \multicolumn{1}{c}{}& \multicolumn{1}{c}{}& \multicolumn{1}{c}{}\\
\end{array}\right]
\]
\begin{align*}
f(B_1, \Pi, \Phi) &= N_{\text{value}} + N_{\text{index}}s\\
&= (9 + 6\mu_2 + 9\mu_3) + (2 + 3\mu_2 + 4\mu_3)s\\
&= 201 + 9\lceil 28s - 10\rceil + (98 + 4\lceil 28s - 10\rceil)s\\
&\geq 111 + 310s + 112s^2
\\&\geq 147 + 263s + 112s^2
\end{align*}
\paragraph{$\Pi = [1{:}2, 3{:}3, ...], \Phi = [1{:}1, 2{:}2, 3{:}3, ...]$}
\[
\left[\begin{array}{cccccccccc}\cline{1-1} \cline{2-2} \cline{5-5} \cline{7-7} 
\multicolumn{1}{|c|}{x}& \multicolumn{1}{c|}{0}& \multicolumn{1}{c}{}& \multicolumn{1}{c|}{\cdots}& \multicolumn{1}{c|}{x}& \multicolumn{1}{c|}{\cdots}& \multicolumn{1}{c|}{x}& \multicolumn{1}{c}{\cdots}& \multicolumn{1}{c}{}& \multicolumn{1}{c}{\cdots}\\
\multicolumn{1}{|c|}{0}& \multicolumn{1}{c|}{x}& \multicolumn{1}{c}{}& \multicolumn{1}{c|}{\cdots}& \multicolumn{1}{c|}{x}& \multicolumn{1}{c|}{\cdots}& \multicolumn{1}{c|}{0}& \multicolumn{1}{c}{\cdots}& \multicolumn{1}{c}{}& \multicolumn{1}{c}{\cdots}\\\cline{1-1} \cline{2-2} \cline{3-3} \cline{5-5} \cline{7-7} \cline{9-9} 
\multicolumn{1}{c}{}& \multicolumn{1}{c|}{}& \multicolumn{1}{c|}{x}& \multicolumn{1}{c|}{\cdots}& \multicolumn{1}{c|}{x}& \multicolumn{1}{c}{\cdots}& \multicolumn{1}{c}{}& \multicolumn{1}{c|}{\cdots}& \multicolumn{1}{c|}{x}& \multicolumn{1}{c}{\cdots}\\\cline{3-3} \cline{5-5} \cline{9-9} 
\multicolumn{1}{c}{\vdots}& \multicolumn{1}{c}{\vdots}& \multicolumn{1}{c}{\vdots}& \multicolumn{1}{c}{}& \multicolumn{1}{c}{}& \multicolumn{1}{c}{}& \multicolumn{1}{c}{}& \multicolumn{1}{c}{}& \multicolumn{1}{c}{}& \multicolumn{1}{c}{}\\\cline{1-1} \cline{2-2} \cline{3-3} 
\multicolumn{1}{|c|}{x}& \multicolumn{1}{c|}{x}& \multicolumn{1}{c|}{x}& \multicolumn{1}{c}{}& \multicolumn{1}{c}{}& \multicolumn{1}{c}{}& \multicolumn{1}{c}{}& \multicolumn{1}{c}{}& \multicolumn{1}{c}{}& \multicolumn{1}{c}{}\\\cline{1-1} \cline{2-2} \cline{3-3} 
\multicolumn{1}{c}{\vdots}& \multicolumn{1}{c}{\vdots}& \multicolumn{1}{c}{\vdots}& \multicolumn{1}{c}{}& \multicolumn{1}{c}{}& \multicolumn{1}{c}{}& \multicolumn{1}{c}{}& \multicolumn{1}{c}{}& \multicolumn{1}{c}{}& \multicolumn{1}{c}{}\\\cline{1-1} 
\multicolumn{1}{|c|}{x}& \multicolumn{1}{c}{}& \multicolumn{1}{c}{}& \multicolumn{1}{c}{}& \multicolumn{1}{c}{}& \multicolumn{1}{c}{}& \multicolumn{1}{c}{}& \multicolumn{1}{c}{}& \multicolumn{1}{c}{}& \multicolumn{1}{c}{}\\\cline{1-1} 
\multicolumn{1}{c}{\vdots}& \multicolumn{1}{c}{\vdots}& \multicolumn{1}{c}{\vdots}& \multicolumn{1}{c}{}& \multicolumn{1}{c}{}& \multicolumn{1}{c}{}& \multicolumn{1}{c}{}& \multicolumn{1}{c}{}& \multicolumn{1}{c}{}& \multicolumn{1}{c}{}\\\cline{3-3} 
\multicolumn{1}{c}{}& \multicolumn{1}{c|}{}& \multicolumn{1}{c|}{x}& \multicolumn{1}{c}{}& \multicolumn{1}{c}{}& \multicolumn{1}{c}{}& \multicolumn{1}{c}{}& \multicolumn{1}{c}{}& \multicolumn{1}{c}{}& \multicolumn{1}{c}{}\\\cline{3-3} 
\multicolumn{1}{c}{\vdots}& \multicolumn{1}{c}{\vdots}& \multicolumn{1}{c}{\vdots}& \multicolumn{1}{c}{}& \multicolumn{1}{c}{}& \multicolumn{1}{c}{}& \multicolumn{1}{c}{}& \multicolumn{1}{c}{}& \multicolumn{1}{c}{}& \multicolumn{1}{c}{}\\
\end{array}\right]
\]
\begin{align*}
f(B_1, \Pi, \Phi) &= N_{\text{value}} + N_{\text{index}}s\\
&= (5 + 6\mu_2 + 5\mu_3) + (3 + 5\mu_2 + 4\mu_3)s\\
&= 197 + 5\lceil 28s - 10\rceil + (163 + 4\lceil 28s - 10\rceil)s\\
&\geq 147 + 263s + 112s^2
\end{align*}
\paragraph{$\Pi = [1{:}1, 2{:}2, 3{:}3, ...], \Phi = [1{:}3, ...]$}
\[
\left[\begin{array}{cccccccccc}\cline{1-1} \cline{2-2} \cline{3-3} \cline{5-5} \cline{7-7} 
\multicolumn{1}{|c}{x}& \multicolumn{1}{c}{0}& \multicolumn{1}{c|}{0}& \multicolumn{1}{c|}{\cdots}& \multicolumn{1}{c|}{x}& \multicolumn{1}{c|}{\cdots}& \multicolumn{1}{c|}{x}& \multicolumn{1}{c}{\cdots}& \multicolumn{1}{c}{}& \multicolumn{1}{c}{\cdots}\\\cline{1-1} \cline{2-2} \cline{3-3} \cline{5-5} \cline{7-7} 
\multicolumn{1}{|c}{0}& \multicolumn{1}{c}{x}& \multicolumn{1}{c|}{0}& \multicolumn{1}{c|}{\cdots}& \multicolumn{1}{c|}{x}& \multicolumn{1}{c}{\cdots}& \multicolumn{1}{c}{}& \multicolumn{1}{c}{\cdots}& \multicolumn{1}{c}{}& \multicolumn{1}{c}{\cdots}\\\cline{1-1} \cline{2-2} \cline{3-3} \cline{5-5} \cline{9-9} 
\multicolumn{1}{|c}{0}& \multicolumn{1}{c}{0}& \multicolumn{1}{c|}{x}& \multicolumn{1}{c|}{\cdots}& \multicolumn{1}{c|}{x}& \multicolumn{1}{c}{\cdots}& \multicolumn{1}{c}{}& \multicolumn{1}{c|}{\cdots}& \multicolumn{1}{c|}{x}& \multicolumn{1}{c}{\cdots}\\\cline{1-1} \cline{2-2} \cline{3-3} \cline{5-5} \cline{9-9} 
\multicolumn{1}{c}{\vdots}& \multicolumn{1}{c}{\vdots}& \multicolumn{1}{c}{\vdots}& \multicolumn{1}{c}{}& \multicolumn{1}{c}{}& \multicolumn{1}{c}{}& \multicolumn{1}{c}{}& \multicolumn{1}{c}{}& \multicolumn{1}{c}{}& \multicolumn{1}{c}{}\\\cline{1-1} \cline{2-2} \cline{3-3} 
\multicolumn{1}{|c}{x}& \multicolumn{1}{c}{x}& \multicolumn{1}{c|}{x}& \multicolumn{1}{c}{}& \multicolumn{1}{c}{}& \multicolumn{1}{c}{}& \multicolumn{1}{c}{}& \multicolumn{1}{c}{}& \multicolumn{1}{c}{}& \multicolumn{1}{c}{}\\\cline{1-1} \cline{2-2} \cline{3-3} 
\multicolumn{1}{c}{\vdots}& \multicolumn{1}{c}{\vdots}& \multicolumn{1}{c}{\vdots}& \multicolumn{1}{c}{}& \multicolumn{1}{c}{}& \multicolumn{1}{c}{}& \multicolumn{1}{c}{}& \multicolumn{1}{c}{}& \multicolumn{1}{c}{}& \multicolumn{1}{c}{}\\\cline{1-1} \cline{2-2} \cline{3-3} 
\multicolumn{1}{|c}{x}& \multicolumn{1}{c}{0}& \multicolumn{1}{c|}{0}& \multicolumn{1}{c}{}& \multicolumn{1}{c}{}& \multicolumn{1}{c}{}& \multicolumn{1}{c}{}& \multicolumn{1}{c}{}& \multicolumn{1}{c}{}& \multicolumn{1}{c}{}\\\cline{1-1} \cline{2-2} \cline{3-3} 
\multicolumn{1}{c}{\vdots}& \multicolumn{1}{c}{\vdots}& \multicolumn{1}{c}{\vdots}& \multicolumn{1}{c}{}& \multicolumn{1}{c}{}& \multicolumn{1}{c}{}& \multicolumn{1}{c}{}& \multicolumn{1}{c}{}& \multicolumn{1}{c}{}& \multicolumn{1}{c}{}\\\cline{1-1} \cline{2-2} \cline{3-3} 
\multicolumn{1}{|c}{0}& \multicolumn{1}{c}{0}& \multicolumn{1}{c|}{x}& \multicolumn{1}{c}{}& \multicolumn{1}{c}{}& \multicolumn{1}{c}{}& \multicolumn{1}{c}{}& \multicolumn{1}{c}{}& \multicolumn{1}{c}{}& \multicolumn{1}{c}{}\\\cline{1-1} \cline{2-2} \cline{3-3} 
\multicolumn{1}{c}{\vdots}& \multicolumn{1}{c}{\vdots}& \multicolumn{1}{c}{\vdots}& \multicolumn{1}{c}{}& \multicolumn{1}{c}{}& \multicolumn{1}{c}{}& \multicolumn{1}{c}{}& \multicolumn{1}{c}{}& \multicolumn{1}{c}{}& \multicolumn{1}{c}{}\\
\end{array}\right]
\]
\begin{align*}
f(B_1, \Pi, \Phi) &= N_{\text{value}} + N_{\text{index}}s\\
&= (9 + 6\mu_2 + 8\mu_3) + (3 + 4\mu_2 + 4\mu_3)s\\
&= 201 + 8\lceil 28s - 10\rceil + (131 + 4\lceil 28s - 10\rceil)s\\
&\geq 121 + 315s + 112s^2
\\&\geq 147 + 263s + 112s^2
\end{align*}
\paragraph{$\Pi = [1{:}1, 2{:}2, 3{:}3, ...], \Phi = [1{:}1, 2{:}3, ...]$}
\[
\left[\begin{array}{cccccccccc}\cline{1-1} \cline{5-5} \cline{7-7} 
\multicolumn{1}{|c|}{x}& \multicolumn{1}{c}{}& \multicolumn{1}{c}{}& \multicolumn{1}{c|}{\cdots}& \multicolumn{1}{c|}{x}& \multicolumn{1}{c|}{\cdots}& \multicolumn{1}{c|}{x}& \multicolumn{1}{c}{\cdots}& \multicolumn{1}{c}{}& \multicolumn{1}{c}{\cdots}\\\cline{1-1} \cline{2-2} \cline{3-3} \cline{5-5} \cline{7-7} 
\multicolumn{1}{c|}{}& \multicolumn{1}{c}{x}& \multicolumn{1}{c|}{0}& \multicolumn{1}{c|}{\cdots}& \multicolumn{1}{c|}{x}& \multicolumn{1}{c}{\cdots}& \multicolumn{1}{c}{}& \multicolumn{1}{c}{\cdots}& \multicolumn{1}{c}{}& \multicolumn{1}{c}{\cdots}\\\cline{2-2} \cline{3-3} \cline{5-5} \cline{9-9} 
\multicolumn{1}{c|}{}& \multicolumn{1}{c}{0}& \multicolumn{1}{c|}{x}& \multicolumn{1}{c|}{\cdots}& \multicolumn{1}{c|}{x}& \multicolumn{1}{c}{\cdots}& \multicolumn{1}{c}{}& \multicolumn{1}{c|}{\cdots}& \multicolumn{1}{c|}{x}& \multicolumn{1}{c}{\cdots}\\\cline{2-2} \cline{3-3} \cline{5-5} \cline{9-9} 
\multicolumn{1}{c}{\vdots}& \multicolumn{1}{c}{\vdots}& \multicolumn{1}{c}{\vdots}& \multicolumn{1}{c}{}& \multicolumn{1}{c}{}& \multicolumn{1}{c}{}& \multicolumn{1}{c}{}& \multicolumn{1}{c}{}& \multicolumn{1}{c}{}& \multicolumn{1}{c}{}\\\cline{1-1} \cline{2-2} \cline{3-3} 
\multicolumn{1}{|c|}{x}& \multicolumn{1}{c}{x}& \multicolumn{1}{c|}{x}& \multicolumn{1}{c}{}& \multicolumn{1}{c}{}& \multicolumn{1}{c}{}& \multicolumn{1}{c}{}& \multicolumn{1}{c}{}& \multicolumn{1}{c}{}& \multicolumn{1}{c}{}\\\cline{1-1} \cline{2-2} \cline{3-3} 
\multicolumn{1}{c}{\vdots}& \multicolumn{1}{c}{\vdots}& \multicolumn{1}{c}{\vdots}& \multicolumn{1}{c}{}& \multicolumn{1}{c}{}& \multicolumn{1}{c}{}& \multicolumn{1}{c}{}& \multicolumn{1}{c}{}& \multicolumn{1}{c}{}& \multicolumn{1}{c}{}\\\cline{1-1} 
\multicolumn{1}{|c|}{x}& \multicolumn{1}{c}{}& \multicolumn{1}{c}{}& \multicolumn{1}{c}{}& \multicolumn{1}{c}{}& \multicolumn{1}{c}{}& \multicolumn{1}{c}{}& \multicolumn{1}{c}{}& \multicolumn{1}{c}{}& \multicolumn{1}{c}{}\\\cline{1-1} 
\multicolumn{1}{c}{\vdots}& \multicolumn{1}{c}{\vdots}& \multicolumn{1}{c}{\vdots}& \multicolumn{1}{c}{}& \multicolumn{1}{c}{}& \multicolumn{1}{c}{}& \multicolumn{1}{c}{}& \multicolumn{1}{c}{}& \multicolumn{1}{c}{}& \multicolumn{1}{c}{}\\\cline{2-2} \cline{3-3} 
\multicolumn{1}{c|}{}& \multicolumn{1}{c}{0}& \multicolumn{1}{c|}{x}& \multicolumn{1}{c}{}& \multicolumn{1}{c}{}& \multicolumn{1}{c}{}& \multicolumn{1}{c}{}& \multicolumn{1}{c}{}& \multicolumn{1}{c}{}& \multicolumn{1}{c}{}\\\cline{2-2} \cline{3-3} 
\multicolumn{1}{c}{\vdots}& \multicolumn{1}{c}{\vdots}& \multicolumn{1}{c}{\vdots}& \multicolumn{1}{c}{}& \multicolumn{1}{c}{}& \multicolumn{1}{c}{}& \multicolumn{1}{c}{}& \multicolumn{1}{c}{}& \multicolumn{1}{c}{}& \multicolumn{1}{c}{}\\
\end{array}\right]
\]
\begin{align*}
f(B_1, \Pi, \Phi) &= N_{\text{value}} + N_{\text{index}}s\\
&= (5 + 6\mu_2 + 5\mu_3) + (3 + 5\mu_2 + 4\mu_3)s\\
&= 197 + 5\lceil 28s - 10\rceil + (163 + 4\lceil 28s - 10\rceil)s\\
&\geq 147 + 263s + 112s^2
\end{align*}
\paragraph{$\Pi = [1{:}1, 2{:}2, 3{:}3, ...], \Phi = [1{:}2, 3{:}3, ...]$}
\[
\left[\begin{array}{cccccccccc}\cline{1-1} \cline{2-2} \cline{5-5} \cline{7-7} 
\multicolumn{1}{|c}{x}& \multicolumn{1}{c|}{0}& \multicolumn{1}{c}{}& \multicolumn{1}{c|}{\cdots}& \multicolumn{1}{c|}{x}& \multicolumn{1}{c|}{\cdots}& \multicolumn{1}{c|}{x}& \multicolumn{1}{c}{\cdots}& \multicolumn{1}{c}{}& \multicolumn{1}{c}{\cdots}\\\cline{1-1} \cline{2-2} \cline{5-5} \cline{7-7} 
\multicolumn{1}{|c}{0}& \multicolumn{1}{c|}{x}& \multicolumn{1}{c}{}& \multicolumn{1}{c|}{\cdots}& \multicolumn{1}{c|}{x}& \multicolumn{1}{c}{\cdots}& \multicolumn{1}{c}{}& \multicolumn{1}{c}{\cdots}& \multicolumn{1}{c}{}& \multicolumn{1}{c}{\cdots}\\\cline{1-1} \cline{2-2} \cline{3-3} \cline{5-5} \cline{9-9} 
\multicolumn{1}{c}{}& \multicolumn{1}{c|}{}& \multicolumn{1}{c|}{x}& \multicolumn{1}{c|}{\cdots}& \multicolumn{1}{c|}{x}& \multicolumn{1}{c}{\cdots}& \multicolumn{1}{c}{}& \multicolumn{1}{c|}{\cdots}& \multicolumn{1}{c|}{x}& \multicolumn{1}{c}{\cdots}\\\cline{3-3} \cline{5-5} \cline{9-9} 
\multicolumn{1}{c}{\vdots}& \multicolumn{1}{c}{\vdots}& \multicolumn{1}{c}{\vdots}& \multicolumn{1}{c}{}& \multicolumn{1}{c}{}& \multicolumn{1}{c}{}& \multicolumn{1}{c}{}& \multicolumn{1}{c}{}& \multicolumn{1}{c}{}& \multicolumn{1}{c}{}\\\cline{1-1} \cline{2-2} \cline{3-3} 
\multicolumn{1}{|c}{x}& \multicolumn{1}{c|}{x}& \multicolumn{1}{c|}{x}& \multicolumn{1}{c}{}& \multicolumn{1}{c}{}& \multicolumn{1}{c}{}& \multicolumn{1}{c}{}& \multicolumn{1}{c}{}& \multicolumn{1}{c}{}& \multicolumn{1}{c}{}\\\cline{1-1} \cline{2-2} \cline{3-3} 
\multicolumn{1}{c}{\vdots}& \multicolumn{1}{c}{\vdots}& \multicolumn{1}{c}{\vdots}& \multicolumn{1}{c}{}& \multicolumn{1}{c}{}& \multicolumn{1}{c}{}& \multicolumn{1}{c}{}& \multicolumn{1}{c}{}& \multicolumn{1}{c}{}& \multicolumn{1}{c}{}\\\cline{1-1} \cline{2-2} 
\multicolumn{1}{|c}{x}& \multicolumn{1}{c|}{0}& \multicolumn{1}{c}{}& \multicolumn{1}{c}{}& \multicolumn{1}{c}{}& \multicolumn{1}{c}{}& \multicolumn{1}{c}{}& \multicolumn{1}{c}{}& \multicolumn{1}{c}{}& \multicolumn{1}{c}{}\\\cline{1-1} \cline{2-2} 
\multicolumn{1}{c}{\vdots}& \multicolumn{1}{c}{\vdots}& \multicolumn{1}{c}{\vdots}& \multicolumn{1}{c}{}& \multicolumn{1}{c}{}& \multicolumn{1}{c}{}& \multicolumn{1}{c}{}& \multicolumn{1}{c}{}& \multicolumn{1}{c}{}& \multicolumn{1}{c}{}\\\cline{3-3} 
\multicolumn{1}{c}{}& \multicolumn{1}{c|}{}& \multicolumn{1}{c|}{x}& \multicolumn{1}{c}{}& \multicolumn{1}{c}{}& \multicolumn{1}{c}{}& \multicolumn{1}{c}{}& \multicolumn{1}{c}{}& \multicolumn{1}{c}{}& \multicolumn{1}{c}{}\\\cline{3-3} 
\multicolumn{1}{c}{\vdots}& \multicolumn{1}{c}{\vdots}& \multicolumn{1}{c}{\vdots}& \multicolumn{1}{c}{}& \multicolumn{1}{c}{}& \multicolumn{1}{c}{}& \multicolumn{1}{c}{}& \multicolumn{1}{c}{}& \multicolumn{1}{c}{}& \multicolumn{1}{c}{}\\
\end{array}\right]
\]
\begin{align*}
f(B_1, \Pi, \Phi) &= N_{\text{value}} + N_{\text{index}}s\\
&= (5 + 6\mu_2 + 5\mu_3) + (3 + 5\mu_2 + 4\mu_3)s\\
&= 197 + 5\lceil 28s - 10\rceil + (163 + 4\lceil 28s - 10\rceil)s\\
&\geq 147 + 263s + 112s^2
\end{align*}
\paragraph{$\Pi = [1{:}1, 2{:}2, 3{:}3, ...], \Phi = [1{:}1, 2{:}2, 3{:}3, ...]$}
\[
\left[\begin{array}{cccccccccc}\cline{1-1} \cline{5-5} \cline{7-7} 
\multicolumn{1}{|c|}{x}& \multicolumn{1}{c}{}& \multicolumn{1}{c}{}& \multicolumn{1}{c|}{\cdots}& \multicolumn{1}{c|}{x}& \multicolumn{1}{c|}{\cdots}& \multicolumn{1}{c|}{x}& \multicolumn{1}{c}{\cdots}& \multicolumn{1}{c}{}& \multicolumn{1}{c}{\cdots}\\\cline{1-1} \cline{2-2} \cline{5-5} \cline{7-7} 
\multicolumn{1}{c|}{}& \multicolumn{1}{c|}{x}& \multicolumn{1}{c}{}& \multicolumn{1}{c|}{\cdots}& \multicolumn{1}{c|}{x}& \multicolumn{1}{c}{\cdots}& \multicolumn{1}{c}{}& \multicolumn{1}{c}{\cdots}& \multicolumn{1}{c}{}& \multicolumn{1}{c}{\cdots}\\\cline{2-2} \cline{3-3} \cline{5-5} \cline{9-9} 
\multicolumn{1}{c}{}& \multicolumn{1}{c|}{}& \multicolumn{1}{c|}{x}& \multicolumn{1}{c|}{\cdots}& \multicolumn{1}{c|}{x}& \multicolumn{1}{c}{\cdots}& \multicolumn{1}{c}{}& \multicolumn{1}{c|}{\cdots}& \multicolumn{1}{c|}{x}& \multicolumn{1}{c}{\cdots}\\\cline{3-3} \cline{5-5} \cline{9-9} 
\multicolumn{1}{c}{\vdots}& \multicolumn{1}{c}{\vdots}& \multicolumn{1}{c}{\vdots}& \multicolumn{1}{c}{}& \multicolumn{1}{c}{}& \multicolumn{1}{c}{}& \multicolumn{1}{c}{}& \multicolumn{1}{c}{}& \multicolumn{1}{c}{}& \multicolumn{1}{c}{}\\\cline{1-1} \cline{2-2} \cline{3-3} 
\multicolumn{1}{|c|}{x}& \multicolumn{1}{c|}{x}& \multicolumn{1}{c|}{x}& \multicolumn{1}{c}{}& \multicolumn{1}{c}{}& \multicolumn{1}{c}{}& \multicolumn{1}{c}{}& \multicolumn{1}{c}{}& \multicolumn{1}{c}{}& \multicolumn{1}{c}{}\\\cline{1-1} \cline{2-2} \cline{3-3} 
\multicolumn{1}{c}{\vdots}& \multicolumn{1}{c}{\vdots}& \multicolumn{1}{c}{\vdots}& \multicolumn{1}{c}{}& \multicolumn{1}{c}{}& \multicolumn{1}{c}{}& \multicolumn{1}{c}{}& \multicolumn{1}{c}{}& \multicolumn{1}{c}{}& \multicolumn{1}{c}{}\\\cline{1-1} 
\multicolumn{1}{|c|}{x}& \multicolumn{1}{c}{}& \multicolumn{1}{c}{}& \multicolumn{1}{c}{}& \multicolumn{1}{c}{}& \multicolumn{1}{c}{}& \multicolumn{1}{c}{}& \multicolumn{1}{c}{}& \multicolumn{1}{c}{}& \multicolumn{1}{c}{}\\\cline{1-1} 
\multicolumn{1}{c}{\vdots}& \multicolumn{1}{c}{\vdots}& \multicolumn{1}{c}{\vdots}& \multicolumn{1}{c}{}& \multicolumn{1}{c}{}& \multicolumn{1}{c}{}& \multicolumn{1}{c}{}& \multicolumn{1}{c}{}& \multicolumn{1}{c}{}& \multicolumn{1}{c}{}\\\cline{3-3} 
\multicolumn{1}{c}{}& \multicolumn{1}{c|}{}& \multicolumn{1}{c|}{x}& \multicolumn{1}{c}{}& \multicolumn{1}{c}{}& \multicolumn{1}{c}{}& \multicolumn{1}{c}{}& \multicolumn{1}{c}{}& \multicolumn{1}{c}{}& \multicolumn{1}{c}{}\\\cline{3-3} 
\multicolumn{1}{c}{\vdots}& \multicolumn{1}{c}{\vdots}& \multicolumn{1}{c}{\vdots}& \multicolumn{1}{c}{}& \multicolumn{1}{c}{}& \multicolumn{1}{c}{}& \multicolumn{1}{c}{}& \multicolumn{1}{c}{}& \multicolumn{1}{c}{}& \multicolumn{1}{c}{}\\
\end{array}\right]
\]
\begin{align*}
f(B_1, \Pi, \Phi) &= N_{\text{value}} + N_{\text{index}}s\\
&= (9 + 6\mu_2 + 4\mu_3) + (3 + 6\mu_2 + 4\mu_3)s\\
&= 201 + 4\lceil 28s - 10\rceil + (195 + 4\lceil 28s - 10\rceil)s\\
&\geq 161 + 267s + 112s^2
\\&\geq 147 + 263s + 112s^2
\end{align*}

\section{Improved Overlap Heuristic}\label{app:overlappartitioner}
Here, we optimize the overlap heuristic of \cite{vuduc_oski:_2005,
noauthor_developer_2020} to minimize random access to the column hash table
$h$. Instead of checking whether $h_j$ has been set to true, we can check
whether $h_j = i$. When we start a new group, we will use a different value
of $i$ and avoid reinitialization of $h$. When checking if row $i'$ should be
added to the current part, we can also add $v_{i'}$ to the hash table to
avoid reading $v_{i'}$ twice. However, this may overwrite the entries which
contain $i$ with $i'$, so if $h_j = i$, we can indicate that it was $i$ when
we overwrote it by negating $i'$ before writing to $h$. Thus, our hash table
can check the similarity between the two rows and add a new row at the same
time, reducing the number of random accesses to a minimum (once per nonzero
element). Our improved overlap pseudocode is presented in Algorithm
\ref{alg:overlappartitioner}.

\begin{algorithm}\label{alg:overlappartitioner}
    Given an overlap similarity $\rho$, partition the rows of $m \times
    n$ matrix $A$ producing no part with more than $u_{\max}$ rows. Return
    $\Pi$, $pos$, and $ofs$.
    \begin{algorithmic}[1]
        \small
        \Require $m > 1$, $u_{\max} > 1$, $0 < \rho \leq 1$.
        \Function{OverlapPartitioner}{$A$, $\rho$}
            \State{\textrm{Allocate length-$(m + 1)$ vector $spl_\Pi$}}
            \State{\textrm{Allocate length-$n$ vector $h$ initialized to $0$}}
            \State{$spl_\Pi[1] \gets 1$}
            \State{$i \gets 1$}
            \State{$K \gets 0$}
            \State{$d \gets |v_1|$}
            \For{$i' \gets 2 \textbf{ to } m$}
                \State{$d' \gets d$}
                \State{$c \gets 0$}
                \For{$j \gets v_{i'}$ in ascending order}
                    \If{$h[j] = \pm i$}
                        \State{$c \gets c + 1$}
                        \State{$h[j] \gets -i'$}
                    \ElsIf{$i < h[j]$}
                        \State{$h[j] \gets i'$}
                    \ElsIf{$h[j] < -i$}
                        \State{$c \gets c + 1$}
                        \State{$h[j] \gets -i'$}
                    \Else
                        \State{$d' \gets d' + 1$}
                        \State{$h[j] \gets i'$}
                    \EndIf
                \EndFor
                \State{$u \gets i' - i$}
                \If{$u = u_{\max}$ \textbf{or} $c < \rho \cdot \min(|v_i|, |v_i'|)$}
                    \State{$K \gets K + 1$} \Comment{Start a new partition.}
                    \State{$spl_\Pi[K + 1] \gets i'$}\label{alg:overlappartitioner:current}
                    \State{$i \gets i'$}
                    \State{$d \gets |v_{i'}|$}
                \Else
                    \State{$d \gets d'$} \Comment{Expand current partition.}
                \EndIf
                \State{$K \gets K + 1$}
                \State{$u \gets (m + 1) - i$}
                \State{$spl_\Pi[K + 1] \gets m + 1$}\label{alg:overlappartitioner:last}
            \EndFor
            \State{\Return $spl_\Pi[1:K + 1]$}
        \EndFunction
    \end{algorithmic}
\end{algorithm}

\end{document}